\title{An Instance-optimal Algorithm for Bichromatic Rectangular Visibility}
\author{Jean Cardinal \thanks{Department of Computer Science, Université libre de Bruxelles, {\tt jcardin@ulb.ac.be}}
\and 
Justin Dallant \thanks{Department of Computer Science, Université libre de Bruxelles, {\tt justin.dallant@ulb.be} Supported by the French Community of Belgium via the funding of a FRIA grant.}
\and 
John Iacono \thanks{Department of Computer Science, Université libre de Bruxelles, {\tt john@johniacono.com} Supported by the Fonds de la Recherche Scientifique-FNRS under Grant no MISU F 6001 1.}
}
\date{}
\newcommand{\N}{\mathbb{N}}
\newcommand{\R}{\mathbb{R}}
\newcommand{\bigO}{\mathcal{O}}
\newcommand{\A}{\mathcal{A}}
\newcommand{\entropy}{{\ensuremath\cal H}}
\newcommand{\OPT}{\mbox{\rm OPT}}
\newcommand{\NE}{\mbox{\rm NE}}
\newcommand{\NW}{\mbox{\rm NW}}
\newcommand{\SE}{\mbox{\rm SE}}
\newcommand{\SW}{\mbox{\rm SW}}
\newcommand{\cross}{\mbox{\rm cross}}
\newtheorem{theorem}{Theorem}
\newtheorem{problem}[theorem]{Problem}
\newtheorem{observation}[theorem]{Observation}
\newtheorem{definition}[theorem]{Definition}
\newtheorem{lemma}[theorem]{Lemma}
\newtheorem{corollary}[theorem]{Corollary}
\newtheorem{proposition}[theorem]{Proposition}
\DeclarePairedDelimiter\ceil{\lceil}{\rceil}
\DeclarePairedDelimiter\floor{\lfloor}{\rfloor}
\renewcommand{\epsilon}{\ensuremath\varepsilon}
\renewcommand{\phi}{\ensuremath{\varphi}}
\begin{document}

\maketitle

\begin{abstract}
Afshani, Barbay and Chan (2017) introduced the notion of instance-optimal algorithm in the order-oblivious setting. An algorithm $A$ is instance-optimal in the order-oblivious setting for a certain class of algorithms $\A$ if the following hold: 
\begin{itemize}
    \item $A$ takes as input a sequence of objects from some domain;
    \item for any instance $\sigma$ and any algorithm $A'\in \A$, the runtime of $A$ on $\sigma$ is at most a constant factor removed from the runtime of $A'$ on the worst possible permutation of $\sigma$.
\end{itemize}
If we identify permutations of a sequence as representing the same instance, this essentially states that $A$ is optimal on every possible input (and not only in the worst case).

We design instance-optimal algorithms for the problem of reporting, given a bichromatic set of points in the plane $S$, all pairs consisting of points of different color which span an empty axis-aligned rectangle (or reporting all points which appear in such a pair). This problem has applications for training-set reduction in nearest-neighbour classifiers. It is also related to the problem consisting of finding the decision boundaries of a euclidean nearest-neighbour classifier, for which Bremner et al.\ (2005) gave an optimal output-sensitive algorithm.

By showing the existence of an instance-optimal algorithm in the order-oblivious setting for this problem we push the methods of Afshani et al.\ closer to their limits by adapting and extending them to a setting which exhibits highly non-local features. Previous problems for which instance-optimal algorithms were proven to exist were based solely on local relationships between points in a set.

\end{abstract}
\vfill
\pagebreak

\section{Introduction}

In the theoretical study of algorithms one often quantifies the performance of an algorithm in terms of the worst case or average running time over a distribution of inputs of a given size. Sometimes, more precise statements can be made about the speed of an algorithm on certain instances by expressing the running time in terms of some parameter depending on the input. One class of such algorithms are the so-called output-sensitive algorithms, where the parameter is the size of the output. In computational geometry, a classical example is computing the convex hull of a set of $n$ points in the plane in $\bigO(n\log h)$ time, where $h$ is the size of the convex hull \cite{Kirkpatrick1986, Chan1996-CH}. 
More recently, Afshani et al.~\cite{Afshani-2017} introduced a specific notion of \emph{instance optimality in the order-oblivious setting} and designed algorithms with this property for different problems on point sets. Roughly speaking, an algorithm is instance optimal in a certain class of algorithms $\A$ if on any input sequence $S$, its performance on $S$ is at most a constant factor removed from the best performance of any algorithm in $\A$ on $S$. In the order-oblivious setting, performance is defined as the worst-case runtime over all permutations of the input sequence (the motivation behind this will be made clear below). A common characteristic of most problems solved by Afshani et al.\ is that they are based around local relations between points in $S$, in the sense that the relation between two points $p,q\in S$ depends solely on their coordinates and not on those of any other point in $S$. This is important because it allows one to decompose certain queries into independent queries on a partition of $S$. In this paper we push these methods closer to their limits by adapting and applying them to a problem which does not fit in this framework.

\noindent
\textbf{The studied problem}\quad
Here we consider the following problem: given a set $S$ of points $n$ in the plane with distinct $x$ and $y$ coordinates, each colored red or blue, report the red/blue pairs of points such that the rectangles they span contain no point of $S$ in their interior. This again has applications to machine learning and more specifically nearest-neighbour classifiers. Indeed, by solving this problem and discarding all points which do not not appear in such a pair, we obtain a (possibly much smaller) set of points where for any point $y$ in the plane, the color of its nearest neighbour is the same as in the original set of points for the $L_1$ distance (as pointed out in a paper by Ichino and Sklansky~\cite{Ichino1985}, where this problem was perhaps first studied). This remains true even when \textit{a priori} unknown (and possibly non-linear) scalings might be applied to the $x$ and $y$ axis after this preprocessing step. In fact, the resulting set of points consists of exactly those necessary for this to hold, so this constitutes an optimal reduction of the training set in that sense. Note that this problem does not fit in the general framework of Afshani et al., as whether two point span an empty rectangle depends on the position of all other points. 

\noindent
\textbf{Related works}\quad
Pairs of points spanning empty rectangles and the corresponding graphs have been studied at numerous occasions in the past, under various names. They are called \emph{rectangular influence graphs} by Ichino and Sklansky~\cite{Ichino1985}, who discuss applications to data clustering and classification. In a paper by G\"uting et al.~\cite{Guting-1989} a similar relation is called \emph{direct dominance}, and a worst-case optimal algorithm to report all pairs of related points is given. This algorithm runs in $O(n\log n + h)$ time, where $h$ is the size of the output. A straightforward adaptation yields an algorithm with the same running time for the bichromatic problem studied here. In a paper by Overmars and Wood~\cite{Overmars1988} this relation is called \emph{rectangular visibility} and a different algorithm with the same running time is given as well as algorithms for the dynamic query setting. The expected size of a largest independent set in this graph is studied by Chen et al.~\cite{Chen2009} (where they call such graphs \emph{Delaunay graphs with respect to rectangles}). Generalizations and variations of this type of relation between points have also been widely studied \cite{Overmars1988Conn, Munro1987, Cardinal2009, Agarwal1992, Jaromczyk1992, Kirkpatrick1985, Keil1992, Yao1982, Devillers2008}.
Another problem of note which is closely related to the one we study here is the following: given a set $S$ of $n$ points in the plane, each colored red or blue, compute the subset of edges of the Vorono\"i diagram of $S$ which are adjacent to both a site corresponding to a blue point and a site corresponding to a red point. This problem has some relevance to machine learning as we can equivalently state it as finding the boundaries of a nearest-neighbour classifier with two classes in the plane. A third formulation is finding the pairs of red and blue points such that there is an empty disk whose boundary passes through both. Bremner et al.~\cite{Bremner2005} show that this problem can be solved in output-sensitive optimal $\bigO(n\log h)$ time, where $h$ is the size of the output. It is an interesting open problem to find instance-optimal algorithms for this problem in the order-oblivious setting (or prove that no such algorithm exists). 

\noindent
\textbf{Paper organization}\quad
In Section 2 we motivate and state more precisely the notion of instance-optimality we work with in this paper. In Section 3 we define the problem formally and give an instance lower bound in the order-oblivious model by adapting the adversarial argument of Afshani et al.~\cite{Afshani-2017}. The key new ingredients are a new definition of safety and a way to deal with the fact that here the adversary cannot necessarily change the expected output of the algorithm by moving a single point inside a so-called non-safe region. In Section 4 we give an algorithm and prove that its runtime matches the lower bound. The main observation which makes this work is that while the algorithms by Afshani et al.\ require the safety queries to be decomposable (which they are not here), we can afford to do some preprocessing to make them behave as if they were decomposable, as long as the amount of work done stays within a constant of the lower bound. In section 5 we mention that when competing against algorithms which can do linear queries, instance-optimality in the order-oblivious setting is impossible.

\section{Instance optimality in the order-oblivious setting and model of computation}\label{sec:instance-optimality}
Ideally, we would like to consider a very strong notion of optimality, where an algorithm is optimal if on every instance its runtime is at most a constant factor removed from the algorithm with the smallest runtime on that particular instance. There is an obvious flaw with this definition, as for every instance we can have an algorithm ``specialized'' for that instance, which simply checks if the input is the one it is specialized for then returns the expected output without any further computation when it is the case (and spends however much time it needs to compute the correct output otherwise). For problems which are not solvable in linear time in the worst-case, this prohibits the existence of such optimal algorithms.
One way to get around this issue in some cases and get a meaningful notion of instance-optimality is the following, introduced by Afshani et al.~\cite{Afshani-2017}.

\begin{definition}
Consider a problem where the input consists of a sequence of $n$ elements from a domain ${\cal D}$.  Consider a class $\A$ of algorithms.
A \emph{correct} algorithm refers to an algorithm that outputs a correct answer for every possible sequence of elements in~${\cal D}$.
For a set $S$ of $n$ elements in ${\cal D}$, let $T_A(S)$ denote the maximum running time of $A$ on input $\sigma$ over all $n!$ possible permutations $\sigma$ of $S$.
Let $\OPT(S)$ denote the minimum of $T_{A'}(S)$ over all correct algorithms $A'\in\A$.  If $A\in\A$ is a correct algorithm such that $T_A(S)\le O(1)\cdot\OPT(S)$ for every set $S$, then we say $A$ is \emph{instance-optimal in the order-oblivious setting}.
\end{definition}

By measuring the performance of an algorithm on an instance as the maximum runtime over all permutations of the instance elements, the algorithm can no longer take advantage of the order in which the input elements are presented. In particular, simply checking if the input is a specific sequence is no longer a good strategy.
Here, the domain $\mathcal{D}$ consists of all points in the plane, colored red or blue. An instance is a sequence of points, no two sharing the same $x$ or $y$ coordinate. However, we really want to consider this sequence as a set of points, as the order in which the points are presented does not change the instance conceptually. Thus, it makes sense for us to consider a performance metric for which algorithms cannot take advantage of this order. For the class of algorithms $\A$, we will consider algorithms in a restricted real RAM model where the input can only be accessed through comparison queries. That is, the algorithms can compare the $x$ or $y$ coordinates of two points but not, for example, evaluate arbitrary arithmetic expressions on these coordinates. We refer to such algorithms as \emph{comparison-based algorithms}. The lower bound works even for a stronger model of computation, comparison-based decision trees (assuming at least a unit cost for every point returned in the output). We could also allow the comparison of the $x$ coordinate of a point with the $y$ coordinate of another without changing any of the results of this paper.

\section{Lower-bound for comparison-based algorithms in the order-oblivious setting}\label{sec:lower_bound}
\paragraph*{Some basic notations and definitions}
Throughout this section we consider a set $S$ of $n$ red and blue points in the plane. We assume that $S$ is non-degenerate, in the sense that no two points in $S$ share the same $x$ or $y$ coordinate (in particular, all points are distinct). If $p$ is a point, we will denote its $x$ and $y$ coordinates as $x(p)$ and $y(p)$ respectively and its color as $c(p)$.

\begin{definition}
A point $p$ \emph{dominates} $q\neq p$ if $x(p) \geq x(q)$ and $y(p) \geq y(q)$.
A point is \emph{maximal} (resp.\ \emph{minimal}) in $S$ if is dominated by (resp.\ dominates) no point in $S$.
\end{definition}

\begin{definition}[Visible and participating points]
Let $p,q$ be two points in $S$. We say that $q$ is \emph{visible} from $p$ in $S$ (or that $p$ \emph{sees} $q$ in $S$) if the axis-aligned box spanned by $p$ and $q$ contains no point of $S$ in its interior.
We say that $p\in S$ \emph{participates} in $S$ if it is visible from a point in $S$ of the opposite color.
We will omit the set $S$ when it is clear from context.
\end{definition}

The problems we want to solve can thus be restated as follows.
\begin{problem}[Reporting participating points]
Report all points which participate in $S$.
\end{problem}

\begin{problem}[Reporting red-blue pairs of visible points]
Report all red-blue pairs of points $(p,q)$ such that $p$ and $q$ see each other in $S$.
\end{problem}

The following definitions will also be useful for us.
\begin{definition}
We call the set of minimal points of $S$ the \emph{\NE-minimal-set} of $S$ (for ``North-East-minimal set''). The  \emph{\NW-minimal-set},  \emph{\SE-minimal-set} and  \emph{\SW-minimal-set} of $S$ are defined symmetrically (see Figure \ref{fig:minimal-sets}).  In particular, the \SE-minimal set of $S$ is the set of maximal points in $S$.
\end{definition}

\begin{figure}
    \centering
        \includegraphics{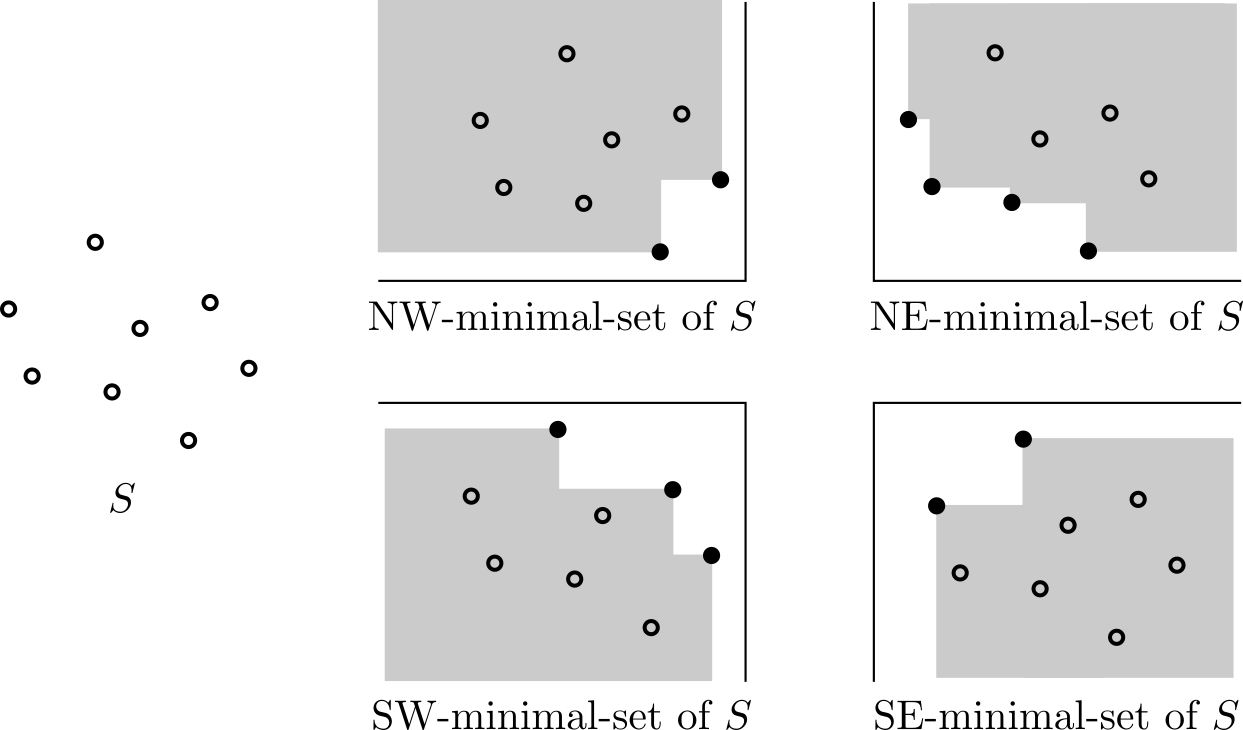}
        \caption{A set of points $S$ and the four minimal-sets of $S$. No point in the shaded regions can appear on the corresponding minimal-sets.}
        \label{fig:minimal-sets}
\end{figure}

\begin{figure}
    \centering
        \includegraphics{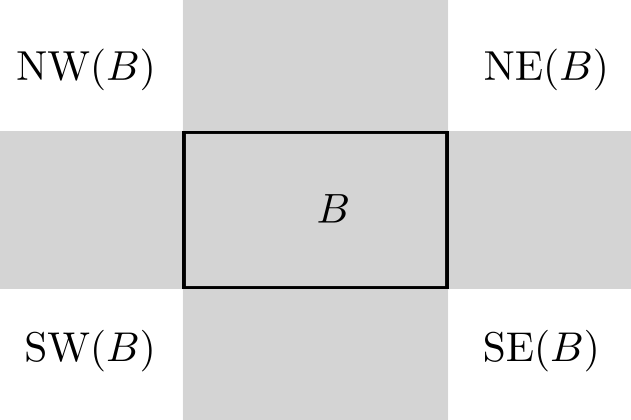}
        \caption{The four quadrants of an axis-aligned box $B$. The shaded region corresponds to $cross(B)$.}
        \label{fig:quadrants}
\end{figure}

\begin{definition}
Let $B$ be an axis-aligned box. We denote the $x$ coordinate of the right boundary (resp.\ left boundary) of $B$ as $x_{\max}(B)$ (resp.\ $x_{\min}(B)$). We denote the $y$ coordinate of the top boundary (resp.\ bottom boundary) of $B$ as $y_{\max}(B)$ (resp.\ $y_{\min}(B)$).

The \emph{cross} of $B$, denoted as $cross(B)$ is the set of points $p$ in the plane such that $x_{\min}(B) \leq x(p) \leq x_{\max}(B)$ or $y_{\min}(B) \leq y(p) \leq y_{\max}(B)$.

The \emph{quadrants} of $B$ are the connected components of $\R^2\setminus cross(B)$. We call the four components the \emph{\NE-quadrant}, \emph{\NW-quadrant}, \emph{\SE-quadrant} and \emph{\SW-quadrant}, denoted as $\NE(B)$, $\NW(B)$, $\SE(B)$ and $\SW(B)$ respectively (see Figure \ref{fig:quadrants}).
\end{definition}

All boxes we consider are axis-aligned boxes in the plane. For ease of exposition, we assume that all boxes we consider have no point of $S$ on the boundaries of their four quadrants.

\paragraph*{Lower Bound}
\begin{figure}
    \centering
    \includegraphics[scale=0.6]{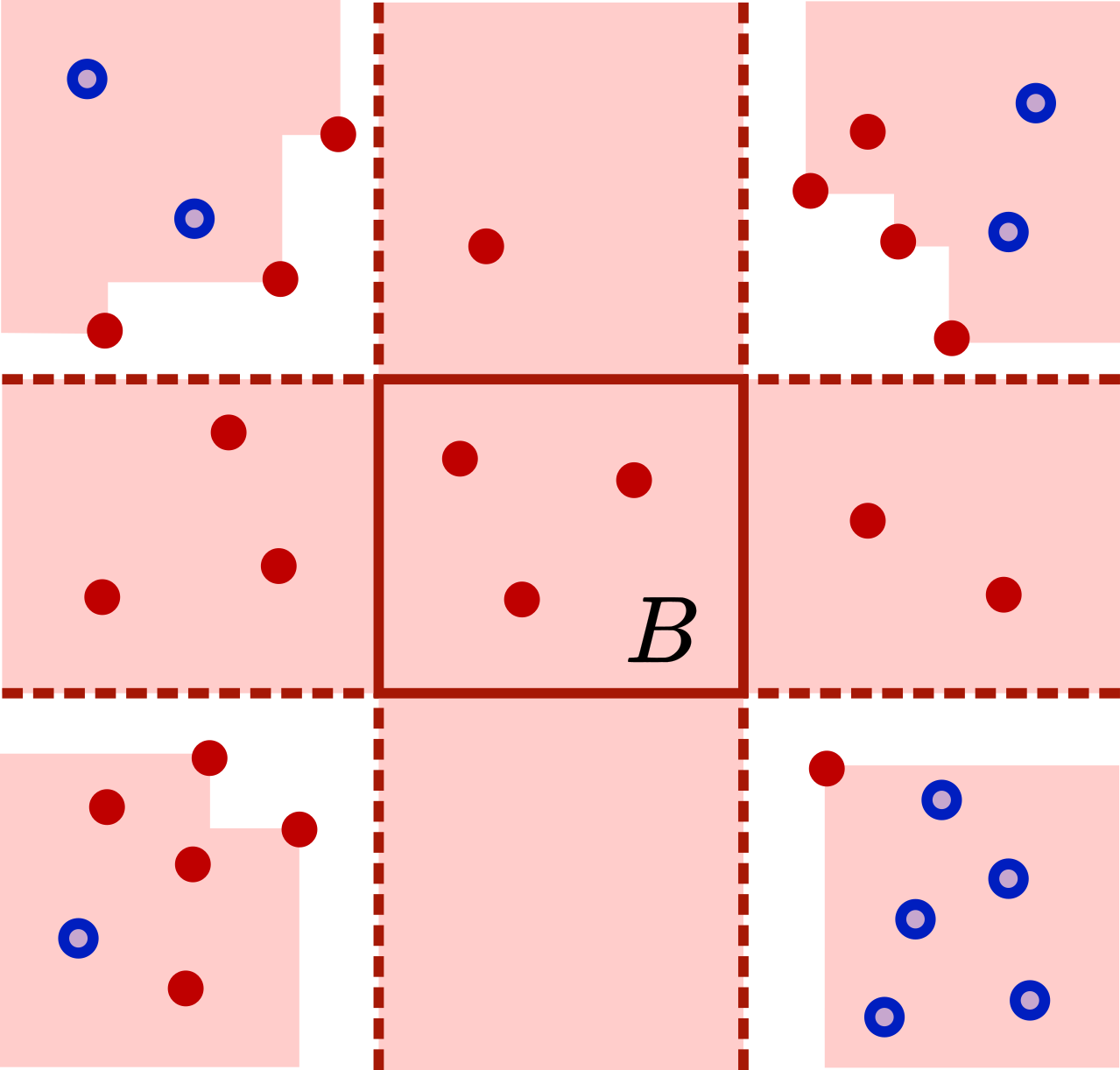}
    \caption{A red-safe box $B$.}
    \label{fig:safe-rectangle}
\end{figure}

We prove an entropy-like lower bound on the number of comparisons which have to be done to solve the problems of reporting participating points in the order-oblivious setting. The proof and terminology are largely inspired from the lower bound proofs in \cite{Afshani-2017} but some additional arguments, which we underline later, are required.
We need a few definitions in order to state our lower bound.

\begin{definition}
An axis-aligned box $B$ is \emph{red-cross-safe} (resp.\ \emph{blue-cross-safe}) for $S$ if all points in $S\cap cross(B)$ are red (resp.\ blue). It is \emph{cross-safe} if it is red-cross-safe or blue-cross-safe.

It is \emph{red-safe} if it is red-cross-safe and the \NE-minimal (resp.\ \NW-minimal, \SE-minimal, \SW-minimal) set of $S\cap\NE(B)$ (resp.\ $S\cap\NW(B)$, $S\cap\SE(B)$, $S\cap\SW(B)$) is red (see Figure \ref{fig:safe-rectangle}).
We define \emph{blue-safe} boxes similarly.
A box is \emph{safe} if it is red-safe or blue-safe.

A subset $S'\in S$ has one of these properties if it can be enclosed by a box with the property.
\end{definition}

Notice that if a subset $S'\in S$ is safe, then no point in in $S'$ participates in $S$. Thus, in an intuitive sense, a partition of non-participating points into safe subsets can be seen as a certificate for the fact that these points do not participate. The minimal entropy of such a partition is then the minimal amount of information required to encode such a certificate.

\begin{definition}
A partition $\Pi$ of $S$ is \emph{respectful} if every member $S_k \in \Pi$ is a singleton or a safe subset of $S$. The \emph{entropy} of a partition $\Pi$ is $\entropy(\Pi):= \sum_{S_k\in\Pi}\frac{|S_k|}{n}\log \frac{n}{|S_k|}$.
The \emph{structural entropy} $\entropy(S)$ of $S$ is the minimum of $\entropy(\Pi)$ over all respectful partitions $\Pi$ of $S$.
\end{definition}

We can now state our lower bound:
\begin{theorem}\label{thm:lower_bound}
For the problem of reporting participating points in the order-oblivious comparison-based model, $\OPT(S) \in \Omega(n(\entropy(S)+1))$. Consequently, for reporting red-blue pairs of visible points, $ \OPT(S) \in \Omega(n(\entropy(S)+1)+h)$, where $h$ is the size of the output.
\end{theorem}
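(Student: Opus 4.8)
The plan is to adapt the adversary argument of Afshani et al.\ to our new notion of safety. The overall strategy is standard for this kind of lower bound: we fix a correct comparison-based decision tree $A$, follow the computation of $A$ on some carefully chosen (initially unspecified) input, and maintain a region around each input point in which the point is still free to move without $A$ being able to distinguish the current configuration from the moved one. After $A$ terminates, the comparisons it has performed induce a partition of $S$ into ``cells'', and the key claim will be that the non-singleton cells of this partition are \emph{safe} in the sense of the definition above; this yields a respectful partition whose entropy is, up to constants, a lower bound on $\log$ of the number of leaves of the relevant subtree, hence on the number of comparisons. Summing this over the $n$ points and over all ``rounds'' of the adversary gives the $\Omega(n(\entropy(S)+1))$ bound, and the additive $h$ for the reporting-pairs version is immediate since each of the $h$ output pairs must be written down at unit cost.

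More concretely, I would proceed as follows. First, set up the adversary so that it maintains, for each point, an axis-aligned box of possible locations, all boxes pairwise ``independent'' in the sense that the relative order of any two points is already determined by the boxes. Second, argue that if a group of points all lie in a common region that the adversary has certified as not yet ``split'' by $A$'s comparisons, then $A$ cannot have learned enough to know whether those points participate; hence if $A$ is correct it must have the property that on the committed input, that group forms a cross-safe box and additionally the four corner minimal-sets are monochromatic — i.e.\ the box is safe. The subtlety that must be handled carefully here, and which I expect to be the main obstacle, is the one flagged in the introduction: unlike in the purely local problems of Afshani et al., the adversary here \emph{cannot} always flip the output by moving a single point inside a non-safe region, because participation of a point depends jointly on many other points. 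So the clean ``if it's not safe, the adversary moves one point and changes the answer'' contradiction is not available. I would resolve this by having the adversary move a \emph{coordinated set} of points simultaneously (for instance, exhibiting a witness red--blue visible pair by pushing the relevant extreme points of two quadrants toward each other while keeping all committed comparisons intact), and by using the new definition of red-/blue-safety — which already bakes in the condition on all four quadrant minimal-sets — to make precise exactly which configurations are ``safe enough'' that no such coordinated move exists.

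Once that structural claim is in place, the counting is routine and mirrors \cite{Afshani-2017}. Fix any respectful partition $\Pi = \{S_1,\dots,S_m\}$ achieving (or near) the structural entropy. The adversary processes the classes one at a time; for class $S_k$ it restricts attention to inputs in which the points of $S_k$ occupy their sub-boxes and plays the single-class adversary, forcing $A$ to spend $\Omega(|S_k|\log(n/|S_k|) + |S_k|)$ comparisons to resolve that class (the $+|S_k|$ because each point must be touched at least once, e.g.\ to be output or to be certified non-participating). Because the sub-boxes for distinct classes are independent, these lower bounds add up, giving $\sum_k \Omega\!\big(|S_k|\log(n/|S_k|)+|S_k|\big) = \Omega\!\big(n(\entropy(\Pi)+1)\big)$, and since this holds for the optimal $\Pi$ we get $\OPT(S) \in \Omega(n(\entropy(S)+1))$. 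For the reporting-pairs problem, any correct algorithm also reporting the participating points is at least this expensive, and it must additionally emit $h$ pairs, so $\OPT(S)\in\Omega(n(\entropy(S)+1)+h)$, which is the ``Consequently'' clause. I would keep the bookkeeping for the ``$+1$'' and the $n$ factor (every point is charged once) explicit but brief, since the real content is entirely in the safety-certificate argument and the coordinated-move fix described above.
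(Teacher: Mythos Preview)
Your high-level plan matches the paper's, but the two steps you flag as ``to be filled in'' are exactly where the real content lives, and as written the proposal has genuine gaps there.

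First, the ``coordinated move'' fix. You correctly identify that moving one point inside a non-safe box need not change the output, and propose moving several. But the argument needs a sharp quantitative form: \emph{how many} points, and why is that number an absolute constant? The paper's mechanism is quite specific. It builds \emph{two} kd-trees, one per colour, so that indices on a common root-to-leaf path are necessarily the same colour. It then uses two lemmas as a dichotomy: (i) in any non-safe box a single point can be moved so that it \emph{does} participate (Lemma~\ref{lemma:force}); (ii) any five \emph{same-colour} points can be positioned so that one of them is shielded and does \emph{not} participate (Lemma~\ref{lemma:shield}). Whichever way $A$ has committed on $p_1$, one of the two moves contradicts it, so at most four indices can remain floating on any root-to-leaf path. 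Your sketch (``push extreme points of two quadrants toward each other'') gestures at (i) but offers no analogue of (ii), and without the per-colour trees you have no guarantee that the points you want to move together are even the same colour, which is what makes the shielding construction work.

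Second, even granting a constant bound $c$ on floating indices per path, you still need to convert this into a comparison lower bound. The paper does \emph{not} argue that the adversary's final cells are safe and then compute their entropy; the safe partition $\Pi_{\text{kd-tree}}$ is fixed in advance as the leaves of the per-colour kd-trees, and what is bounded is the total depth $D$ of the index-to-box map. A separate combinatorial lemma (Lemma~\ref{lemma:chips}, a chips-on-a-tree counting argument) shows $D \ge D_{\text{post}} - c(n-1)$, where $D_{\text{post}}$ is the sum of leaf depths after pushing every index to a leaf; this is what recovers $\Omega(n\entropy(\Pi_{\text{kd-tree}}))$ from the per-path bound. Your final paragraph instead fixes an optimal respectful partition $\Pi$ and ``processes the classes one at a time'', summing per-class bounds of $\Omega(|S_k|\log(n/|S_k|))$. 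That is not how the paper argues and, as stated, is not obviously sound: you are lower-bounding a \emph{single} run of $A$, so you must exhibit one adversary strategy that simultaneously forces all of these costs, not obtain them separately and add.
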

The proof is similar to what can be found in \cite{Afshani-2017}. In their case however, if a point can be moved anywhere inside a non-safe box then it can be moved in a way that changes the expected output of the algorithm. In our case, we can do something similar but sometimes need to move multiple points to affect the expected output (but a constant number are enough). We use a simple argument about the maximum number of chips which can be placed on a tree in a constrained way to show that this has no impact on the lower bound. 

To prove Theorem \ref{thm:lower_bound}, we will need a few lemmas.
\begin{lemma}\label{lemma:safe}
Let $p$ be a point in $S$ and let $B$ be some axis-aligned box bounding $p$. If $B$ is safe, then $p$ does not interact in $S$.
\end{lemma}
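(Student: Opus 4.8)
The plan is to show that if $p$ lies in a safe box $B$, then $p$ cannot be visible from any point of $S$ of the opposite color, which is exactly what it means for $p$ not to participate in $S$. (The word ``interact'' in the statement is evidently a typo for ``participate''.) Without loss of generality assume $B$ is red-safe, so every point of $S\cap \cross(B)$ is red, and the four corner minimal-sets — the \NE-minimal set of $S\cap \NE(B)$, the \NW-minimal set of $S\cap \NW(B)$, and so on — consist entirely of red points. If $p$ is itself red there is nothing to prove (a point is never visible from a point of its own color in the relevant sense), so assume $p$ is blue; since $B$ is red-cross-safe and $p\in B\subseteq \cross(B)$ would force $p$ red, we actually have $p\in \NE(B)\cup \NW(B)\cup \SE(B)\cup \SW(B)$; say $p\in \NE(B)$ after relabelling.

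First I would take an arbitrary point $q\in S$ with $c(q)\neq c(p)$, i.e. $q$ red, and argue that the axis-aligned box $R$ spanned by $p$ and $q$ contains a point of $S$ in its interior. Split on where $q$ lies relative to $B$. The main case is $q\in \cross(B)$: then $q$ has either an $x$-coordinate or a $y$-coordinate inside the range of $B$, while $p$ is strictly to the north-east of $B$, so the box $R$ spanned by $p$ and $q$ must cross over a ``corner'' of $B$ in a way that forces it to contain one of the corner minimal points. Concretely, I would use the fact that the \NE-minimal set of $S\cap \NE(B)$ is red and nonempty whenever $S\cap\NE(B)$ is nonempty (it contains $p$), pick the minimal point $m$ of $S\cap\NE(B)$ that $p$ dominates — $p$ either is such a minimal point or dominates one — and show $m$ lies in the interior of $R$ by checking that $m$ is dominated by $p$ and dominates $q$ (using that $q\in\cross(B)$ puts $q$ to the south-west of the relevant corner of $B$, hence of $m$). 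When $q$ is in the same quadrant $\NE(B)$ as $p$: if $q$ does not lie in the box spanned by $p$ and a \NE-minimal point of $S\cap\NE(B)$, a corner-minimal point again separates them; otherwise one of $p,q$ dominates the other within that quadrant and a minimal point of the quadrant sits strictly between them. When $q$ lies in a different quadrant ($\NW(B)$, $\SE(B)$ or $\SW(B)$), the segment from $p$ to $q$ must pass through $\cross(B)$, and I would locate a witness point either among $S\cap\cross(B)$ (all red, but I only need it to block visibility, not to be a different color — wait, that is fine, any point of $S$ in the interior blocks) or among the corner minimal sets of the intervening quadrants.

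Actually, to keep the case analysis uniform, the cleaner route I would take is: show directly that if $p\in\NE(B)$ then the only points of $S$ that $p$ can possibly see are points of $S\cap\NE(B)$ that are ``\NE-minimal-visible'' from $p$ together with — no, simpler still — observe that any point $q$ seen by $p$ must satisfy that the box spanned by $p,q$ meets neither the interior occupied by a corner-minimal point nor any point of $\cross(B)$; then a short argument shows such a $q$ must itself dominate or be dominated along a staircase of same-(red-)colored minimal points, forcing $c(q)=c(p)$. The key geometric fact to nail down is the claim: \emph{if $p\in\NE(B)$ sees $q\in S$ in $S$, then $q\in\NE(B)$ and $q$ is dominated by the \NE-minimal point of $S\cap\NE(B)$ nearest $p$ along the relevant staircase, hence $q$ is red.} I expect the main obstacle to be handling the sub-case where $q$ lies in $\cross(B)$ but is ``hidden behind'' no single corner point in an obvious way — here one has to carefully use that \emph{all} four corner minimal-sets are red and that $\cross(B)$ is entirely red, and combine this with the minimality structure so that whichever blocking point one produces is guaranteed to be an actual point of $S$ in the open rectangle. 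Once the geometric claim is established, the conclusion is immediate: every point $q$ of opposite color to $p$ fails to see $p$, so $p$ does not participate in $S$, and by the symmetric definitions the same argument handles the blue-safe case.
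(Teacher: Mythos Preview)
Your argument goes wrong at the very first step. You take $B$ red-safe and then write ``If $p$ is itself red there is nothing to prove.'' That is exactly backwards: since the hypothesis is that $B$ \emph{bounds} $p$, we have $p\in B\subseteq\cross(B)$, and red-cross-safety of $B$ forces $p$ to be red. So the case $p$ red is the \emph{only} case, and its content is precisely what the lemma asserts --- that this red $p$ sees no blue point of $S$. Your subsequent move, ``assume $p$ is blue'', leads (as you yourself observe) to $p\notin B$, contradicting the hypothesis; rather than discarding this impossible case you relocate $p$ into a quadrant of $B$ and then embark on a long case analysis of a configuration that cannot arise. The rambling case split that follows is not salvageable as written, because the statement you are trying to prove there (a blue point in $\NE(B)$ participates in nothing) is actually false: such a point may well see a red point on the \NE-minimal set.

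The paper's proof is a two-line version of what you should have done with the roles reversed. With $p\in B$ red, any blue $q\in S$ lies in some quadrant of $B$ (the cross is entirely red). The minimal set of that quadrant is red, so $q$ is not minimal there and hence strictly dominates some point $m$ of that minimal set; since $m$ lies in the quadrant it also strictly dominates $p$ in both coordinates, so $m$ sits in the interior of the box spanned by $p$ and $q$. Thus $p$ does not see $q$. No case analysis on the location of $q$ beyond ``which quadrant'' is needed.
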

\begin{proof}
Suppose without loss of generality that $p$ is red. Then $B$ is necessarily red-safe (as it cannot be blue-safe). The point $p$ could only possibly interact with a blue point $q$ lying in one of the quadrants. Because $q$ is not in the corresponding minimal-set $S'$ by definition of a red-safe box, the box spanned by $p$ and $q$ contains a point of $S'$. Thus $p$ and $q$ do not see each other in $S$.
\end{proof}

\begin{lemma}\label{lemma:force}
Let $p$ be a point in $S$ and let $B$ be some axis-aligned box. If $B$ is not safe, then we can move $p$ inside $B$ to a position where it participates in $S$.
\end{lemma}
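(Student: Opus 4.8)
The plan is to split according to \emph{why} $B$ fails to be safe, dispatch the easy branch with a ``thin sliver'' rectangle, and put the real work into the other branch. Assume without loss of generality that $p$ is red (the blue case is the mirror image, exchanging the two colours and the two corresponding notions of safety). Since $B$ is not safe it is in particular not red-safe, so unpacking the definition, at least one of the following holds: (i) $\cross(B)$ contains a blue point $b$; or (ii) for some quadrant $Q\in\{\NE,\NW,\SE,\SW\}$ the $Q$-minimal set of $S\cap Q(B)$ contains a blue point $b$. In either case I will exhibit a location $p'$ in the interior of $B$ such that the open box spanned by $p'$ and $b$ contains no point of $S$. Since $p'$ and $b$ lie only on the boundary of that box, and since in the relocated set $S':=(S\setminus\{p\})\cup\{p'\}$ one has $S'\setminus\{p',b\}\subseteq S$, the box then contains no point of $S'$ either, so $p'$ (still red) sees the blue point $b$ and hence participates in $S'$. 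A generic infinitesimal choice of the free parameters of $p'$ keeps $S'$ non-degenerate, so this suffices.

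For case (i) the idea is to move $p$ to a point just displaced from $b$ ``along its arm of the cross''. If $y_{\min}(B)<y(b)<y_{\max}(B)$ (which covers $b$ lying in $B$ itself as well as $b$ lying in the left or right arm of $\cross(B)$), take $p'=(c,\,y(b)+\epsilon)$ with $c\in(x_{\min}(B),x_{\max}(B))$ and $\epsilon>0$ small: the open box spanned by $p'$ and $b$ then has $y$-extent $(y(b),y(b)+\epsilon)$, and for $\epsilon$ below the gap between $y(b)$ and the next $y$-coordinate occurring in $S$ this strip contains no point of $S$ at all. If instead $x_{\min}(B)<x(b)<x_{\max}(B)$, so $b$ is in the top or bottom arm, I do the symmetric thing with a vertical sliver. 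The standing assumption that no point of $S$ lies on a quadrant boundary makes these two subcases exhaustive, and this whole branch is routine.

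For case (ii), which is the substantive one, take $Q=\NE$ after renaming. Being $\NE$-minimal in $S\cap\NE(B)$ means $b$ dominates no point of $S\cap\NE(B)$; also $x(b)>x_{\max}(B)$ and $y(b)>y_{\max}(B)$, so all of $B$ lies strictly south-west of $b$. The plan is to place $p$ near the north-east corner of $B$: set $p'_\epsilon:=(x_{\max}(B)-\epsilon,\,y_{\max}(B)-\epsilon)$, which is interior to $B$ for all small $\epsilon>0$. The crux is the claim that for all sufficiently small $\epsilon$ the open box spanned by $p'_\epsilon$ and $b$ is disjoint from $S$. To prove it I would argue by contradiction: these boxes are nested and shrink towards $[x_{\max}(B),x(b))\times[y_{\max}(B),y(b))$ as $\epsilon\downarrow 0$, so if each of them met $S$ then, $S$ being finite, some single point $q\in S$ would lie in all of them, giving $x_{\max}(B)\le x(q)<x(b)$ and $y_{\max}(B)\le y(q)<y(b)$; the quadrant-boundary assumption upgrades the two left inequalities to strict ones, so $q\in\NE(B)$, and then $b$ dominates $q\neq b$, contradicting the $\NE$-minimality of $b$. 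Hence some small $\epsilon$ works (chosen generic to avoid coordinate collisions with $S\setminus\{p\}$), and $p':=p'_\epsilon$ completes the proof.

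The main obstacle, then, is precisely this last claim in case (ii): one must see that minimality of $b$ in the quadrant is exactly what forbids the near-corner rectangle from ever being blocked, and that blockers from $\cross(B)$ are automatically excluded because a persistent blocker is forced into $\NE(B)$. Everything else — case (i), the non-degeneracy perturbation, and the translation from ``empty of $S$'' to ``empty of $S'$'' — is bookkeeping.
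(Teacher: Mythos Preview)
Your proof is correct and follows essentially the same approach as the paper's: split on whether the failure of red-safety comes from a blue point in $\cross(B)$ or from a blue point on a quadrant's minimal set, and in each case move $p$ to the natural witness position (a thin sliver next to $b$ in the first case, the relevant corner of $B$ in the second). The paper's own proof is a terse two-sentence sketch of exactly this; you have simply filled in the $\epsilon$-details and the finiteness/nesting argument that the paper leaves implicit.
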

\begin{proof}
Suppose without loss of generality that $p$ is red. The box $B$ is not safe, so in particular it is not red-safe. So either there is a blue point $q$ in $cross(B)$ or there is a blue point $q$ on the corresponding minimal-set of one of the quadrants of $B$. In the first case, we can move $p$ to a position where it almost shares the same $x$ or $y$ coordinate as $q$ such $p$ and $q$ see each other (recall that no two points share the same $x$ or $y$ coordinate). In the second case, by moving $p$ right next to the corner of $B$ corresponding to the quadrant where $q$ lies, $p$ and $q$ will see each other (recall that we assume no point of $S$ lies on the boundary of $B$ or its quadrants).
\end{proof}

\begin{lemma}\label{lemma:shield}
Let $p_1,\ldots, p_5$ be points in $S$ of the same color and let $B$ be any axis-aligned box. Then we can move these $5$ points inside $B$ such that $p_1$ does not participate in $S$.
\end{lemma}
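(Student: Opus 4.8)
The plan is to leave $p_1$ at an arbitrary interior point of $B$ and use $p_2,p_3,p_4,p_5$ as four \emph{blockers}, one for each quadrant direction around $p_1$, placed so close to $p_1$ that they pierce every rectangle that a blue point of $S$ could span with $p_1$. Since $p_1,\ldots,p_5$ all have the same color, assume without loss of generality that they are red; then the blue points of the modified set are precisely the (finitely many, and henceforth fixed) blue points $q_1,\ldots,q_m$ of the original $S$, and $p_1$ participates in the new set if and only if some $q_j$ sees $p_1$.

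I would first pick a point $c$ in the interior of $B$ and move $p_1$ there. A blue point $q_j$ can see $p_1$ only if it lies strictly in one of the four open quadrants cut out by the horizontal and vertical lines through $p_1$ (no two points share a coordinate), and in that case the box spanned by $p_1$ and $q_j$ has a corner at $p_1$ and extends in the matching diagonal direction. I would then place $p_2$ strictly northeast of $p_1$, $p_3$ strictly northwest, $p_4$ strictly southeast, and $p_5$ strictly southwest, all within distance $\epsilon$ of $p_1$, choosing $\epsilon>0$ small enough that: (i) all five points lie in the interior of $B$; (ii) every $q_j$ lying northeast of $p_1$ satisfies $x(q_j)-x(p_1)>\epsilon$ and $y(q_j)-y(p_1)>\epsilon$, and symmetrically for the other three quadrants (possible since there are finitely many $q_j$ and each such gap is strictly positive); and (iii) the configuration stays non-degenerate, which can be arranged since the coordinate values to be avoided form a finite set.

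It then follows that for any blue point $q_j$: if $q_j$ lies northeast of $p_1$, condition (ii) forces the blocker $p_2$ to lie strictly inside the box spanned by $p_1$ and $q_j$, so $q_j$ does not see $p_1$; the other three quadrant directions are handled identically by $p_3$, $p_4$, $p_5$. Hence no blue point sees $p_1$ in the new set, i.e.\ $p_1$ does not participate.

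I do not expect a genuine obstacle here. The only mildly delicate points are checking that a single blocker per quadrant direction suffices — which works because the box spanned by $p_1$ and any blue point in a fixed open quadrant contains every point sufficiently close to $p_1$ in that quadrant direction — and the genericity of the final placement, both of which are routine. The reason the statement asks for five points rather than fewer is exactly that a blue point may approach $p_1$ from any of the four quadrant directions, so four blockers together with $p_1$ itself are what the argument needs.
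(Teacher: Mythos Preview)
Your argument is correct and is essentially the same as the paper's: place $p_1$ inside $B$ and use $p_2,p_3,p_4,p_5$ as close-by same-colored blockers, one per quadrant, so that every rectangle spanned by $p_1$ and a point of the opposite color contains a blocker. The paper's proof is just a terser statement of exactly this idea.
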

\begin{proof}
We simply move $p_2$ close to $p_1$ such that it shields $p_1$ from seeing any point of the opposite color lying in its upper right quadrant. We shield off the other three corners similarly with $p_3,p_4$ and $p_5$.
\end{proof}

\begin{lemma}\label{lemma:chips}
Let $T$ be a rooted tree with $n$ nodes and let $c\in\N$ be an integer. Consider a set of ``chips'' each associated to a single leaf in $T$ (a leaf however can be associated to multiple chips).
Place the chips on the nodes of the tree such that every chip lies on the path from the root to its associated leaf and every root to leaf path has at most $c$ chips on it.
(Note that multiple chips can be placed on the same node.)
Then the sum $\Sigma$ of the distance from chip to associated leaf over all chips is at most $c(n-1)$.
\end{lemma}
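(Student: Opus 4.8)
The plan is to prove the bound by a charging argument that distributes $\Sigma$ over the $n-1$ edges of $T$. For an edge $e$ of $T$ let $w_e$ denote its endpoint farther from the root, and say that a chip \emph{crosses} $e$ if it is placed on a proper ancestor of $w_e$ and its associated leaf lies in the subtree of $T$ rooted at $w_e$. The starting observation is purely combinatorial: a chip placed on a node $v$ with associated leaf $\ell$ crosses exactly those edges lying on the (downward) path from $v$ to $\ell$, and there are precisely $\mathrm{dist}(v,\ell)$ of them. Summing this identity over all chips and exchanging the order of summation gives $\Sigma = \sum_e N_e$, where $N_e$ denotes the number of chips crossing $e$ and the sum ranges over all edges of $T$.

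It then suffices to show $N_e \le c$ for every edge $e$. Fix $e$, write $w = w_e$, and choose any leaf $\ell_w$ of $T$ contained in the subtree rooted at $w$ (such a leaf exists since $T$ is finite, and if $w$ itself is a leaf we take $\ell_w = w$). If a chip crosses $e$, it is placed on some node $v$ that is a proper ancestor of $w$; since $w$ is an ancestor of $\ell_w$, the node $v$ is an ancestor of $\ell_w$, so this chip sits on the root-to-$\ell_w$ path. By hypothesis this path carries at most $c$ chips (counted with multiplicity), hence $N_e \le c$. Combining the two steps yields $\Sigma = \sum_e N_e \le c(n-1)$, as claimed.

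I do not expect a genuine obstacle here; the only places needing a little care are the bookkeeping in the first step — in particular the degenerate case of a chip placed on its own leaf, which crosses no edge and contributes $0$, and the case $n=1$, where the edge set is empty and $\Sigma = 0$ — and, in the second step, verifying that all chips crossing a fixed edge really do lie on one common root-to-leaf path (which is exactly what the ``proper ancestor'' condition secures). An induction on $n$ is also possible but seems messier than the direct counting argument above.
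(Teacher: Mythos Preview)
Your charging argument is correct. The key identity $\Sigma = \sum_e N_e$ is just the observation that the distance from a chip to its leaf equals the number of edges on that downward path, and your bound $N_e \le c$ is justified exactly as you say: every chip crossing $e$ sits strictly above $w_e$, hence on the root-to-$\ell_w$ path for any leaf $\ell_w$ below $w_e$, and that path carries at most $c$ chips by hypothesis. The edge cases you flag (chip on its own leaf, $n=1$) are handled correctly.

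This is a genuinely different route from the paper, which proves the lemma by induction on $n$: it separates off the $c_r$ chips placed at the root (each contributing at most the depth of the tree), applies the inductive hypothesis to each subtree of the root with the reduced budget $c-c_r$, and sums. Your edge-counting argument is shorter and avoids the bookkeeping of tracking how the budget $c$ splits between the root and the subtrees; the induction, on the other hand, is perhaps the more obvious thing to try first. Both yield the same bound $c(n-1)$, and neither generalises in a way the other does not for the purposes of the paper. Your closing remark that induction ``seems messier'' is apt: the paper's inductive step is not hard, but your direct argument isolates the single structural fact that does all the work, namely that the chips crossing a fixed edge all lie on one root-to-leaf path.
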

\begin{proof}
We show the result by induction. For $n=1$ the result is trivially true. Suppose it is true for all $1\leq n'\leq n$, where $n\geq 1$ and consider a rooted tree $T$ with $n+1$ nodes. We can choose to put $c_r \leq c$ chips on the root. Each contributes at most $n-1$ to $\Sigma$, as any leaf is at distance at most $n-1$ from the root. The remaining chips have to go in the $k$ subtrees linked to the root, with respective numbers of nodes $n_1,n_2\ldots n_k$. Notice that since we already have $c_r$ chips on the root, each root to leaf path in the $k$ subtrees can only have up to $c-c_r$ chips.  By our inductive hypothesis, the contribution of the different subtrees to $\Sigma$ is at most
$(c-c_r)(n_1-1) + (c-c_r)(n_2-1) + \ldots (c-c_r)(n_k-1) \leq (c-c_r)(n_1 +n_2+\ldots n_k) = (c-c_r)(n-1)$. Together with the contribution from the chips on the root we get $\Sigma \leq (c-c_r)(n-1) + c_r(n-1) = c(n-1)$. By induction we conclude the proof.
\end{proof}

We can now prove Theorem \ref{thm:lower_bound} with a proof similar to what can be found in \cite{Afshani-2017}. In their case however, if a point can be moved anywhere inside a non-safe box then the point can be moved in a way that changes the expected output of the algorithm. In our case, we can do something similar by invoking Lemma \ref{lemma:force} or \ref{lemma:shield}, but this latter lemma requires use to be able to move multiple points of the same color (but a constant number are enough). We can use Lemma \ref{lemma:chips} to show that this does not impact the lower bound.

\begin{proof}[Proof of Theorem \ref{thm:lower_bound}]
Consider a correct algorithm $A$ in the comparison-based model. We will construct a respectful partition $\Pi$ of $S$ and ``bad'' permutation $\sigma(S)$ such that $A$ is forced to do $\Omega(n(\entropy(\Pi)+1))$ comparisons to solve the problem on $\sigma(S)$. 

We will do this by having an adversary answer the queries issued by $A$ instead of letting $A$ query the input directly. The adversary will answer queries in a way that forces $A$ to do many subsequent queries and is consistent with some permutation $\sigma(S)$. Notice that as long as this latter condition is satisfied, it makes no differences from the point of view of $A$ if it is issuing queries to the adversary or querying $\sigma(S)$ directly.

Let $S_\text{red}$ be the set of red points in $S$ and $S_\text{blue}$ be the set of blue points. First, we define two recursive partitionings on the plane similar to kd-trees \cite{Bentley-1975}, one for the red points and one for the blue points. We describe the structure for the red points. This structure will be represented by a tree $\mathcal{T}_\text{red}$ of axis-aligned boxes, constructed top to bottom. The root of $\mathcal{T}_\text{red}$ corresponds to the entire plane. If $S_\text{red}$ is not safe, we split the points around a vertical line $L$ such that the two open halfplanes defined by $L$ partition $S_\text{red}$ into two sets $S_1$ and $S_2$ of size at most $\ceil{|S_\text{red}|/2}$. The children of the root will then correspond to the two boxes thus defined. For every newly created node we repeat the procedure, partitioning the set of points by median $x$ coordinates at even levels of the tree and median $y$ coordinates at odd levels, until the points contained are a safe subset of $S$ or a singleton, at which point we define a leaf of $\mathcal{T}_\text{red}$. We define $\mathcal{T}_\text{blue}$ similarly for the blue points. The partition $\Pi_\text{kd-tree}$ of $S$ we consider is the one formed by the leaves of $\mathcal{T}_\text{red}$ and $\mathcal{T}_\text{blue}$, which is respectful by definition of the leaves.

The algorithm $A$ can issue queries to compare the $x$ or $y$ coordinate of the points at indices $i$ and $j$. It knows which indices correspond to blue points and which correspond to red. The adversary will map these indices to boxes of $\mathcal{T}_\text{red}$ or $\mathcal{T}_\text{blue}$, depending on their color. Additionally, if an index is mapped to a leaf box $B$, it will also be assigned to a point in $S\cap B$ of the corresponding color. After each query the adversary will update the mapping so that the query can be resolved by knowing only this mapping, in a way that is consistent with all previously answered queries. We let $n(B)$ denote the number of indices mapped to $B$ or a descendant of $B$. Throughout the whole execution the adversary maintains the invariant $\mathcal{I}$ that the number of indices mapped to each point in $S$ is at most $1$ and for each box $B$ in $\mathcal{T}_\text{red}$ (resp.\ $\mathcal{T}_\text{blue}$), $n(B)$ is at most $|S_\text{red}\cap B|$ (resp.\ $|S_\text{blue}\cap B|$).

A box $B \in \mathcal{T}_\text{red}$ (resp.\ $B \in \mathcal{T}_\text{blue}$) for which $n(B)=|S_\text{red}\cap B|$ (resp.\ $n(B)=|S_\text{red}\cap B|$) is said to be full. We denote the box an index $i$ is mapped to as $B_i$. An index $i$ mapped to an interior box is said to be floating. An index $i$ mapped to a leaf box is said to be fixed, and we denote the point it is assigned to as $p_i$. A point in $S$ to which no index is assigned is said to be free. 

The adversary starts by mapping all red indices to the root of $\mathcal{T}_\text{red}$ and all blue indices to the root of $\mathcal{T}_\text{blue}$. Whenever an index $i$ is mapped to a leaf box $B$, it is immediately assigned to a free point of the corresponding color in $S\cap B$. This is always possible because of the invariant $\mathcal{I}$ we maintain. When issued a query to compare, say, the $x$ coordinates of the points at indices $i$ and $j$, the adversary proceeds as follows:
\begin{enumerate}
    \item If both indices $i$ and $j$ are fixed, then the adversary simply compares $p_i$ and $p_j$ to answer the query.
    \item If either of the indices $i$ and $j$ is mapped to a non-leaf box $B$ which is at an odd depth in  $\mathcal{T}_\text{red}$ or $\mathcal{T}_\text{blue}$, map it to one of the non-full children of $B$ and move on to the following steps.
    \item Suppose that the indices $i$ and $j$ are both floating. Suppose further that the vertical line separating the children of $B_i$ lies to the left of the vertical line dividing the children of $B_j$ (otherwise, simply swap the roles of $i$ and $j$). If the left child $B^\ell_i$ of $B_i$ and the right child $B^r_j$ of $B_j$ are not full, the adversary maps $i$ to $B^\ell_i$ and $j$ to $B^r_j$. The query can then be resolved and the answer returned to $A$, as the $x$-coordinate of any point in $B^\ell_i$ is smaller than the $x$ coordinate of any point in $B^r_j$. If either $B^\ell_i$ or $B^r_j$ is full, go to the next step.
    \item If, say, $B^\ell_i$ is full, then the adversary maps $i$ to the right child $B^r_i$ of $B_i$. The comparison can not necessarily be resolved at this point, so we go back to step 2.
\end{enumerate}
If one of the indices $i$ or $j$ is fixed and the other is floating, then we can easily adapt steps $2$, $3$ and $4$, by only ever updating the mapping of the floating index.

Notice that in all these steps, the invariant $\mathcal{I}$ is maintained, as it is impossible for the two children of $B_i$ (or $B_j$) to be full. Otherwise, if $i$ is for example a red index, we would have $n(B_i) \geq n(B^\ell_i) + n(B^r_i) + 1 \geq |S_\text{red}\cap B^\ell_i| + |S_\text{red}\cap B^r_i| +1 = |S_\text{red}\cap B_i| +1$ (where the $+1$ term comes from the fact that at least one index is assigned to $B_i$, namely $i$) and the invariant was violated to begin with.

We say that an increment occurs whenever the box to which an index is mapped changes. Notice that every increment in the above procedure increase the depth of $B_i$ by exactly one for some index $i$. Thus, by the end of execution, the total number of increments is equal to the sum of the depths of the $B_i$'s over all indices $i$. Call this sum $D$.

Let us show that if $T$ is the number of queries done by $A$, then $T\in\Omega(D)$. We call the increments done in step $3$ ordinary increments. We call increments done in step $4$ exceptional increments. Every query issued by $A$ triggers $O(1)$ ordinary increments. To bound the number of exceptional increments, notice that for every box $B$ (say in  $\mathcal{T}_\text{red}$), at least $\floor{|S_\text{red}\cap B|/2}$ ordinary increments have to occur at $B$ before any exceptional increment can occur, and that at most $\ceil{|S_\text{red}\cap B|/2}$ exceptional increments can occur at $B$ after that point. So the total number of exceptional increments can be bounded by the total number of ordinary increments, which is $O(T)$. The increments done in step $2$ are either the direct results of a query or were preceded by some unique exceptional query to which they can be charged, so their number is also $O(T)$. Thus, the total number of increments $D$ is in $O(T)$, or in other words, $T\in\Omega(D)$.

Next, notice that after the execution of $A$, if there is any index which is mapped to an interior box $B$, we can arbitrarily remap it to one of the children of $B$, then repeat the procedure until all indices are fixed. The resulting mapping from indices to points defines a permutation $\sigma(S)$ consistent with all answers given to the queries by the adversary and which maps red indices with red points, blue indices with blue points. Thus from the point of view of $A$, we can consider it has queried $\sigma(S)$ directly. In other words, the adversary did not ``cheat''. Let $D_\text{post}$ be the sum of the depth of $B'_i$ over all indices $i$ after this post-processing step. Every index is associated to a leaf and every leaf is full. For any box $B$ in, say, $\mathcal{T}_\text{red}$ at depth $d$ we have $|B\cap S_\text{red}| \geq \floor{\frac{n}{2^d}}$, thus $d \in \Omega(\log \frac{n}{|B\cap S_\text{red}|})$. It follows that
\begin{align*}
    D_\text{post} &\in\Omega\left( \mspace{-30mu} \sum_{\mspace{38mu} B \text{ leaf in }\mathcal{T}_\text{red}} \mspace{-40mu} |B\cap S_\text{red}| \log \frac{n}{|B\cap S_\text{red}|} + 
     \mspace{-30mu} \sum_{\mspace{38mu} B \text{ leaf in }\mathcal{T}_\text{blue}}  \mspace{-40mu} |B\cap S_\text{blue}| \log \frac{n}{|B\cap S_\text{blue}|}\right) \\
    &= \Omega \left( \mspace{-30mu}\sum_{\mspace{38mu} S_k \in \Pi_\text{kd-tree}} \mspace{-30mu} |S_k| \log \frac{n}{|S_k|}\right)   \\
    &= \Omega(n\entropy(\Pi_\text{kd-tree})).
\end{align*}

It remains to prove a lower bound on $D$ using the lower bound we have on $D_\text{post}$. We show that before the post-processing described in the previous paragraph, for every root to leaf path in $\mathcal{T}_\text{red}$ or $\mathcal{T}_\text{blue}$, the number of indices mapped to an interior box on the path is at most $4$. Indeed, suppose there was such a path in, say, $\mathcal{T}_\text{red}$ with $5$ or more indices mapped to interior boxes. Call these indices $i_1,i_2,\ldots,i_5$, and let $p_1,\ldots,p_5$ be the points they are associated with after post-processing. Notice that we can freely move the point $p_1$ inside $B_{i_1}$ for example, without changing the answer to any of the queries done by $A$. Let $B$ be the box of maximum depth among $B_{i_1},\ldots, B_{i_5}$. Because all other boxes are ancestors of $B$ in $\mathcal{T}_\text{red}$, they all contain $B$. Thus, we can move all five points inside $B$ without changing the answer to any of the queries done by $A$. Because $B$ is not safe we can apply either Lemma \ref{lemma:force} or Lemma \ref{lemma:shield} to move the points inside $B$ in a way that creates a new point set $S'$ with a different expected output. Then $A$ will not compute the correct solution on $S'$, which contradicts its definition as a correct algorithm. 

Now, for each index $i$ mapped to an interior node, place a ``chip'' on the corresponding tree, at the node to which $i$ maps before post-processing. Associate the chip with the leaf to which $i$ gets mapped after post-processing. We know that every chip lies on the path from the root to its associated leaf and that every root to leaf path in $\mathcal{T}_\text{red}$ or $\mathcal{T}_\text{blue}$ has at most $4$ chips on it. The sum of the distance from chip to associated leaf over all chips is $D_\text{post}-D$. By Lemma \ref{lemma:chips}, we get $D_\text{post}-D \leq 4(n-1)$, that is, $D\geq D_\text{post}-4(n-1)$. Thus,
\begin{align*}
    T &\in \Omega(D)\\
    &\subset \Omega(D_\text{post}-4n)\\
    &\subset \Omega(n\entropy(\Pi_\text{kd-tree})-4n)\\
    &\subset \Omega(n(\entropy(\Pi_\text{kd-tree})-4)).
\end{align*}

Combined with the trivial $\Omega(n)$ bound we get a $\Omega(n(\entropy(\Pi_\text{kd-tree})+1))$ bound, which in turn implies a $\Omega(n(\entropy(S)+1))$ lower bound, as $\Pi_\text{kd-tree}$ is a respectful partition of $S$.
\end{proof}

Notice that this lower bound implies the following:
\begin{corollary}\label{cor:simple_lower_bound}
For the problem of reporting participating points in the order-oblivious comparison-based model, $\OPT(S) \in \Omega(h\log n)$, where $h$ is the number of points which participate in $S$.
\end{corollary}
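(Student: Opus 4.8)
To prove Corollary \ref{cor:simple_lower_bound}, I would derive it directly from Theorem \ref{thm:lower_bound} by exhibiting, for any instance $S$ with $h$ participating points, a respectful partition whose entropy is $\Omega(\frac{h}{n}\log n)$, so that $n(\entropy(S)+1) \in \Omega(h\log n)$. The key observation is that participating points cannot be ``cheaply'' packed into safe subsets: by Lemma \ref{lemma:safe}, a safe subset contains no participating point at all, so every participating point must lie in a singleton member of any respectful partition.

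First I would fix an arbitrary respectful partition $\Pi$ realizing (or merely close to) $\entropy(S)$, and observe that its singleton members include at least one per participating point — hence $\Pi$ has at least $h$ singletons. Next I would lower-bound $\entropy(\Pi) = \sum_{S_k\in\Pi}\frac{|S_k|}{n}\log\frac{n}{|S_k|}$ by discarding all non-singleton terms (they are nonnegative) and keeping only the singleton contributions: each singleton contributes $\frac{1}{n}\log n$, and there are at least $h$ of them, giving $\entropy(\Pi) \ge \frac{h}{n}\log n$. Plugging into Theorem \ref{thm:lower_bound} yields $\OPT(S)\in\Omega\big(n(\entropy(S)+1)\big)\subseteq\Omega\big(n\cdot\frac{h}{n}\log n\big)=\Omega(h\log n)$.

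The only subtlety — and the one step that needs a careful sentence rather than a throwaway remark — is making sure the inequality $\entropy(S)\ge\frac{h}{n}\log n$ really holds for \emph{every} respectful partition, not just a convenient one: this is exactly where Lemma \ref{lemma:safe} is used, since it forces every participating point into its own singleton block regardless of how the partition is chosen. There is no real obstacle here; the corollary is essentially an unpacking of the definition of structural entropy together with the fact that safe blocks are ``participation-free''. I would phrase the whole argument in two or three sentences, citing Theorem \ref{thm:lower_bound} and Lemma \ref{lemma:safe}, and noting that the $\log n$ term should be read as $\log(n+1)$ or similar when $h=n$ to avoid a vacuous bound, though this does not affect the asymptotics.
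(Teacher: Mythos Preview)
Your proposal is correct and essentially identical to the paper's own proof: both argue that by Lemma~\ref{lemma:safe} every participating point must be a singleton in any respectful partition, sum the $\frac{1}{n}\log n$ contribution from these $h$ singletons to get $\entropy(\Pi)\ge\frac{h}{n}\log n$ for every respectful $\Pi$, and then invoke Theorem~\ref{thm:lower_bound}. The paper omits your aside about the $h=n$ edge case, but otherwise the arguments match.
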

\begin{proof}
Consider a respectful partition $\Pi$ of $S$. By definition of entropy we have $\entropy(\Pi) = \sum_{S_k\in \Pi} \frac{|S_k|}{n} \log \frac{n}{|S_k|}$. By definition of a respectful partition and by Lemma \ref{lemma:safe}, every point which participates in $S$ will be in a singleton $S_k$. By summing only over these $h$ points, we get $n\entropy(\Pi) \geq n\sum_{i=1}^h \frac{1}{n} \log n = h\log n$. As this is true for any respectful partition, we have $n\entropy(S) \geq h\log n$.
\end{proof}

This means in particular that if we can report all points which participate in $S$ in $O(n(\entropy(S)+1))$ time, then we can afford to run an algorithm in worst case optimal $O(h\log h + j)$ time on these reported points to compute the $j$ red-blue pairs which see each other. The resulting algorithm will be instance-optimal in the order-oblivious model. 

\section{Instance optimal comparison-based algorithm in the order-oblivious setting}\label{sec:upper_bound}
In this section we present a comparison-based algorithm for reporting participating points with a runtime matching the lower bound we saw in the previous section (note that worst-case optimal $\bigO(n\log n + h)$ algorithms are easy to obtain by the method of \cite{Guting-1989}). Once again, the main algorithm will be very similar to what is done in \cite{Afshani-2017}, however their results do not directly apply here. The main difficulty in adapting their algorithm to our case is that the relation we consider here is not decomposable. More precisely, if we know that some point $p$ does or does not participate in $S'$ and $S''$, we cannot use this to decide if $p$ participates in $S'\cup S''$.
The bulk of the work here will thus be to preprocess the set $S$ in order to make the safety tests decomposable, while keeping our preprocessing time within $O(n(\entropy(S)+1))$.

\subsection{The main algorithm}

Before we go into detail about how to preprocess the points let us see how, if we can build the right data structure, we can use it to report the participating points in $S$ in $O(n(\entropy(S)+1))$ time.
We will need the following definition and easy observations:
\begin{definition}
Let $S$ be a set of red and blue points. A subset $S'\subset S$ \emph{conforms with $S$} if it contains all points which participate in $S$.
\end{definition}
\begin{observation}\label{obs:conform}
Let $S$ be a set of of red and blue points and let $S'$ be a subset which conforms with $S$. Then an axis-aligned box is safe for $S$ if and only if it is safe for $S'$. Moreover, a point participates in $S$ if and only if it participates in $S'$.
\end{observation}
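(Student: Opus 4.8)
The plan is to prove the ``moreover'' statement first and then derive the safety equivalence from it. Throughout, write $B_{pq}$ for the axis-aligned box spanned by two points $p$ and $q$.

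For the participation statement, the forward direction is immediate: if $p$ participates in $S$, choose an opposite-colour witness $q\in S$ so that the interior of $B_{pq}$ contains no point of $S$; then $q$ also participates (it sees $p$), so $p,q\in S'$, and since $S'\subseteq S$ the interior of $B_{pq}$ contains no point of $S'$ either, so $p$ participates in $S'$. For the converse, I would argue by contradiction: suppose $p$ participates in $S'$ via an opposite-colour $q\in S'$ but $p$ does not participate in $S$, so that the interior of $B_{pq}$ contains a point of $S$. Then I run the classical ``shrinking'' argument inside the finite bichromatic set $S\cap B_{pq}$: starting from $B_{ab}:=B_{pq}$, as long as some $r\in S\cap B_{pq}$ lies in the interior of $B_{ab}$, replace whichever of $a,b$ has the colour of $r$ by $r$; the box shrinks strictly, so after finitely many steps one obtains a red point $a$ and a blue point $b$ in $B_{pq}$ that see each other in $S$. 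Then $a$ and $b$ participate in $S$, hence lie in $S'$; but the interior of $B_{pq}$ is free of $S'$ and, by non-degeneracy, the only points of $S$ on the boundary of $B_{pq}$ are $p$ and $q$, so $\{a,b\}=\{p,q\}$ and $p$ sees $q$ in $S$ --- contradicting that $p$ does not participate in $S$.

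For the safety equivalence, since a box is safe iff it is red-safe or blue-safe it suffices, by colour symmetry and a trivial separate treatment of the monochromatic case, to show that $B$ is red-safe for $S$ iff it is red-safe for $S'$. The useful reformulation, which comes straight out of the proofs of Lemmas~\ref{lemma:safe} and~\ref{lemma:force}, is that $B$ is red-safe for a point set $T$ exactly when no point placed in the interior of $B$ sees a blue point of $T$; concretely, $T\cap\cross(B)$ contains no blue point and, in each quadrant $Q$ of $B$, no blue point of $T\cap Q$ is $Q$-minimal. The forward implication is the easier one: the $\cross(B)$ condition passes from $S$ to $S'\subseteq S$ for free, and for a quadrant, if a blue $q$ were $\NE$-minimal in $S'\cap\NE(B)$ while $B$ is red-safe for $S$, then $q$ is not $\NE$-minimal in $S\cap\NE(B)$, so among the points of $S\cap\NE(B)$ strictly to the lower-left of $q$ there is an $\NE$-minimal --- hence red --- point $z^\ast$; walking from $z^\ast$ toward $q$ along the staircase of nearest visible neighbours (each step replacing the current point by the closest point of $S$ inside its box with $q$), one checks inductively that every point of this finite chain is red and lies strictly below-left of $q$ inside $\NE(B)$, hence (by the participation part) does not participate, so its last point sees $q$, is red, and participates --- a contradiction.

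The hard part is the converse implication, ``$B$ red-safe for $S'$ $\Rightarrow$ $B$ red-safe for $S$'': here one has to rule out a blue point of $S$ that certifies non-red-safety of $B$ for $S$ (a blue point in $\cross(B)$, or a blue $Q$-minimal point of $S\cap Q$) but that does not belong to $S'$, i.e.\ does not participate in $S$. The idea is again a finite-staircase argument: since the participating points of $S$ are exactly $S'$ (by the participation part), every point such an offending blue $q$ sees must be blue, and by repeatedly passing to the nearest visible neighbour while hugging the relevant corner or crossing strip of $B$ one is eventually forced into a participating point, which by red-safety of $B$ for $S'$ must be red --- so $q$ sees a red point after all, a contradiction. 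The bookkeeping required to handle the $\cross(B)$ case and the four quadrant cases uniformly, and to ensure that the staircase genuinely terminates at a participating point, is where the bulk of the verification lies; everything else is the routine checking sketched above.
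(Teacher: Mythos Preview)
The paper states this result as an ``easy observation'' and gives no proof at all, so there is nothing to compare against at the level of argument structure. Your proposal supplies a genuine proof where the paper leaves a gap, and the participation equivalence together with the forward implication of the safety equivalence are correct: the shrinking argument for participation is standard and sound, and the staircase walk from an $\NE$-minimal red $z^\ast$ toward the offending blue $q$ works exactly as you describe (the key point being that every chain element lies strictly below-left of $q$ inside $\NE(B)$, hence is absent from $S'$ by $\NE$-minimality of $q$ in $S'$, hence does not participate, hence sees only its own colour).

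Two issues are worth flagging in the backward implication (``red-safe for $S'$ $\Rightarrow$ red-safe for $S$''). First, a small slip: you write ``the participating points of $S$ are exactly $S'$'', but conforming only guarantees containment, not equality; fortunately your argument only uses the direction ``not in $S'$ $\Rightarrow$ does not participate'', so this is harmless. Second, and more substantively, the staircase sketch does not conclude as written: you walk from the offending blue $q$ through a chain of blue visible neighbours $q=q_0,q_1,\ldots,q_k$ and claim to reach a participating point that ``must be red'', whence ``$q$ sees a red point''. But it is $q_k$, not $q$, that would see the red point; and to contradict red-safety for $S'$ you need $q_k$ (blue, now in $S'$) to be $S'$-visible from $B$, which requires arguing that the walk stays inside the relevant crossing strip (for the $\cross(B)$ case) or along the $Q$-minimal layer of $S$ (for the quadrant case) so that $q_k$ is still in $\cross(B)$ or still $Q$-minimal. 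For the $\cross(B)$ case the cleanest route is to first observe that consecutive points of $S$ in the vertical (or horizontal) strip, ordered by the transverse coordinate, always see each other, so any colour change there produces a participating blue point in $\cross(B)\cap S'$; this forces each strip to be monochromatic in $S$, and then one has to rule out the mixed situation (one strip blue, one red). You are right that this is bookkeeping rather than a new idea, but as written the sketch stops short of a proof.
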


\begin{observation}\label{obs:subbox}
If $B$ is a safe box for $S$, then any sub-box of $B$ is safe for $S$.
\end{observation}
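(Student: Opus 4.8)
The plan is to show the contrapositive in spirit: I will argue directly that if $B$ is red-safe for $S$ then any sub-box $B'\subseteq B$ is also red-safe for $S$ (the blue case being symmetric, and ``safe'' being the disjunction of the two). There are two conditions to verify for $B'$: that it is red-cross-safe, and that the \NE-, \NW-, \SE-, and \SW-minimal sets of $S$ restricted to the four quadrants of $B'$ are all red.

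First I would handle the cross condition. Since $B'\subseteq B$, one checks from the definition of $cross(\cdot)$ that $cross(B')\subseteq cross(B)$: the $x$-interval $[x_{\min}(B'),x_{\max}(B')]$ is contained in $[x_{\min}(B),x_{\max}(B)]$ and similarly for the $y$-intervals, so a point whose $x$-coordinate (or $y$-coordinate) falls in the smaller interval also falls in the larger one. Hence $S\cap cross(B')\subseteq S\cap cross(B)$, and since all points of the latter are red (as $B$ is red-cross-safe), so are all points of the former; thus $B'$ is red-cross-safe.

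The main point is the quadrant condition. Consider, say, the \NE-quadrant. Since $B'\subseteq B$, its top-right corner $(x_{\max}(B'),y_{\max}(B'))$ dominates (is $\geq$ in both coordinates than) the top-right corner of $B$, so $\NE(B')\subseteq\NE(B)$. Now take any point $q$ in the \NE-minimal set of $S\cap\NE(B')$ and suppose for contradiction it is blue. Since $q\in\NE(B')\subseteq\NE(B)$, the point $q$ lies in $S\cap\NE(B)$, which by red-safety of $B$ has an all-red \NE-minimal set; hence $q$ is not minimal in $S\cap\NE(B)$, so some $q'\in S\cap\NE(B)$ is dominated by $q$ (i.e.\ $x(q')\le x(q)$, $y(q')\le y(q)$), with $q'\neq q$. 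I claim $q'$ actually lies in $\NE(B')$: it has $x(q')\le x(q)$ and $y(q')\le y(q)$ with $q$ in the open \NE-quadrant of $B'$, so $q'$ lies to the south-west of $q$; combined with $q'\notin cross(B')$ — which follows because $q'\in\NE(B)\subseteq \mathbb{R}^2\setminus cross(B)\subseteq \mathbb{R}^2\setminus cross(B')$ using $cross(B')\subseteq cross(B)$ — the point $q'$ must be in one of the quadrants of $B'$, and being south-west of a point of $\NE(B')$ while not in $cross(B')$ forces it into $\NE(B')$ itself. But then $q'$ is a point of $S\cap\NE(B')$ strictly dominated by $q$, contradicting $q$ being in the \NE-minimal set of $S\cap\NE(B')$. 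Therefore $q$ is red. The same argument applies verbatim (with the roles of $\le$ and $\ge$ adjusted) to the other three quadrants, so all four minimal sets of $B'$ are red and $B'$ is red-safe.

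The step I expect to require the most care is the quadrant containment argument just sketched — specifically, verifying that a witness $q'$ found inside $\NE(B)$ can be relocated into $\NE(B')$ rather than merely into $cross(B')$ or a different quadrant of $B'$. This hinges on the fact that $q'$ is dominated by a point already known to be in the open \NE-quadrant of $B'$ together with $cross(B')\subseteq cross(B)$; once those two facts are in hand the geometry is forced, but one should state them explicitly. (The paper's standing assumption that no point of $S$ lies on quadrant boundaries makes the open/closed distinctions harmless here.)
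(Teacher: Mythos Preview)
Your argument has a genuine error in the quadrant step. You assert that since $B'\subseteq B$, the top-right corner of $B'$ dominates that of $B$, and hence $\NE(B')\subseteq\NE(B)$. Both claims are backwards: from $B'\subseteq B$ one gets $x_{\max}(B')\le x_{\max}(B)$ and $y_{\max}(B')\le y_{\max}(B)$, so the top-right corner of $B'$ is \emph{dominated by} that of $B$, and consequently $\NE(B)\subseteq\NE(B')$ (the smaller box has the \emph{larger} north-east quadrant). This breaks the line ``Since $q\in\NE(B')\subseteq\NE(B)$, the point $q$ lies in $S\cap\NE(B)$'': a blue point $q$ minimal in $S\cap\NE(B')$ need not lie in $\NE(B)$ at all, so you cannot immediately invoke the red-safety of $B$ to produce a dominating witness $q'$.

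The fix is short once the containment is corrected. Split into two cases. If $q\in\NE(B)$, then (since the \NE-minimal set of $S\cap\NE(B)$ is red) there is $q'\in S\cap\NE(B)$ dominated by $q$; because $\NE(B)\subseteq\NE(B')$, this $q'$ lies in $S\cap\NE(B')$, contradicting minimality of $q$. If instead $q\in\NE(B')\setminus\NE(B)$, note that $x(q)>x_{\max}(B')\ge x_{\min}(B)$ and $y(q)>y_{\max}(B')\ge y_{\min}(B)$, while $q\notin\NE(B)$ forces $x(q)\le x_{\max}(B)$ or $y(q)\le y_{\max}(B)$; either way $q\in cross(B)$, so $q$ is red by red-cross-safety of $B$, contradicting the assumption that $q$ is blue. (As a side remark, your later claim that ``being south-west of a point of $\NE(B')$ while not in $cross(B')$ forces it into $\NE(B')$'' is also false in general --- such a point could land in $\SW(B')$ --- but with the correct containment $\NE(B)\subseteq\NE(B')$ you get $q'\in\NE(B')$ immediately and that detour is unnecessary.) The paper itself states the observation without proof.
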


We have the following algorithm and theorem, adapted from \cite{Afshani-2017}, where $\delta$ is a constant to be chosen later:

\begin{algorithm}[H]
\DontPrintSemicolon
    \KwInput{A point set $S$ of size $n$}
    Set $Q = S$. \;
    \For{$j=0,1,\ldots \floor{\log(\delta \log n)}$}{
        Partition the points in $Q$ using a kd-tree to get $r_j=2^{2^j}$ subsets $Q_1,\ldots, Q_{r_j}$ of size at most $\ceil{|Q|/r_j}$, along with corresponding bounding boxes $B_1,\ldots, B_{r_j}$.\;
        \For{$i=0,1,\ldots, r_j$}{
            \If{$B_i$ is safe for $Q$}{
                Prune all points in $Q_i$ from $Q$. \;
            }
        }
    }
    Solve the reporting problem for the remaining set $Q$ directly in $O(|Q|\log|Q|)$ time. \;
    
\caption{Reporting participating points}\label{alg:main}
\end{algorithm}

\begin{theorem}\label{thm:main_algo}
Let $S$ be a set of $n$ points in general position.
Suppose we have preprocessed $S$ such that for any subset $S'\subset S$ containing all points which participate in $S$ we can test if an axis-aligned box is safe for $S'$ in $O(n^{1-\alpha})$ time, plus the cost of a constant number of range-emptiness queries on $S'$.\\
Then Algorithm \ref{alg:main} can report all points which participate in $S$ in $O(n(\entropy(S)+1))$ time.
\end{theorem}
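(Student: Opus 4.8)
The plan is to follow the analysis template of Afshani et al., tracking how much work Algorithm~\ref{alg:main} does at each round $j$ and charging it against the structural entropy. First I would fix, for the analysis only, an optimal respectful partition $\Pi$ achieving $\entropy(S)=\entropy(\Pi)$, and note by Lemma~\ref{lemma:safe} and Observation~\ref{obs:conform} that at every point of the execution the current set $Q$ conforms with $S$: we only ever prune points contained in a box that is safe for $Q$, hence safe for $S$, hence non-participating, so $Q$ always contains all participating points. Thus safety tests on $Q$ behave exactly like safety tests on $S$, and the preprocessing hypothesis applies with $S'=Q$ at every step. The range-emptiness queries on the shrinking set $Q$ can be answered in $O(\log n)$ time each after an $O(n\log n)$ preprocessing (e.g.\ on a suitable data structure over $S$ supporting emptiness of axis-aligned query rectangles restricted to a subset, or by rebuilding periodically); since $r_j \le \delta\log n$ this keeps the safety-test cost per box at $O(n^{1-\alpha} + \log n)$, which is $O(n^{1-\alpha}+\log n)=O(\sqrt n)$, say.

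Next I would bound the number of boxes examined over all rounds. In round $j$ the kd-tree has $r_j = 2^{2^j}$ cells, so the total number of safety tests is $\sum_{j=0}^{\lfloor\log(\delta\log n)\rfloor} r_j = O(r_{\text{last}}) = O(\delta\log n)$ by the geometric-type growth of the $r_j$'s. Hence the total time spent on safety tests and their range-emptiness subqueries is $O(\log n \cdot (\sqrt n + \log n)) = o(n)$, absorbed into the $\Omega(n)$ term of the bound; building each kd-tree partition of $Q$ into $r_j$ cells costs $O(|Q|\log r_j) = O(|Q| 2^j)$. The crucial quantity is therefore $\sum_j |Q^{(j)}| 2^j$, where $Q^{(j)}$ is the set entering round $j$, plus the final direct solve $O(|Q^{\text{final}}|\log|Q^{\text{final}}|)$.

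The heart of the argument is showing $\sum_j |Q^{(j)}|\,2^j = O(n(\entropy(S)+1))$. For this I would charge the survival of a point $p\in Q^{(j)}$ into round $j$ to the member $S_k\in\Pi$ containing $p$: I claim that if $|S_k| \ge \lceil n/r_j\rceil$ — equivalently $2^j$ is at most roughly $\log(n/|S_k|)$ — then $p$ is guaranteed to have been pruned by the end of round $j$ (or round $j$ is among a bounded number of "slack" rounds). Indeed, by Observation~\ref{obs:subbox}, any kd-tree cell $B_i$ at round $j$ that is contained in the safe bounding box of $S_k$ is itself safe for $Q^{(j)}$, so its points get pruned; and once $r_j$ is large enough that the cell diameter in each coordinate drops below that of $S_k$'s box, a constant fraction — in fact, by the kd-tree balancing, the cell through $p$ eventually lies inside that box (here one uses that a balanced kd-tree with $r_j$ cells over a set containing $S_k$ shatters space finely enough once $r_j \gtrsim n/|S_k|$; some care is needed because a single $S_k$-box may straddle up to $O(1)$ kd-tree cells at the threshold, which only shifts the threshold round by $O(1)$). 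Consequently a point of a part of size $s$ contributes to rounds $j$ with $2^j = O(\log(n/s))$ only, and $\sum_{j:\,2^j \le c\log(n/s)} 2^j = O(\log(n/s))$. Summing over all points, $\sum_j |Q^{(j)}| 2^j = O\!\big(\sum_{S_k\in\Pi} |S_k|(\log(n/|S_k|)+1)\big) = O(n(\entropy(\Pi)+1)) = O(n(\entropy(S)+1))$. The same threshold argument shows $|Q^{\text{final}}|$ consists only of points in singleton parts (parts with $|S_k|$ small enough never to be pruned), whence $|Q^{\text{final}}|\log|Q^{\text{final}}| = O(h\log n) = O(n\entropy(S))$ by Corollary~\ref{cor:simple_lower_bound}; actually one must also handle the doubly-exponential cutoff $r_{\text{last}} \approx \delta\log n$: choosing $\delta$ large ensures that every non-singleton part (size $\ge 2$, so $\log(n/|S_k|) \le \log(n/2)$) is pruned before the loop ends, leaving only singletons, and $h \le n$ gives the final solve cost $O(n\log n) = O(n(\entropy(S)+1))$ in the worst case but $O(h \log h)$ when $h$ is small — in either case within budget.

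I expect the main obstacle to be the threshold claim: precisely quantifying for which round $j$ a given $\Pi$-member $S_k$ is certain to have all its points pruned, given that the kd-tree is built on the current pruned set $Q^{(j)}$ (not on $S$), that $Q^{(j)}$ still conforms with $S$ so $S_k$'s safe box is still safe for $Q^{(j)}$, and that balancing guarantees cell sizes $\le \lceil |Q^{(j)}|/r_j\rceil \le \lceil n/r_j\rceil$. The subtlety is geometric — a safe axis-parallel box can be pierced by several kd-tree cells — but since each round halves alternately in $x$ and $y$ and $S_k$'s box has bounded "coordinate spread relative to $Q$," after $O(1)$ extra rounds beyond the naive threshold every cell meeting $S_k$ lies inside its box. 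Making this $O(1)$ slack explicit, and checking it is absorbed by enlarging the constant in front of $\entropy(S)+1$, is the technical core; everything else is bookkeeping of geometric sums.
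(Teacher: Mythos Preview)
Your proposal has the right overall shape, but the ``threshold claim'' at its heart is false, and this is not a matter of $O(1)$ slack. You assert that once $r_j \gtrsim n/|S_k|$, the kd-tree cell through any $p\in S_k$ lies inside the safe box of $S_k$ (after $O(1)$ extra rounds), so that $p$ is pruned. But cell \emph{cardinality} dropping below $|S_k|$ says nothing about cell \emph{geometry}: a cell containing a point near the boundary of $S_k$'s box can straddle that boundary at every round, no matter how fine the partition becomes. The correct geometric fact---standard for kd-trees---is that any axis-aligned box is stabbed by $O(\sqrt{r_j})$ cells of a balanced $r_j$-cell kd-tree, not $O(1)$. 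The paper uses exactly this: the number of points of $S_k$ surviving round $j$ is at most $\min\{|S_k|,\, O(\sqrt{r_j}\cdot n/r_j)\}=\min\{|S_k|, O(n/\sqrt{r_j})\}$, and then the double sum $\sum_j\sum_k \min\{|S_k|, n/\sqrt{r_j}\}\cdot 2^{j+1}$ is split at $2^j\approx 2\log(n/|S_k|)$ with each half shown to be $O(|S_k|\log(n/|S_k|))$. No per-point threshold is ever established---individual points may survive arbitrarily many rounds---but the \emph{count} of survivors decays geometrically. For the same reason your claim that $Q^{\text{final}}$ consists only of singleton parts is unfounded; the paper instead simply observes that $|Q^{\text{final}}|\log|Q^{\text{final}}|\le O(n_{j_{\max}}\log r_{j_{\max}+1})$, which is already one term of the running sum.

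Two smaller issues. First, you miscompute $r_{\text{last}}$: with $j_{\max}=\lfloor\log(\delta\log n)\rfloor$ one has $2^{j_{\max}}\approx\delta\log n$, hence $r_{j_{\max}}=2^{2^{j_{\max}}}\approx n^{\delta}$, not $\delta\log n$; there are $\Theta(n^\delta)$ safety tests in total, and $\delta$ must be chosen \emph{small} (not large) so that $r_j^{O(1)}+r_j n^{1-\alpha}$ stays sublinear. Second, you cannot afford an $O(n\log n)$ preprocessing step for range-emptiness, since the target bound is $O(n(\entropy(S)+1))$ and $\entropy(S)$ can be $O(1)$. The paper instead invokes a trick of Chan that answers a batch of $r$ orthogonal range-emptiness queries on a set of size $m$ in $O(m\log r + r^{O(1)})$ time with no separate preprocessing; in round $j$ this gives $O(|Q|\log r_j + r_j^{O(1)})$ for the $O(r_j)$ queries on the current $Q$, which folds into the main sum plus a sublinear remainder.
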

We reiterate the proof for the sake of completeness and to underline how Observation \ref{obs:conform} and our additional assumptions on preprocessing factor into it:
\begin{proof}
By Observation \ref{obs:conform} and Lemma \ref{lemma:safe}, we only ever prune points which do not participate in the original set $S$ and we never modify which points participate among those that remain. Thus the algorithm invoked at line $6$ will compute the correct output. 

By assumption, testing a box for safety in $Q$ can be done in $O(n^{1-\alpha})$ time plus the cost of a constant number of range-emptiness queries on $Q$. Using a simple and ingenious trick by T.\ Chan  \cite{Chan-1996}, we can perform $r$ orthogonal range emptiness queries on a set of size $m$ in $O(m\log r + r^{O(1)})$ time. Thus, the $r_j$ tests of lines $3$ and $4$ can be done in $O(|Q|\log r_j + r_j^{O(1)} + r_j n^{1-\alpha})$ time. As $r_j < n^\delta$, by taking $\delta$ small enough the $r_j^{O(1)} + r_j n^{1-\alpha}$ term can be made sublinear. As the outer loop of the algorithm is only executed $O(\log(\log n))$ times the total contribution of these terms over the whole algorithm can also be made sublinear and thus negligible.
Line 2 can be done in $O(|Q|\log r_j)$ time by the classical recursive algorithm to compute kd-trees.

Now let $n_j$ be the number of points in $Q$ just after iteration $j$. The runtime of the algorithm is in $O(\sum_{j} n_j\log r_{j+1})$. This includes the final step at line $6$, as for $j=\floor{\log(\delta \log n)}$ (i.e.\ after the last iteration of the outer loop) we have 
$O(|Q|\log|Q|) \subset O(n_j\log n^{2\delta}) = O(n_j\log r_{j+1})$.
Let $\Pi$ be a respectful partition of $S$ and consider $S_k \in \Pi$. At iteration $j$ all subsets $Q_i$ lying entirely inside the bounding box of $S_k$ are pruned by Observation \ref{obs:subbox}. Since the bounding box of $S_k$ intersects at most $O(\sqrt{r_j})$ cells of the kd-tree, the number of points in $S_k$ remaining after iteration $j$ is $min\{|S_k|, O(\sqrt{r_j} \cdot n/r_j)\}$ = $min\{|S_k|, O(n/\sqrt{r_j})\}$.
The $S_k$'s cover the entire point set so by double summation we have:

\begin{align*}
\sum_j n_j \log r_{j+1} &\leq \sum_j \sum_k \min\left\{|S_k|, O\left(n/\sqrt{2^{2^j}}\right)\right\}\cdot 2^{j+1}\\
&= \sum_k \sum_j \min\left\{|S_k|, O\left(n/2^{2^{j-1}}\right)\right\}\cdot 2^{j+1}\\
&\in O\left(\sum_k \mspace{-50mu}\sum_{\mspace{70mu}j \leq \log (2\log (n/|S_k|))}\mspace{-70mu} |S_k|\cdot 2^{j} 
\mspace{10mu}+\mspace{-60mu}
\sum_{\mspace{70mu}j > \log (2\log (n/|S_k|))}\mspace{-70mu}n\cdot2^{j}/2^{2^{j-1}} \right)\\
&\in O\left(\sum_k |S_k|(\log(n/|S_k|)+1)\right) = O(n(\entropy(S)+1)). \qedhere
\end{align*}%
\end{proof}

\subsection{Cross-safety tree}

\begin{figure}
    \centering
    \includegraphics{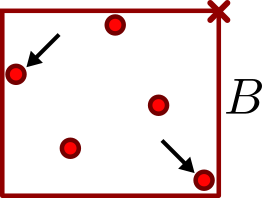}
    \caption{A cell of red points and the corresponding box $B$. The relevant points are indicated by arrows. The box point is indicated by a cross.}
    \label{fig:box_point}
\end{figure}

In order to solve our problem instance-optimally (in the order-oblivious comparison-based model), we want to design a data-structure which allows us to quickly test if a given axis aligned box $B$ is safe for $S$.
To make the presentation clearer, we focus on testing if $B$ is red-cross-safe and the $\NE$-minimal-set of $\NE(B)\cap S$ is red. This can then be repeated symmetrically for $\NW(B),\SE(B)$ and $\SW(B)$ to test if $B$ is red-safe (and similarly for testing if $B$ is blue-safe).  
We will see how to build and query the following data structure:

\begin{definition}
A \emph{cross-safety tree $T_S$ on $S$} is a recursive partitioning on the plane similar to a kd-tree where we stop subdividing the points once we have reached a cross-safe subset of points. The root of $\mathcal{T}_\text{red}$ corresponds to $S$. If $S$ is not cross-safe, we split the points around a vertical line $L$ such that the two open halfplanes defined by $L$ partition $S$ into two sets $S_1$ and $S_2$ of size at most $\ceil{|S|/2}$. The children of the root will then correspond to $S_1$ and $S_2$. For every newly created node we repeat the procedure, partitioning the set of points by median $x$ coordinates at even levels of the tree and median $y$ coordinates at odd levels, until the points contained are a cross-safe subset of $S$.

The \emph{cell} of a point $p$, denoted as $C_p$ is the the subset of points contained in the same leaf as $p$. A cell of $T_S$ is red (resp.\ blue) if the points it contains are red (resp.\ blue).

The \emph{box of a point $p$}, denoted as $B_p$ is the smallest axis-aligned box containing all points in $C_p$ (slightly extended to enforce our assumption of only considering boxes for which there are no points on the boundary of their four quadrants). A box of $T_S$ is red (resp.\ blue) if the points it contains are red (resp.\ blue).

The \emph{box-point of $p$} is the top-right point of the box of $p$, and has the same color as $p$.

A point is \emph{relevant} if it has the minimum $x$ or $y$ coordinate among all points in its box (or equivalently, in its cell).

(See \cref{fig:box_point} for an illustration of these definitions.)

Each node $u$ in the tree stores:
\begin{itemize}
    \item The set of point it contains, which we denote as $P_u$.
    \item The smallest axis-aligned bounding box of $P_u$, which we denote as $B_u$.
    \item The red box-points of minimum $x$ and $y$ coordinates among all box-points associated with a red point in $P_u$.
    \item The subset of all relevant blue points in the minimal set of $P_u$, sorted by $x$-coordinate.
\end{itemize}

\end{definition}

Note that if the points in a minimal set are sorted by $x$ coordinate then they are also sorted (in reverse order) by $y$ coordinate.

\subsection{Querying a cross-safety tree}

Before we see how to build a cross-safety tree on $S$ efficiently, let us see how we make queries on a node $u$ of $T_S$. A query consists of a lower range $R_L = [x_L,+\infty]\times [y_L,+\infty]$ and an upper range $R_U = [-\infty,x_U]\times [-\infty,y_U]$ such that $R_L\cap R_U \neq \emptyset$ and neither the boundaries of $R_L$ nor $R_U$ intersect any blue box of $T_S$. It returns:
\begin{itemize}
    \item $rx_u$, the minimum $x$-coordinate of any red box-points associated with a red point in $P_u \cap R_L$ (set to $+\infty$ if there are no red points in $P_u \cap R_L$).
    \item $ry_u$, the minimum $y$-coordinate of any red box-points associated with a red point in $P_u \cap R_L$ (set to $+\infty$ if there are no red points in $P_u \cap R_L$).
    \item $bx_u$, the minimum $x$-coordinate of any blue points in the minimal set of $P_u \cap R_L \cap R_U$ (set to $+\infty$ if there are no blue points in the minimal set of $P_u \cap R_L \cap R_U$).
    \item $by_u$, the minimum $y$-coordinate of any blue points in the minimal set of $P_u \cap R_L \cap R_U$ (set to $+\infty$ if there are no blue points in the minimal set of $P_u \cap R_L \cap R_U$).
\end{itemize}

Observe the following:
\begin{observation}
\label{obs:line-through-point}
If a horizontal or vertical line passes through a red point then it does not intersect any blue-cross-safe box. The same applies when ``red'' and ``blue'' are swapped.
\end{observation}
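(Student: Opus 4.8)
The plan is to derive this directly from the definitions of $\cross(B)$ and of a blue-cross-safe box, so the argument is essentially a one-line contradiction. Suppose $\ell$ is a vertical line passing through a red point $r \in S$, and suppose for contradiction that $\ell$ intersects some blue-cross-safe box $B$. Since $\ell$ meets $B$, the $x$-coordinate $x(r)$ lies in the interval $[x_{\min}(B), x_{\max}(B)]$ (in fact strictly inside it, since by our standing assumption no point of $S$ lies on the boundary of a quadrant of $B$). By the definition of $\cross(B)$ this means $r \in \cross(B)$, hence $r \in S \cap \cross(B)$. But $B$ is blue-cross-safe, so every point of $S \cap \cross(B)$ is blue, contradicting the fact that $r$ is red.

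The case of a horizontal line is identical, using the interval $[y_{\min}(B), y_{\max}(B)]$ and the other disjunct in the definition of $\cross(B)$; and the statement with ``red'' and ``blue'' interchanged follows by applying the same reasoning to red-cross-safe boxes. I do not expect any real obstacle here: the observation is close to a restatement of the definition of cross-safety, and its only purpose is to serve as a convenient lemma for the query procedure, guaranteeing that the query ranges $R_L$ and $R_U$ (whose boundaries are assumed to avoid blue boxes) may pass freely through red points without interfering with the relevant blue structure.
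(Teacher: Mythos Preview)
Your argument is correct and is exactly the intended one: the paper states this as an observation without proof, since it follows immediately from the definition of $\cross(B)$ and of blue-cross-safety, precisely as you unpack it.
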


\begin{observation}\label{obs:domination-in-safe-box}
Let $B$ be a red-cross-safe box. Then all points in $B$ dominate (resp. are dominated by) the same subset of blue points in $S$. The same applies when ``red'' and ``blue'' are swapped.
\end{observation}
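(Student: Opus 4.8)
The plan is to read off the conclusion directly from the definition of a red-cross-safe box and the geometry of the quadrants of $B$. Since $B$ is red-cross-safe, every point of $S$ in $\cross(B)$ is red; equivalently, no blue point of $S$ lies in $\cross(B)$. By definition the quadrants $\NE(B),\NW(B),\SE(B),\SW(B)$ are exactly the connected components of $\R^2\setminus\cross(B)$, so every blue point of $S$ belongs to one of these four quadrants.

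The crux is the following: for a blue point $b$ lying in a fixed quadrant of $B$, the domination relationship between $b$ and a point $p$ of $B$ is the same for \emph{every} $p\in B$. Indeed, since $B\subseteq\cross(B)$, any $p\in B$ satisfies $x_{\min}(B)\le x(p)\le x_{\max}(B)$ and $y_{\min}(B)\le y(p)\le y_{\max}(B)$. If $b\in\NE(B)$ then $x(b)>x_{\max}(B)\ge x(p)$ and $y(b)>y_{\max}(B)\ge y(p)$, so $b$ dominates $p$ (and is not dominated by $p$); symmetrically, $b\in\SW(B)$ gives that $b$ is dominated by $p$ (and dominates no point of $B$); and $b\in\NW(B)$ or $b\in\SE(B)$ makes $b$ and $p$ incomparable. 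These conclusions depend only on which quadrant contains $b$, not on the choice of $p\in B$. Hence every point of $B$ dominates exactly the blue points of $S$ in $\SW(B)$, and is dominated by exactly the blue points of $S$ in $\NE(B)$, proving the claim. The statement with ``red'' and ``blue'' swapped follows identically, since a blue-cross-safe box contains no red point of $S$ in its cross.

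I do not expect a real obstacle here; the only point requiring a little care is the boundary convention. We have assumed that no point of $S$ lies on the boundaries of the four quadrants of any box we consider, which is exactly what guarantees that each blue point lies strictly inside one quadrant and that the strict inequalities above hold, so the argument goes through cleanly.
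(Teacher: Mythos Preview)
Your argument is correct. The paper states this observation without proof, and your unpacking of the definitions—blue points must lie in one of the four quadrants since $\cross(B)$ contains none, and the quadrant alone determines the domination relation with any $p\in B$—is exactly the intended justification.
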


We will need a few additional lemmas.
\begin{lemma}\label{lemma:relevant-points-suffice}
The points corresponding to $bx_u$ and $by_u$ are relevant points of $T_S$.
\end{lemma}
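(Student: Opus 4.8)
The plan is to show that if a point $q$ realizes $bx_u$ (i.e.\ $q$ is the blue point of minimum $x$-coordinate in the minimal set of $P_u\cap R_L\cap R_U$), then $q$ must already have minimum $x$-coordinate within its own box $B_q$ in $T_S$, hence is relevant; symmetrically for $by_u$. First I would recall that each leaf cell of $T_S$ is cross-safe, so $B_q$ is a blue-cross-safe box (it contains only blue points). By Observation~\ref{obs:domination-in-safe-box} applied with colors swapped, all points in $B_q$ dominate exactly the same subset of points of the opposite color, and more importantly the query ranges $R_L$ and $R_U$ have boundaries that (by the standing assumption on queries) do not cross any blue box; so either all of $C_q$ lies in $R_L\cap R_U$, or none of it does. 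Since $q\in P_u\cap R_L\cap R_U$, the entire cell $C_q$ lies in $R_L\cap R_U$ — this is the key point that lets us reason about $C_q$ as a whole.

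Next I would argue by contradiction: suppose $q$ is not relevant, so there is a point $q'\in C_q$ with $x(q')<x(q)$. Since $q$ and $q'$ share a leaf box, the box $B_q$ is cross-safe, and $q'$ also lies in $P_u\cap R_L\cap R_U$ (by the previous paragraph $C_q\subseteq R_L\cap R_U$, and $x(q')<x(q)$ together with $q$ being in the minimal set means $q'$ is not dominated "away" from $R_L$ — here I need $x(q')\ge x_L$, which follows because $q'\in C_q$ and the left boundary of $R_L$ does not cut $B_q$, so $B_q\subseteq [x_L,\infty)\times\mathbb{R}$). Now I claim $q'$ contradicts $q$ being the minimum-$x$ element of the minimal set: if $q'$ is itself in the minimal set of $P_u\cap R_L\cap R_U$ we are immediately done since $x(q')<x(q)$; if not, $q'$ is dominated by some point $q''$ in that minimal set, but then $x(q'')\le x(q')<x(q)$, again contradicting minimality of $x(q)$ among minimal-set points. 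Either way we reach a contradiction, so $q$ is relevant. The argument for $by_u$ is identical with the roles of $x$ and $y$ interchanged, using that the points of a minimal set are sorted by $x$ and reverse-sorted by $y$.

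The main obstacle I anticipate is being careful about the interaction between the query ranges and the cells: one must genuinely use the hypothesis that the boundaries of $R_L$ and $R_U$ do not intersect any blue box of $T_S$ to conclude $C_q\subseteq R_L\cap R_U$ (rather than just $q\in R_L\cap R_U$), and one must also handle the degenerate-coordinate conventions — the slight extension of $B_q$ in the definition of the box of a point, and the assumption that no point of $S$ lies on the boundary of the four quadrants of any box we consider — so that "minimum $x$-coordinate in its box" and "minimum $x$-coordinate in its cell" really coincide as claimed in the definition of relevant. Beyond that bookkeeping the argument is short: the crux is the "all or nothing" containment of a cell in the query region, which turns a statement about the minimal set of $P_u\cap R_L\cap R_U$ into a statement about full cells, at which point cross-safety of the cell forces the minimizing point to be a corner (relevant) point.
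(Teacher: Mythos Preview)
Your approach is sound and close in spirit to the paper's, but there is one small gap in the contradiction step and one direction slip worth flagging.

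The slip: you write ``$q'$ is dominated by some point $q''$ in that minimal set''. Since $q'$ fails to be minimal, what you actually get is that $q'$ \emph{dominates} some point of $P$, hence some minimal point $q''$; your inequality $x(q'')\le x(q')$ is consistent with this correct direction, so this looks like a wording slip rather than a conceptual error.

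The gap: you conclude by ``contradicting minimality of $x(q)$ among minimal-set points'', but $bx_u$ is the minimum $x$-coordinate among \emph{blue} points of the minimal set, not among all minimal points. You therefore need $q''$ to be blue, and nothing you have written rules out $q''$ being red. This is precisely where the cross-safety of $B_q$---which you invoke in your first paragraph via Observation~\ref{obs:domination-in-safe-box} but then never actually use---does its work: since $B_q$ is blue-cross-safe, any red point dominated by $q'\in B_q$ must lie in $\SW(B_q)$ and is therefore also dominated by $q$; but $q$ is minimal in $P$, a contradiction. Hence $q''$ is blue and your argument goes through. So the missing sentence is short, but without it the case split is incomplete.

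For comparison, the paper takes a slightly different route: it first characterizes the realizing point as the leftmost point of $P$ that dominates no red point of $P$, proves this coincides with $bx_u$, and only then uses cross-safety of its box together with the ``whole box lies in $P$'' fact to conclude it is leftmost in its box. Your route (start from the realizer $q$, pick a cellmate $q'$ with smaller $x$, push down to a minimal point) is a bit more direct once the color of $q''$ is pinned down; both arguments hinge on the same two ingredients you identified, namely that the cell lies entirely in $P_u\cap R_L\cap R_U$ and that cross-safety makes all cellmates dominate the same red points.
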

\begin{proof}
Let $P = P_u \cap R_L \cap R_U$, and suppose there is a blue point on the minimal set of $P$.
Let $p$ be the leftmost point in $P$ which does not dominate any red point in $P$. Suppose that $bx_u$ is not equal to $x(p)$. The only way this can happen is if $p$ is not on the minimal set of $P$, meaning that there is a blue point $q$ which is dominated by $p$ (as $p$ does not dominate any red point). In particular, $q$ lies to the left of $p$. By definition of $p$, $q$ thus dominates a red point. But if $q$ dominates a red point and $p$ dominates $q$, then $p$ dominates a red point, which contradicts the definition of $p$. Thus $bx_u=x(p)$.
Moreover, if $p$ does not dominate any red point, then no point in its box dominates a red point, as the box is blue-cross-safe. Because the boundaries of $R_L$ and $R_U$ do not intersect any blue box, the whole box of $p$ is contained in $P$. Thus, by definition of $p$, it is the leftmost point in its box and it is relevant.
The same reasoning shows that $by_u$ is the $y$ coordinate of a relevant point.
\end{proof}

\begin{lemma}\label{lemma:case-inside}
If the bounding box $B_u$ of points in $P_u$ lies entirely in $R_L$, then we can return the necessary information in $O(\log n)$ time.
\end{lemma}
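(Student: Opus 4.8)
\textbf{Proof proposal for Lemma \ref{lemma:case-inside}.}

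The plan is to observe that when $B_u \subseteq R_L$, the condition ``$p \in P_u \cap R_L$'' is satisfied by \emph{every} point of $P_u$, so the quantities $rx_u$ and $ry_u$ reduce to exactly the precomputed values stored at the node $u$ (the red box-points of minimum $x$ and $y$ coordinate among all box-points associated with a red point in $P_u$). These can be read off in $O(1)$ time. The only remaining work is to compute $bx_u$ and $by_u$, i.e.\ the minimum $x$- and $y$-coordinates of the blue points on the minimal set of $P_u \cap R_U$ (the intersection with $R_L$ being now vacuous).

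For this I would use the sorted list of relevant blue points of the minimal set of $P_u$ stored at $u$. By Lemma \ref{lemma:relevant-points-suffice} the points realizing $bx_u$ and $by_u$ are among the relevant points, and since intersecting with $R_U$ only removes points that are ``too far to the upper right'', the minimal set of $P_u \cap R_U$ is obtained from the minimal set of $P_u$ by discarding a prefix and a suffix of this sorted list (discarding points with $x$-coordinate $> x_U$ from the high-$x$/low-$y$ end, and points with $y$-coordinate $> y_U$ from the low-$x$/high-$y$ end), plus possibly the appearance of new minimal points that were previously dominated only by now-removed points. I would argue that in fact no genuinely new minimal points appear: if a blue point $q \in P_u$ with $q \in R_U$ was dominated in $P_u$ only by points outside $R_U$, then a dominating point $q'$ has $x(q') \geq x(q)$, $y(q') \geq y(q)$ and $q' \notin R_U$, so $q'$ dominates a point of the minimal set of $P_u$ that also lies in $R_U$ (using that $R_U$ is a lower-left quadrant, the minimal-set point below $q'$ in the $x$- or $y$-direction is still in $R_U$) — one has to be a little careful here, but the upshot is that $bx_u$ and $by_u$ are realized by points of the stored minimal set that survive the $R_U$-clipping. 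Then a single binary search in the sorted list for the threshold $x_U$ (equivalently $y_U$, since the list is simultaneously sorted by $x$ increasing and $y$ decreasing) locates the surviving sublist, and $bx_u$, $by_u$ are its first and last entries' coordinates; this costs $O(\log n)$.

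The step I expect to be the main obstacle is the clipping argument for the minimal set under intersection with $R_U$: one must verify cleanly that removing the points outside $R_U$ does not expose blue points that were previously dominated, so that the answer really is a contiguous range of the precomputed sorted minimal set, and that the endpoints of this range give exactly $bx_u$ and $by_u$. The subtlety is that $P_u \cap R_U$ could, a priori, have minimal-set blue points that were not minimal in $P_u$; one needs the geometry of $R_U$ being an axis-aligned lower-left quadrant (together with Observation \ref{obs:line-through-point}, guaranteeing the boundary of $R_U$ misses all blue boxes, so that a whole blue box is either kept or dropped) to rule this out. Once that is settled, the $O(\log n)$ bound is immediate from the binary search, and the $O(1)$ retrieval of $rx_u$, $ry_u$ needs nothing beyond the observation $B_u \subseteq R_L \implies P_u \subseteq R_L$.
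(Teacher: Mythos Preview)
Your proposal is correct and follows essentially the same approach as the paper: read $rx_u,ry_u$ directly from the node in $O(1)$, then binary-search the stored sorted list of relevant blue points on the minimal set of $P_u$ to extract $bx_u$ and $by_u$ in $O(\log n)$.

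The one place you hesitate---whether clipping by $R_U$ can expose new blue minimal points not in the minimal set of $P_u$---is in fact immediate and the paper does not even pause on it: since $R_U=(-\infty,x_U]\times(-\infty,y_U]$ is downward-closed under domination, any point dominated by a point of $R_U$ is itself in $R_U$, so $(\text{minimal set of }P_u\cap R_U)=(\text{minimal set of }P_u)\cap R_U$. No appeal to Observation~\ref{obs:line-through-point} or to blue boxes is needed here. The paper's binary search is phrased slightly differently (search for the largest $y\le y_U$, then check whether that point also has $x\le x_U$, else return $+\infty$), which avoids explicitly identifying a ``surviving sublist''; your version would in general need two binary searches (one per side), not one, but that is still $O(\log n)$.
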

\begin{proof}
In this case, $rx_u$ and $ry_u$ are already stored in the node, so we can return them in constant time. 
By Lemma \ref{lemma:relevant-points-suffice}, the point giving $bx_u$ is the relevant blue point with minimum $x$ coordinate among all blue points in the minimal set of $P_u \cap R_U$. This is also the relevant blue point with maximum $y$ coordinate among all blue points in the minimal set of $P_u \cap R_U$. 

We can do a binary search through the relevant blue points in the minimal set of $P_u$ to find the point $p$ below the line $y=y_U$ with maximum $y$ coordinate. If this point is not in $R_U$ (because $x(p) > x_U$) then no relevant blue point in the minimal set of $P_u$ is in $R_U$, as all other relevant blue points $q$ in the minimal set of $P_u$ with $y(q)\leq y_U$ will have $x(q) > x(p) > x_U$. In this case we set $bx_u$ to $+\infty$. Otherwise $bx_u = x(p)$.

We can find $by_u$ similarly with a single binary search.
\end{proof}

\begin{lemma}\label{lemma:case-interior-node}
If $u$ is not a leaf of $T_S$ and $B_u$ intersects the boundary of $R_L$, then we can return the necessary information after querying the children of $u$ with the same lower range $R_L$ (but possibly different upper ranges) and a constant amount of additional work.
\end{lemma}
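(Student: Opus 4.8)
\textbf{Proof plan for Lemma \ref{lemma:case-interior-node}.}

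The plan is to recurse on the two children $u_\ell$ and $u_r$ of $u$ and then combine their answers. The main subtlety is that the query's \emph{lower} range $R_L$ can stay the same in both recursive calls, but the \emph{upper} range $R_U$ may need to be restricted when passed down, so that the induced sub-queries remain well-formed (i.e.\ $R_L \cap R_U' \neq \emptyset$ and the boundaries of $R_U'$ still avoid every blue box of $T_S$). First I would dispose of the easy part: the quantities $rx_u, ry_u$ depend only on $P_u \cap R_L$, which partitions cleanly as $(P_{u_\ell}\cap R_L) \cup (P_{u_r}\cap R_L)$, so $rx_u = \min\{rx_{u_\ell}, rx_{u_r}\}$ and similarly for $ry_u$; these recursive calls can use $R_U$ unchanged (it is irrelevant to these two outputs anyway). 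The harder part is $bx_u$ and $by_u$, which concern the minimal set of $P_u \cap R_L \cap R_U$.

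For the blue quantities, the key observation is that the minimal set of a union is contained in the union of the minimal sets: a point minimal in $P_u \cap R_L \cap R_U$ is also minimal within whichever child's contribution it comes from. So it suffices to compute, for each child, the blue point of minimum $x$-coordinate (resp.\ $y$-coordinate) that is minimal in that child's portion, and then check which of the (at most two) candidates survives as minimal in the combined set — a comparison against the other child's extremal points suffices to decide domination, since all points of $P_{u_\ell}$ have smaller $x$-coordinate than all points of $P_{u_r}$ (the split at $u$ being by $x$-coordinate, or symmetrically by $y$ at an odd level). Concretely, if the split at $u$ is vertical with $P_{u_\ell}$ on the left, a blue point $p \in P_{u_\ell}$ that is minimal in $P_{u_\ell}\cap R_L\cap R_U$ stays minimal in the full set unless some point of $P_{u_r}\cap R_L \cap R_U$ is dominated by it, which cannot happen since those points have larger $x$-coordinate; conversely a minimal blue point of $P_{u_r}$ might be dominated by a point of $P_{u_\ell}$, and one comparison of coordinates against the returned extremal blue data of $u_\ell$ settles this. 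This is $O(1)$ combine work once the two recursive answers are in hand. The symmetric argument handles the case of a horizontal split.

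The one genuine obstacle is ensuring the recursive calls are \emph{legal} queries. When we pass $R_L$ down unchanged and want the child's answer about $P_{u_i}\cap R_L \cap R_U$, we need an upper range $R_U'$ with $R_U' \cap P_{u_i}$ agreeing with $R_U \cap P_{u_i}$ on the relevant portion, with $R_L \cap R_U' \neq \emptyset$, and with $\partial R_U'$ missing every blue box. If $R_U \cap R_L$ already meets $B_{u_i}$ we can just pass $R_U$ itself. If not — say $B_{u_i}$ lies entirely above or to the right of $R_U$ — then $P_{u_i} \cap R_L \cap R_U = \emptyset$ and the child should simply return $+\infty$ for the blue outputs without recursing (we can detect this from $B_{u_i}$ and $R_U$ in $O(1)$ time); if instead $B_{u_i}$ lies entirely inside $R_U$, we are in the situation of Lemma \ref{lemma:case-inside} applied to that child (possibly after shrinking $R_U$ to a slightly smaller range whose boundary still avoids blue boxes, which is possible precisely because $\partial R_U$ already avoids them and blue boxes are closed with non-blue points strictly separated by Observation \ref{obs:line-through-point}). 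In every remaining case $\partial R_U$ genuinely cuts $B_{u_i}$, and passing $R_U$ down is legal by hypothesis. Thus each child is queried at most once, with a legal query, and the combine step is $O(1)$, which is exactly what the statement asserts.
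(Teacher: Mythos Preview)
Your plan has a genuine gap in the combination step for the blue quantities $bx_u$ and $by_u$.

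First a small slip: you write that a minimal blue point of the right child ``might be dominated by a point of $P_{u_\ell}$'', but with a vertical split every point of $P_{u_r}$ has larger $x$-coordinate than every point of $P_{u_\ell}$, so no left-child point can dominate it. The danger is the reverse: the right-child candidate may \emph{dominate} some point of the left child and hence fail to lie in the minimal set of the union.

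The substantive problem is that ``one comparison against the returned extremal blue data of $u_\ell$'' does not settle this. The point of $P_{u_\ell}\cap R_L\cap R_U$ being dominated may well be \emph{red}, and your comparison only looks at blue data. Worse, even if you also compare against $ry_{u_\ell}$, all you learn is that the particular candidate returned by the right child is disqualified; you have no way to retrieve the \emph{next} right-child candidate that does survive. Concretely, take $P_{u_\ell}=\{(1,5)\}$ red and $P_{u_r}=\{(3,7),(4,3)\}$ blue (all inside $R_L\cap R_U$). Querying $u_r$ with $R_U$ unchanged gives $bx_{u_r}=3$, but $(3,7)$ dominates $(1,5)$ and is not minimal in the union, while $(4,3)$ is. The correct answer is $bx_u=4$, which your procedure cannot produce from the information at hand.

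The paper's proof avoids this by making the two child queries \emph{sequential} rather than independent: it first queries the left child $v$ with $(R_L,R_U)$, reads off $ry_v$, and then queries the right child $w$ with the shrunken upper range $R'_U=[-\infty,x_U]\times[-\infty,\min\{y_U,ry_v\}]$. This clips away precisely those right-child points that would dominate a red point on the left, so the returned $bx_w$ is already the correct survivor and $bx_u=\min\{bx_v,bx_w\}$ is valid. The legality of $R'_U$ (its boundary avoiding every blue box) is exactly what Observation~\ref{obs:line-through-point} provides, since the new horizontal boundary sits at the $y$-coordinate of a red box-point. This use of the \emph{red} return values from one child to reshape the upper range passed to its sibling is the key mechanism you are missing; your discussion of legality, which only contemplates passing $R_U$ through unchanged, never reaches it.
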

\begin{proof}
Suppose without loss of generality that the children of $u$ split $P_u$ by a vertical line (the situation for a horizontal line is similar). Let $v$ be the child corresponding to the left half-plane and $w$ the one corresponding to the right half-plane.
Let us focus on computing $bx_u$, as this is the one requiring the most care.

Querying $v$ with the same $R_L$ and $R_U$ returns some values $rx_v$, $ry_v$, $bx_v$ and $by_v$. A blue point $p$ on the minimal set of $P_w\cap R_L \cap R_U$ is a blue point on the minimal set of $P_u\cap R_L \cap R_U$ if and only if $y(p) \leq by_v$ and $p$ does not dominate any red point in $P_v\cap R_U$. By Observation \ref{obs:domination-in-safe-box} this is equivalent to saying that $y(p) \leq by_v$ and $y(p) \leq ry_v$. 

Thus we can compute $bx_u$ by setting $R'_U = [-\infty,x_U]\times [-\infty,\min\{y_U, ry_v\}]$, querying $w$ with $R_L$ and $R'_U$ to get values $rx_w$, $ry_w$, $bx_w$ and $by_w$, then setting $bx_u = \min\{bx_v, bx_w\}$. Notice that by Observation \ref{obs:line-through-point}, we are allowed to query $w$ with $R'_U$ as its boundary does not intersect any blue box of $T_S$.
It is then easy to see that $rx_u = \min\{rx_v, rx_w\}$, $ry_u = \min\{ry_v, ry_w\}$ and $by_u = \min\{by_v, by_w\}$.
\end{proof} 

\begin{lemma}\label{lemma:case-leaf-side}
If $u$ is a leaf of $T_S$ and $B_u$ intersects the lower boundary or the left boundary of $R_L$ but not both simultaneously, then we can return the necessary information in constant time.
\end{lemma}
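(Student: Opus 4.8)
The plan is to handle Lemma~\ref{lemma:case-leaf-side} by the same relevant-point reasoning as in Lemma~\ref{lemma:case-inside}, but now exploiting the fact that $u$ is a leaf (so $P_u$ is cross-safe) together with the hypothesis that $B_u$ crosses exactly one of the two lower sides of $R_L$. I would split into two symmetric cases: (i) $B_u$ meets the left boundary $x=x_L$ of $R_L$ but lies entirely above $y=y_L$, and (ii) $B_u$ meets the bottom boundary $y=y_L$ of $R_L$ but lies entirely to the right of $x=x_L$. By symmetry it suffices to treat case (i), say, and then simply indicate that case (ii) follows by exchanging the roles of $x$ and $y$.

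In case (i), the box $B_u$ is strictly above $y=y_L$, so membership of a point of $P_u$ in $R_L$ is decided purely by its $x$-coordinate being $\ge x_L$. Thus $P_u\cap R_L\cap R_U$ is obtained from the whole cell by chopping off the points with $x<x_L$, those with $x>x_U$, and those with $y>y_U$. The key structural fact is that the cell is cross-safe, so (by Observation~\ref{obs:line-through-point} the query ranges never cut a blue box, and since this is a single leaf) either the cell is red or it is blue. If the cell is red, there are no blue points in it at all, so $bx_u=by_u=+\infty$, and $rx_u,ry_u$ are just the minima over the red box-points of those points of the cell lying in $R_L$; since $R_L$ restriction is a single threshold on $x$, and the cell has constant... well, the cell is stored explicitly at the node, so scanning it is constant time per point. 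The subtle point is that ``constant time'' is claimed, which forces us to observe that a leaf cell need not have constant size; instead we must argue that the stored summaries (the two extreme red box-points and the sorted list of relevant blue points of the minimal set) suffice, exactly as in Lemma~\ref{lemma:case-inside}.

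Concretely, for $rx_u$ and $ry_u$: since the only constraint from $R_L$ is $x\ge x_L$ and $B_u$ already lies in the half-plane $y\ge y_L$, I claim the red box-point achieving the global minimum $x$ among red points of $P_u$ still lies in $P_u\cap R_L$ unless $x_L$ exceeds it, and in the latter degenerate situation one can show (using that a red-cross-safe cell's points all dominate the same blue set, Observation~\ref{obs:domination-in-safe-box}, hence their box-points are monotone along the cell) that the relevant extreme is again recoverable in $O(1)$; I would spell this out by noting the box-points of a monochromatic cross-safe cell are sorted consistently, so a threshold cut leaves a stored endpoint valid or makes the answer $+\infty$. For $bx_u$ and $by_u$: by Lemma~\ref{lemma:relevant-points-suffice} only relevant blue points of the minimal set matter, and the left-boundary cut $x\ge x_L$ together with the upper range $R_U$ is exactly the type of query already solved in Lemma~\ref{lemma:case-inside} — indeed here the situation is easier because $R_L$ contributes only one threshold. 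One inspects the stored sorted list of relevant blue minimal-set points, and the answer is determined by its extreme element (the one of smallest $x$, equivalently largest $y$) after discarding those violating $x\ge x_L$, $x\le x_U$, or $y\le y_U$; since the list is sorted, this is $O(1)$ — indeed we do not even need the binary search of Lemma~\ref{lemma:case-inside} because the single active $R_L$-threshold interacts monotonically with the $R_U$-thresholds.

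The main obstacle I anticipate is being careful about \emph{why} it is constant time rather than $O(\log n)$: the crucial leverage is the hypothesis ``not both simultaneously'', which guarantees that along one axis the cell is entirely inside $R_L$, collapsing the two-sided query of the general leaf case into a one-sided one, and the monochromaticity of a leaf cell eliminates the interaction between red and blue summaries that forced the binary search in Lemma~\ref{lemma:case-inside}. I would make sure the write-up isolates this observation first, then dispatch the four output values ($rx_u,ry_u,bx_u,by_u$) in a short paragraph each, and close by remarking that the fully general leaf case (both boundaries crossed) is deferred to the next lemma.
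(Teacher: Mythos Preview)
Your plan overlooks the two observations that make the paper's proof a three-line argument, and without them your outline does not actually deliver constant time.

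First, the query contract (stated just before Lemma~\ref{lemma:relevant-points-suffice}) stipulates that the boundary of $R_L$ does not intersect any \emph{blue} box of $T_S$. Since $B_u$ is a leaf box that meets the boundary of $R_L$, the cell is necessarily red. There is no blue case to analyse; $bx_u=by_u=+\infty$ is immediate. Your treatment of a hypothetical blue leaf, with sorted relevant-point lists and threshold interactions, is arguing about a situation that cannot arise.

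Second, you speak of ``the red box-point achieving the global minimum $x$'' and of box-points being ``monotone along the cell'', but by definition every point in a leaf has the \emph{same} box-point, namely the top-right corner of $B_u$. So $rx_u$ and $ry_u$ are either both $+\infty$ (if $P_u\cap R_L=\emptyset$) or both equal to the coordinates of that single corner. The hypothesis ``not both boundaries simultaneously'' is what guarantees non-emptiness: if only the left boundary is crossed, the rightmost point of $B_u$ already lies in $R_L$, and symmetrically for the bottom boundary. That is the whole argument; no monotonicity, no stored extrema, no thresholds are needed.
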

\begin{proof}
In this case, we know that $B_u$ is a red box, as by assumption the boundary of $R_L$ does not intersect any blue box of $T_S$. Thus $B_u$ contains no blue points and we know $bx_w=by_w=+\infty$.
Suppose without loss of generality that $B_u$ intersects the left boundary of $R_L$. Then the rightmost point of $B_u$ is in $R_L$, and thus $P_u\cap R_L$ is non-empty. Because all points in $P_u$ have the same box-point $p=(p_x,p_y)$, we have $rx_u = p_x$ and $ry_u = p_y$.
\end{proof}

\begin{lemma}\label{lemma:case-leaf-corner}
If $u$ is a leaf of $T_S$ and $B_u$ intersects both the lower boundary and the left boundary of $R_L$, then we can return the necessary information after a single orthogonal range-emptiness test.
\end{lemma}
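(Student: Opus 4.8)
The plan is to treat this as the base case of the recursive query and reduce the whole query to a single range-emptiness test. First I would observe that $B_u$ must be a \emph{red} box: by hypothesis the boundary of $R_L$ meets $B_u$, and by the standing assumption on the query the boundary of $R_L$ is disjoint from every blue box of $T_S$; since a leaf cell of $T_S$ is cross-safe and hence monochromatic, $B_u$ is red and $P_u$ consists only of red points. Consequently the minimal set of $P_u\cap R_L\cap R_U$ contains no blue point, so we may immediately set $bx_u=by_u=+\infty$ with no work.

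It then remains to produce $rx_u$ and $ry_u$, and here I would use that all points of the cell $C_p=P_u$ share a single box-point, namely the top-right corner $p=(p_x,p_y)$ of $B_u$. Hence $rx_u=p_x$ and $ry_u=p_y$ as soon as $P_u\cap R_L\neq\emptyset$, and $rx_u=ry_u=+\infty$ otherwise, so the entire query collapses to deciding whether $P_u\cap R_L$ is empty. To turn this into a range query I would note that the bounding box $B_u$ of a leaf is contained in that leaf's kd-tree region, so every point of $S$ (and hence of any conforming subset we might actually be querying) that lies inside $B_u$ already belongs to $P_u$ and therefore carries the box-point $p$; together with $P_u\subseteq B_u$, this reduces ``$P_u\cap R_L=\emptyset$?'' to ``does the queried point set have a point in $B_u\cap R_L$?''. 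Finally, because $B_u$ crosses both the lower line $y=y_L$ and the left line $x=x_L$ of $R_L$, the corner $(x_L,y_L)$ lies inside $B_u$, so $B_u\cap R_L$ is exactly the axis-aligned box $[x_L,x_{\max}(B_u)]\times[y_L,y_{\max}(B_u)]$ rather than an L-shaped region; deciding its emptiness is a single orthogonal range-emptiness test, after which the four return values are produced in constant time.

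The step I expect to require the most care is the justification that the range-emptiness test may be answered against the subset we are actually querying rather than against $S$ itself: one has to check that any such point inside $B_u$ necessarily lies in the cell $C_p$ (so that it is red and its box-point is $p$), which is what makes ``detect any point in $B_u\cap R_L$'' suffice to output $(p_x,p_y)$ for $(rx_u,ry_u)$. Everything else is routine unwinding of the definitions of box-point, relevant point, and the query's specification, together with the elementary geometric observation above that $B_u$ straddling both boundary lines of $R_L$ makes $B_u\cap R_L$ an axis-aligned box.
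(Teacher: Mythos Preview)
Your proposal is correct and follows essentially the same argument as the paper: $B_u$ is red because the boundary of $R_L$ avoids blue boxes, so $bx_u=by_u=+\infty$; all points in $P_u$ share the box-point $p=(p_x,p_y)$, so $rx_u,ry_u$ are determined by a single emptiness test on $B_u\cap R_L$. The extra geometric remark that $B_u\cap R_L$ is a box (not L-shaped) and the concern about querying a conforming subset rather than $S$ are not part of the paper's proof of this lemma; the latter is handled separately in Lemma~\ref{lemma:cross-safety-tree-conform}, so you need not worry about it here.
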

\begin{proof}
Again, we know that $B_u$ is a red box and all points in $P_u$ have the same box point $p=(p_x,p_y)$. To know if we need to set $rx_u = ry_u = +\infty$ or $rx_u = p_x$ and $ry_u = p_y$, we simply need to test if there is a (necessarily red) point in $B_u\cap R_L$. This requires a single range-emptiness test.
\end{proof}

By applying the relevant result among Lemmas \ref{lemma:case-inside}, \ref{lemma:case-interior-node}, \ref{lemma:case-leaf-side} and \ref{lemma:case-leaf-corner} recursively we get:
\begin{theorem}
We can query the root of a cross-safety tree $T_S$ in $O(\sqrt{n}\log n)$ time plus the cost of a single range-emptiness test.
\end{theorem}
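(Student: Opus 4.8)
The plan is to analyze the recursion tree of the query procedure obtained by applying Lemmas~\ref{lemma:case-inside}, \ref{lemma:case-interior-node}, \ref{lemma:case-leaf-side} and \ref{lemma:case-leaf-corner}, and to bound separately (i) the number of nodes visited and (ii) the cost incurred at each node. First I would observe that the recursion, started at the root with the given query $(R_L,R_U)$, descends into a node $u$ only when $B_u$ intersects the boundary of $R_L$ (Lemma~\ref{lemma:case-interior-node}); otherwise either $B_u$ lies entirely in $R_L$ and we stop by Lemma~\ref{lemma:case-inside}, or $B_u$ is disjoint from $R_L$ and the node contributes nothing. So the ``active'' part of the recursion tree is exactly the set of nodes of $T_S$ whose bounding box is crossed by the boundary of the orthogonal region $R_L$ (a two-sided ``staircase corner'').

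The key combinatorial step is the standard kd-tree range-query bound: in a kd-tree on $n$ points, the number of cells crossed by an axis-parallel line is $O(\sqrt n)$, and since $R_L$ is an orthogonal quadrant its boundary is the union of two axis-parallel rays, so the number of tree nodes at any fixed depth whose box meets $\partial R_L$ is $O(\sqrt n)$ as well. Here I would be a little careful, because $T_S$ is not a full kd-tree down to singletons — it stops at cross-safe cells — but it is still built by alternating median splits on $x$ and $y$, so the classical analysis (a node at depth $d$ crossed by a vertical line has both children crossed only when the split is horizontal, giving the $\sqrt n$ recurrence) goes through unchanged; truncating the tree early only removes nodes. Since the tree has depth $O(\log n)$, this gives $O(\sqrt n \log n)$ active nodes total. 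At each such node the recursion does $O(1)$ extra work (Lemmas~\ref{lemma:case-interior-node}, \ref{lemma:case-leaf-side}), except that at each leaf where we stop "from inside" we may pay $O(\log n)$ for the binary searches of Lemma~\ref{lemma:case-inside}. But Lemma~\ref{lemma:case-inside} applies at a node $u$ precisely when $B_u \subseteq R_L$, i.e.\ when $u$ is a \emph{maximal} subtree root with this property — its parent's box is crossed by $\partial R_L$ — so the number of such nodes is at most the number of active nodes, $O(\sqrt n \log n)$, and their total cost is $O(\sqrt n \log^2 n)$; absorbing the stated bound as $O(\sqrt n \log n)$ would need a slightly tighter count (see below). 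The single range-emptiness test of Lemma~\ref{lemma:case-leaf-corner} is incurred at most once, at the unique leaf containing the corner $(x_L,y_L)$ of $R_L$, which accounts for the additive "plus one range-emptiness test" in the statement. Finally I would note the recursion is legitimate: Lemma~\ref{lemma:case-interior-node} only ever shrinks $R_U$ (replacing $y_U$ by $\min\{y_U, ry_v\}$) and Observation~\ref{obs:line-through-point} guarantees the new boundary still avoids all blue boxes, so every recursive call satisfies the query precondition.

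The main obstacle I anticipate is getting the logarithmic factors exactly right to land on $O(\sqrt n \log n)$ rather than $O(\sqrt n \log^2 n)$. The honest fix is to sharpen the count of nodes where Lemma~\ref{lemma:case-inside} fires: these are the roots of the $O(\sqrt n)$ maximal subtrees hanging off the two ``staircase'' boundary paths, and the boundary of $R_L$ crosses $O(\sqrt n)$ \emph{leaves}, so one argues that the crossed nodes of $T_S$ form $O(\sqrt n)$ root-to-leaf paths, each of length $O(\log n)$, and the maximal-inside subtree roots are the $O(\sqrt n)$ siblings hanging off these paths; paying $O(\log n)$ at each of those $O(\sqrt n)$ nodes gives $O(\sqrt n \log n)$, and paying $O(1)$ at the $O(\sqrt n \log n)$ path nodes gives $O(\sqrt n \log n)$ as well. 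Summing, the query costs $O(\sqrt n \log n)$ plus one range-emptiness test, as claimed. I would also double-check the base of the recursion handles correctly the degenerate cases where $B_u$ is disjoint from $R_L$ (return $+\infty$'s, cost $O(1)$) and where $u$ is the whole-plane root with $B_u$ not yet crossing $\partial R_L$, but these are routine.
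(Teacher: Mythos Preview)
Your approach is essentially the paper's, but you have overcomplicated the node count, and that is what manufactures the spurious $\log^2 n$ detour. The standard kd-tree fact is stronger than you state: the \emph{total} number of nodes (summed over all levels) whose cell is crossed by an axis-parallel line is $O(\sqrt n)$, not $O(\sqrt n)$ per level. The per-level counts are $1,1,2,2,4,4,\ldots$, a geometric series summing to $O(\sqrt n)$. The paper's proof is then one line: the visited nodes are the root, the nodes whose bounding box meets $\partial R_L$, and their children; there are $O(\sqrt n)$ of these in total (since $T_S$ is a subtree of a median-split kd-tree); at each we do at most $O(\log n)$ work (Lemma~\ref{lemma:case-inside} being the worst case); and the unique leaf containing the corner $(x_L,y_L)$ of $R_L$ may trigger one range-emptiness test (Lemma~\ref{lemma:case-leaf-corner}). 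That gives $O(\sqrt n \log n)$ plus one range-emptiness test directly.

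Your ``fix'' paragraph also contains a slip: if the crossed nodes formed $O(\sqrt n)$ root-to-leaf paths of length $O(\log n)$, the number of siblings hanging off those paths would be $O(\sqrt n \log n)$, not $O(\sqrt n)$ as you claim. The fix is simply unnecessary once you use the correct $O(\sqrt n)$ total for the crossed nodes; the maximal-inside roots, being children of crossed nodes, are then automatically $O(\sqrt n)$ in number and the $O(\log n)$ cost of Lemma~\ref{lemma:case-inside} at each of them causes no trouble.
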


As a corollary, we get:

\begin{corollary}\label{cor:ne-safety-test}
Let $B$ be an axis aligned box. We can query a cross-safety tree $T_S$ to test if $B$ is red-$\NE$-safe for $S$ in $O(\sqrt{n}\log n)$ time plus the cost of a constant number of orthogonal range-emptiness tests on $S$.
\end{corollary}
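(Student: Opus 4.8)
The plan is to reduce the test ``$B$ is red-$\NE$-safe for $S$'' to a constant number of queries on the cross-safety tree $T_S$ of the type analysed in the preceding theorem, plus a constant number of bookkeeping comparisons and range-emptiness tests. Recall that $B$ is red-$\NE$-safe precisely when (i) $B$ is red-cross-safe, and (ii) the $\NE$-minimal set of $S\cap\NE(B)$ is entirely red. I would translate condition (ii) into the language of the query's outputs as follows: consider the lower range $R_L=[x_{\max}(B),+\infty]\times[y_{\max}(B),+\infty]$, so that $R_L$ is exactly $\NE(B)$ (up to the boundary conventions), and take $R_U=[-\infty,+\infty]\times[-\infty,+\infty]$. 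Querying the root of $T_S$ with this $(R_L,R_U)$ returns $bx_{\mathrm{root}},by_{\mathrm{root}}$ (describing the bottom-left-most blue point of the minimal set of $S\cap\NE(B)$) and $rx_{\mathrm{root}},ry_{\mathrm{root}}$ (describing the minimum-$x$ and minimum-$y$ red box-points among red points in $S\cap\NE(B)$). The minimal set of $S\cap\NE(B)$ is all-red iff there is no blue point in it that fails to be dominated by a red point of $S\cap\NE(B)$; by Observation~\ref{obs:domination-in-safe-box} applied to the (red) box of the candidate blue point, this is equivalent to a comparison between $(bx_{\mathrm{root}},by_{\mathrm{root}})$ and $(rx_{\mathrm{root}},ry_{\mathrm{root}})$ — essentially, the minimal set is all-red iff either there is no blue point returned, or the returned blue minimal point is dominated by a red point, which can be read off by comparing $by_{\mathrm{root}}$ with $ry_{\mathrm{root}}$ (and symmetrically with $rx_{\mathrm{root}}$). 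This needs to be done carefully but is $O(1)$ work once the query has returned.

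For the first part — whether $B$ is red-cross-safe — I would observe that $cross(B)$ is a union of an axis-aligned infinite vertical slab and an axis-aligned infinite horizontal slab, and that ``red-cross-safe'' means this region contains no blue point of $S$. A blue box of $T_S$ whose interior is disjoint from $cross(B)$ cannot contribute a blue point there; the slab boundaries are four axis-parallel lines, and by Observation~\ref{obs:line-through-point} a line through a red point never crosses a blue box, so near these boundary lines the blue boxes of $T_S$ are ``well-behaved''. Concretely I would test red-cross-safety via a constant number of orthogonal range-emptiness tests on $S$ — asking whether each of the (at most two) rectangular pieces obtained by intersecting $cross(B)$ with a large enough bounding region contains a blue point, which one can do with $O(1)$ orthogonal range-emptiness queries restricted to the blue points (or, if only monochromatic emptiness tests are available, by first querying for any point and checking its color, again $O(1)$ times). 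This contributes only to the ``constant number of orthogonal range-emptiness tests'' slack allowed in the statement.

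Putting these together: one query to the root of $T_S$ with the ranges above, handled by the recursive application of Lemmas~\ref{lemma:case-inside}, \ref{lemma:case-interior-node}, \ref{lemma:case-leaf-side} and \ref{lemma:case-leaf-corner}, costs $O(\sqrt n\log n)$ time plus one range-emptiness test by the preceding theorem; the red-cross-safety check costs $O(1)$ range-emptiness tests; and the final decision is $O(1)$ comparisons. Hence testing whether $B$ is red-$\NE$-safe takes $O(\sqrt n\log n)$ time plus a constant number of orthogonal range-emptiness tests on $S$, which is the claim. The main obstacle I anticipate is not the running time — that is immediate from the preceding theorem — but verifying the equivalence in the first paragraph: one must check that the quantities $bx_{\mathrm{root}},by_{\mathrm{root}},rx_{\mathrm{root}},ry_{\mathrm{root}}$ returned by the query really do determine whether the $\NE$-minimal set of $S\cap\NE(B)$ is all-red, in particular that it suffices to look at the single extremal blue minimal-set point rather than all of them (this is where Lemma~\ref{lemma:relevant-points-suffice} and Observation~\ref{obs:domination-in-safe-box} do the work), and that the boundary conventions on quadrants and on ``no point on a quadrant boundary'' make $R_L$ coincide with $\NE(B)$ for the purposes of the test. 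I would also need to double-check the degenerate cases where $S\cap\NE(B)$ is empty or contains no blue point, which are handled by the $+\infty$ conventions built into the query's output.
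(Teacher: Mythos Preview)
Your approach is essentially the paper's: check red-cross-safety with $O(1)$ range-emptiness queries, then query the root of $T_S$ with $R_L=\NE(B)$ and $R_U=\R^2$, and read the answer from the returned values. Two points deserve tightening.

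First, the comparison you propose between $(bx_{\mathrm{root}},by_{\mathrm{root}})$ and $(rx_{\mathrm{root}},ry_{\mathrm{root}})$ is unnecessary and slightly confused. By definition, $bx_{\mathrm{root}}$ is already the minimum $x$-coordinate among blue points \emph{in the minimal set} of $S\cap\NE(B)$; it is $+\infty$ exactly when that minimal set contains no blue point. So the test is simply ``$bx_{\mathrm{root}}=+\infty$''. Your suggested alternative clause, ``or the returned blue minimal point is dominated by a red point'', is vacuous: a point in the minimal set of $S\cap\NE(B)$ is by definition dominated by nothing in that set. Moreover, $rx_{\mathrm{root}},ry_{\mathrm{root}}$ are coordinates of red \emph{box-points} (upper-right corners of red leaf boxes), not of red points themselves, so a comparison with them would not mean what you want anyway.

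Second, and more importantly, the order of the two steps matters and you should make it explicit. The query on $T_S$ has a precondition: the boundary of $R_L$ must not intersect any blue box of $T_S$. This is \emph{guaranteed} precisely because $B$ has already been verified to be red-cross-safe (so the vertical line $x=x_{\max}(B)$ and the horizontal line $y=y_{\max}(B)$ carry only red points of $S$ in their slabs, hence cannot cross a blue-cross-safe box). The paper's proof therefore does the range-emptiness cross-safety test first, returns ``No'' if it fails, and only then issues the $T_S$ query. Your write-up treats the two checks as independent pieces to be ``put together''; you should state that the first enables the second.
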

\begin{proof}
With two orthogonal range-emptiness queries on the blue points of $S$ and one on the red points of $S$ we can test if $B$ contains at least one red point and $\cross(B)$ contains only red points (that is, test if $B$ is red-cross-safe for $S$). If $B$ is not red-cross-safe for $S$ we can immediately return ``No''. We assume from now on that it is.

Let $R_L$ be the range corresponding to $\NE(B)$, and let $R_U = [-\infty,+\infty]\times [-\infty,+\infty]$. Because $R$ is red-cross-safe, it is easy to see that the boundary of $R_L$ does not intersect any blue box of $T_S$ (this is also trivially true for $R_U$). 
Thus, we can query the root $u$ of $T_S$ with $R_L$ and $R_U$ to get the $x$ coordinate $bx_u$ of the blue point with minimum $x$ coordinate among all blue points on the minimal set of $\NE(B)\cap S$, or $+\infty$ if no such point exists. In particular, this allows us to test if such a point exists.
\end{proof}

We also have the following:
\begin{lemma}\label{lemma:cross-safety-tree-conform}
Let $S'\subset S$ be a subset which conforms with $S$ and let $B$ be an axis-aligned box. If $S'\cap B \neq \emptyset$, then we can replace all orthogonal range-emptiness tests on $S$ with the same tests on $S'$ in the procedure described in Corollary \ref{cor:ne-safety-test} (including the tests done while querying $T_S$) without affecting the outcome. 
\end{lemma}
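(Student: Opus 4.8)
The statement to prove is Lemma~\ref{lemma:cross-safety-tree-conform}: if $S'\subset S$ conforms with $S$ and $B$ is an axis-aligned box with $S'\cap B\neq\emptyset$, then every orthogonal range-emptiness test performed on $S$ in the procedure of Corollary~\ref{cor:ne-safety-test} (including those inside the cross-safety tree query) can be replaced by the same test on $S'$ without changing the outcome. The natural strategy is to go through the finitely many range-emptiness tests that the procedure of Corollary~\ref{cor:ne-safety-test} actually issues and argue, one family at a time, that a box $R$ involved in such a test is empty of points of $S$ if and only if it is empty of points of $S'$. Since $S'\subseteq S$, one direction is trivial; the content is showing that if $R$ contains a point of $S$ then it already contains a point of $S'$, i.e.\ that $R$ contains a \emph{participating} point of $S$ (by definition of ``conforms with'', $S'$ contains all participating points).

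First I would enumerate the tests. In Corollary~\ref{cor:ne-safety-test} there are two kinds: (i) the three tests used to decide whether $B$ is red-cross-safe, namely ``does $B$ contain a red point?'' and ``does $\cross(B)$ contain a blue point?'' (split into the left-strip and bottom-strip halves of the cross), and (ii) the single range-emptiness test that can occur deep inside the tree query, in Lemma~\ref{lemma:case-leaf-corner}, where one tests whether $B_u\cap R_L\neq\emptyset$ for a leaf box $B_u$ of $T_S$ and the lower range $R_L$ corresponding to $\NE(B)$. For each such test I want to show the query box $R$ has the property: $R\cap S\neq\emptyset \iff R$ contains a participating point of $S$. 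The key geometric observation is that a corner of the original box $B$ together with the query region pins down a point that \emph{sees} a point of the opposite color, hence participates. Concretely: since $S'\cap B\neq\emptyset$, $B$ is non-empty; combined with $B$ being red-cross-safe (we only reach the tree query in that case), there is a red point in $B$, and its relationship to the blue point whose existence we are testing in $\NE(B)$ (or the symmetric quadrant) exhibits an empty spanning rectangle. More carefully, if $q$ is a minimal blue point of $S\cap\NE(B)$ witnessing non-emptiness, and $p$ is the red point of $B$ closest to the NE corner of $B$, then the rectangle spanned by $p$ and $q$ is empty (nothing in $\cross(B)$ is blue so nothing obstructs between $B$ and $\NE(B)$ except red points, which by Observation~\ref{obs:domination-in-safe-box} / minimality of $q$ don't lie inside), so both $p$ and $q$ participate and hence lie in $S'$. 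The same argument, run per-quadrant and per-strip, handles each of the enumerated tests.

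The main obstacle I anticipate is the bookkeeping for the tree-internal test in Lemma~\ref{lemma:case-leaf-corner}: there the query box is not $B$ or a quadrant of $B$ but a leaf cell box $B_u$ of $T_S$ intersected with $R_L = \NE(B)$, and I need to argue that a point of $S$ in $B_u\cap R_L$ is necessarily participating. Here I would use that $B_u$ is a leaf of the cross-safety tree, hence a cross-safe (in fact monochromatic, say red) box, together with the fact that we are inside a successful red-cross-safe test on $B$: a red point lying in $B_u\cap\NE(B)$ is, by the stop-condition of the tree and Observation~\ref{obs:domination-in-safe-box}, the box-point candidate $rx_u$; if it exists then comparing it against the blue minimal point of $\NE(B)$ is exactly what the overall red-$\NE$-safety test does, so whenever that red point's presence matters to the outcome it is because it obstructs (or fails to obstruct) a blue point it sees, making it participate. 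I would also need the hypothesis $S'\cap B\neq\emptyset$ precisely to guarantee that the symmetric witness (a point of the opposite role inside $B$) survives in $S'$; without it, $B\cap S'$ could be empty and the equivalences could break. Finally I would remark that once each individual test gives the same answer on $S'$ as on $S$, and since Observation~\ref{obs:conform} already guarantees ``$B$ is safe for $S$ iff $B$ is safe for $S'$'', the entire procedure of Corollary~\ref{cor:ne-safety-test} returns the same answer verbatim when every occurrence of $S$ in a range-emptiness test is replaced by $S'$, and moreover the $O(\sqrt n\log n)$ combinatorial part of the tree query is untouched since it only reads data stored at the nodes of $T_S$, not $S'$ directly.
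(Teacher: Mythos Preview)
Your plan works cleanly for the three initial range-emptiness tests (does $B$ contain a red point; do the two strips of $\cross(B)$ contain a blue point): using $S'\cap B\neq\emptyset$ and the ``shrink the spanning rectangle'' argument you sketch, one can indeed show each of those three tests returns the same answer on $S$ and on $S'$.

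The gap is in the tree-internal test from Lemma~\ref{lemma:case-leaf-corner}. Your stated strategy is to prove, for the query region $R = B_u\cap R_L$, that $R\cap S\neq\emptyset$ iff $R$ contains a participating point. This is false. Take $S$ entirely red; then $B$ is trivially red-cross-safe, the tree $T_S$ is a single leaf, and $B_u\cap R_L = B_u\cap\NE(B)$ may well contain red points of $S$, none of which participate. With $S'=\{p\}$ for the single point $p\in B$ (which is a conforming subset), the test on $S$ says ``non-empty'' while the test on $S'$ says ``empty''. So the individual test does \emph{not} give the same answer; your equivalence fails. You sense this and retreat to ``whenever that red point's presence matters to the outcome\ldots it participates'', but that is a different statement from the one your plan commits to, and you do not turn it into an argument.

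The paper sidesteps this entirely: it never claims the leaf test returns the same value. Instead it lets $S''$ be $S$ with exactly the points of $(S\setminus S')\cap B_u\cap R_L$ removed, observes that the procedure with the test replaced is literally the procedure applied to $S''$, notes that $S'\subset S''\subset S$ so $S''$ conforms with $S$, and then invokes Observation~\ref{obs:conform} to conclude that red-$\NE$-safety of $B$ is the same for $S''$ as for $S$. In other words, the paper argues at the level of the final outcome, not the individual test. Your closing paragraph mentions Observation~\ref{obs:conform} only as a consistency check after the per-test equivalences; the paper uses it as the actual mechanism for the leaf case.
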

\begin{proof}
If there are both red and blue points in $S\cap\cross(B)$, then at least one of these blue points participates in $S$. Because $S'$ conforms with $S$, this blue point is also in $S'$, so $S'$ is not red-cross-safe. The converse is trivially true. Thus, the initial three range-emptiness tests return the same results on $S$ and $S'$.

Now consider the query done on $T_S$. If the corner of the lower range $R_L$ does not intersect a red leaf box of $T_S$, then no range-emptiness test is performed and the claim holds. Now suppose $R_L$ intersects a red leaf box $B$ of $T_S$. Let $S''$ be the set of points in $S$ where we remove all points in $S \cap B \cap R_L$ which are not in $S' \cap B \cap R_L$. Notice that by replacing the range-emptiness query on $S$ with one on $S'$, the procedure behave exactly like querying a cross-safety tree on $S''$. Because $S'\subset S'' \subset S$ we know that $S''$ conforms with $S$ and thus by Observation \ref{obs:conform} the claim holds.
\end{proof}

Thus, this data-structure fits the prerequisites of Theorem \ref{thm:main_algo}, and we can use it to get an algorithm solving the problem in $O(n(\entropy(S)+1))$ time after having built it. The only missing ingredient to get an instance-optimal algorithm is building the data-structure within the same asymptotic runtime. We show in the following section that we can indeed do this. 

\subsection{\texorpdfstring{Construction in $O(n(\entropy(S)+1))$ time}{Construction in O(n(H(S)+1)) time}}

Rather than focusing on constructing the cross-safety tree specifically, we start with a bit more general setting.

\begin{theorem}\label{thm:C-kd-tree}
Let $S$ be a set of points. Let $C$ be a property on axis-aligned bounding boxes of the plane such that for boxes $B_2\subset B_1$, if $C(B_1)$ is true then $C(B_2)$ is true. (Note that $C$ can depend on $S$).

A partition $\Pi$ of $S$ is $C$-respectful if every set in $\Pi$ is a singleton or can be enclosed by an axis aligned bounding box $B$ such that $C(B)$ is true.

Let $\entropy_C(S)$ be the minimum of $\entropy(\Pi)$ over all $C$-respectful partitions of $S$.

If property $C(B)$ can be tested in $O(|S\cap B|)$ time when given access to $S\cap B$, then a kd-tree $T_S^C$ with stopping condition $C$ on the leaves can be built in $O(n(\entropy_C(S)+1))$ time.

This remains true if for each node in the tree we do an additional linear amount of work (in the number of points considered at that node).
\end{theorem}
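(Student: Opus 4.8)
The plan is to build the kd-tree $T_S^C$ top-down in the obvious recursive way, but to analyze the running time against an \emph{arbitrary} $C$-respectful partition $\Pi$ by a charging argument analogous to the one used in the proof of Theorem~\ref{thm:main_algo}. At each node $u$ we have a subset $P_u$ of the points; we first test whether the bounding box $B_u$ of $P_u$ satisfies $C$. By the hypothesis this costs $O(|P_u|)$ time, and the hypothetical extra linear work adds only another $O(|P_u|)$. If $C(B_u)$ holds, $u$ becomes a leaf; otherwise we split $P_u$ by the median $x$- or $y$-coordinate (alternating with depth), which costs $O(|P_u|)$ by linear-time median selection, and recurse on the two halves, each of size at most $\lceil |P_u|/2\rceil$. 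So the total running time is $O\!\left(\sum_{u} |P_u|\right)$ where the sum is over all nodes of $T_S^C$, i.e. $O\!\left(\sum_{d} (\text{number of points at depth } d)\right)$ counted with multiplicity across depths.

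The key step is to bound $\sum_u |P_u|$ by $O(n(\entropy_C(S)+1))$. First I would observe that a point at depth $d$ lies in a node whose box has both side-lengths at most those of the region after $d$ halvings, and more usefully that the number of nodes at depth $d$ is at most $2^d$ and each holds at most $\lceil n/2^d\rceil$ points, so the total contribution of depth $d$ is $O(n)$; hence if the maximal depth were $O(\log n)$ we would get $O(n\log n)$, but we need the refined bound. Fix an optimal $C$-respectful partition $\Pi$ with $\entropy(\Pi)=\entropy_C(S)$, and consider a part $S_k\in\Pi$ enclosed by a box $B_k$ with $C(B_k)$ true (singletons are handled by the trivial $\Omega$-free observation that a single point contributes $O(\log n)$, summing to $O(n)$ over at most $n$ singletons, giving the ``$+1$'' term; more carefully $\sum_k |S_k|\log(n/|S_k|) \ge \sum_{\text{singletons}} \log n$, so singletons are already accounted for in $n(\entropy_C(S)+1)$ via the standard inequality). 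For a non-singleton part $S_k$: once a node $u$ has $P_u \subseteq B_k$, we claim $C(B_u)$ holds and $u$ is a leaf (or an ancestor of $u$ already was a leaf), because $B_u\subseteq B_k$ and $C$ is monotone under box containment by hypothesis. A point of $S_k$ therefore survives only in nodes at depths $d$ such that the depth-$d$ cell of the kd-tree through that point is \emph{not} contained in $B_k$; since $B_k$ is a box and the kd-tree cells at depth $d$ form a grid-like partition, the box $B_k$ meets at most $O(\sqrt{2^d})$ cells at depth $d$ (exactly as in the proof of Theorem~\ref{thm:main_algo}), and each such cell holds $O(n/2^d)$ points, so the total number of points of $S_k$ present at depth $d$ is at most $\min\{|S_k|,\, O(\sqrt{2^d})\cdot O(n/2^d)\} = \min\{|S_k|,\, O(n/\sqrt{2^d})\}$.

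Summing this over all depths $d\ge 0$ gives, exactly as in the double-summation at the end of the proof of Theorem~\ref{thm:main_algo}, $\sum_d \min\{|S_k|, O(n/\sqrt{2^d})\} = O(|S_k|(\log(n/|S_k|)+1))$: the two regimes split at $d \approx 2\log(n/|S_k|)$, the small-$d$ terms contribute a geometric sum dominated by $O(|S_k|\log(n/|S_k|))$, and the large-$d$ terms contribute a geometric sum dominated by $O(|S_k|)$. Finally, since the parts $S_k$ cover $S$, we have
\[
\sum_u |P_u| \le \sum_k \sum_d \min\{|S_k|, O(n/\sqrt{2^d})\} = O\!\left(\sum_k |S_k|(\log(n/|S_k|)+1)\right) = O(n(\entropy_C(S)+1)),
\]
and multiplying by the $O(1)$-per-point cost (safety test plus split plus the allowed extra linear work) preserves the bound. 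The main obstacle is the containment/monotonicity step: I must check that once $P_u$ fits inside \emph{some} enclosing box $B_k$ of a part, the node's \emph{own minimal} bounding box $B_u$ also satisfies $C$ — this is immediate from $B_u\subseteq B_k$ and the stated monotonicity of $C$, but it is what makes the part $S_k$ "absorb" the whole subtree and lets the grid-intersection counting go through; everything else is the same geometric-series bookkeeping as in Theorem~\ref{thm:main_algo}.
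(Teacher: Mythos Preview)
Your proposal is correct and is essentially the same argument as the paper's proof: build the kd-tree top-down with linear work per node, fix a $C$-respectful partition, use monotonicity of $C$ to conclude that any cell contained in some $B_k$ becomes a leaf, bound the number of surviving points of $S_k$ at depth $d$ by $\min\{|S_k|,O(n/\sqrt{2^d})\}$ via the $O(\sqrt{2^d})$ stabbing bound, and perform the same double summation. The only cosmetic differences are that you phrase the containment via $P_u\subseteq B_k\Rightarrow B_u\subseteq B_k$ (the paper speaks directly of the kd-tree cell being contained in $B_k$) and you briefly digress on singletons, but both are harmless.
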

The proof is similar in spirit to that of Theorem $\ref{thm:main_algo}$, although simpler.
\begin{proof}
Consider the classical top-down recursive approach to construct a kd-tree on $S$ (with linear-time median, selection), where we stop subdividing points once we have reached a bounding-box $B$ with property $C$.
Consider any $C$-respectful partition $\Pi$ of $S$. Let $S_k \in \Pi$ and let $B_k$ be a corresponding bounding box with the property $C$. At the $j$'th level of the recursion, we have partitioned the plane into $O(2^j)$ boxes each containing at most $\ceil*{n/2^j}$ points still to be considered. Any box $B$ of $T_S^C$ which is entirely contained in $B_k$ has property $C$ and can be set as a leaf. In other words the box $B$ does not need to be recursed on and the points in $B\cap S$ are not considered at level $j$ or lower. Because $B_k$ intersects at most $O(\sqrt{2^j})$ boxes of $T_S^C$ at level $j$, the number of points in $S_k$ to consider at level $j$ is $min\{|S_k|, O(\sqrt{2^j} \cdot n/2^j)\}$ = $min\{|S_k|, O(n/\sqrt{2^j})\}$.
At each level, the amount of work to be done is linear in the number of points to consider. The $S_k$'s cover the entire point set so by double summation we get that the runtime is in

\begin{align*}
O\left(\sum_j \sum_k \min\left\{|S_k|, n/\sqrt{2^j}\right\}\right)
&= O\left(\sum_k \sum_j \min\left\{|S_k|, n/\sqrt{2^j}\right\}\right)\\
&= O\left(\sum_k \mspace{-20mu}\sum_{\mspace{30mu}j \leq 2\log (n/|S_k|)}\mspace{-30mu} |S_k| 
\mspace{10mu}+\mspace{-30mu}
\sum_{\mspace{30mu}j > 2\log (n/|S_k|)}\mspace{-30mu}n/\sqrt{2^j} \right)\\
&= O\left(\sum_k |S_k|(\log(n/|S_k|)+1)\right) \\
&\subset O\left(n(\entropy_C(S)+1)\right). \qedhere
\end{align*}
\end{proof}

Note that the proof generalizes easily to any constant dimension $d>0$.
We can apply this theorem to get the following:
\begin{lemma}
A set of points $S$ can be preprocessed in $O(n(\entropy(S)+1))$ time so that for any subset $S_k\subset S$, we can test if all points in $S_k$ lie in a common vertical slab containing only points of $S$ of the same color in $O(S_k)$ time.
\end{lemma}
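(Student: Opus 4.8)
The plan is to cast this as a one-dimensional application of Theorem~\ref{thm:C-kd-tree}. First I would define, for an interval $I=[a,b]$ of the $x$-axis, the property $C(I)$ to hold exactly when every point of $S$ whose $x$-coordinate lies in $[a,b]$ has the same color. This is monotone in the sense the theorem requires ($I'\subseteq I$ and $C(I)$ imply $C(I')$), and in dimension one the ``box'' handled at a node of the recursive partition is precisely such an interval $I$, with $S\cap(I\times\R)$ equal to the set of points present at that node; hence $C$ can be tested in time linear in the number of those points by a single scan. So the one-dimensional version of Theorem~\ref{thm:C-kd-tree} (whose proof the authors note extends to any constant dimension) would build the corresponding kd-tree $T$, whose leaves are $x$-intervals of $S$ monochromatic in color.

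The crux is to show $\entropy_C(S)\le\entropy(S)$, so that this construction runs in $O(n(\entropy(S)+1))$ time. For this I would take an arbitrary respectful partition $\Pi$ of $S$ and argue it is $C$-respectful: a singleton member is fine, and for a safe member $S_j$ — say red-safe via a bounding box $B_j$ — the vertical slab $[x_{\min}(B_j),x_{\max}(B_j)]\times\R$ is contained in $\cross(B_j)$, which contains only red points of $S$; since $[x_{\min}(S_j),x_{\max}(S_j)]\subseteq[x_{\min}(B_j),x_{\max}(B_j)]$, the interval $I=[x_{\min}(S_j),x_{\max}(S_j)]$ encloses $S_j$ and satisfies $C(I)$. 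The subtle point here, and the only real obstacle, is that a safe box itself need not be monochromatic (because of the horizontal part of its cross), but the \emph{vertical slab it spans} is monochromatic, and that is exactly what $C$ needs; working instead with the property ``$S\cap B$ is monochromatic'' would keep the entropy bound but would not yield a structure that supports the slab queries.

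Finally I would post-process $T$ in $O(n)$ extra time (within budget, and also covered by the linear-work-per-node clause of Theorem~\ref{thm:C-kd-tree}): traverse $T$ in in-order to list its leaves sorted by $x$-coordinate, merge maximal runs of consecutive leaves of the same color to obtain the maximal monochromatic $x$-intervals of $S$, and label each point of $S$ with the interval it belongs to (also storing that interval's color). A query set $S_k$ is then handled in $O(|S_k|)$ time: scan $S_k$ for its point $p$ of minimum and its point $p'$ of maximum $x$-coordinate, and answer ``yes'' iff $p$ and $p'$ carry the same label. This is correct because the minimal vertical slab containing $S_k$ is $[x(p),x(p')]\times\R$, any wider slab containing $S_k$ has a superset of its points, and this slab contains points of $S$ of only one color exactly when $p$ and $p'$ lie in the same maximal monochromatic $x$-interval; that color is then read off from their common label.
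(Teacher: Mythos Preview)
Your approach is correct and matches the paper's intended argument: reduce to a one-dimensional instance of Theorem~\ref{thm:C-kd-tree} with $C(I)$ meaning ``the points of $S$ whose $x$-coordinate lies in $I$ are monochromatic,'' bound $\entropy_C$ by $\entropy(S)$ using the fact that the vertical slab of a safe box lies in its cross, and then label every point with its maximal monochromatic $x$-interval so that a query on $S_k$ is answered by comparing the labels of its leftmost and rightmost points. (Note that the proof block the paper prints directly under this lemma is evidently misplaced---it is the argument for the preceding theorem about querying the root of the cross-safety tree---while the argument for the present lemma appears inside the proof of the next theorem; that argument coincides with yours.)

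One small slip: your parenthetical claim that ``a safe box itself need not be monochromatic'' is false, since a box is contained in its own cross and hence every point of $S$ inside a red-safe box is red. This does not affect your proof, as you only use that the vertical slab is monochromatic; but the aside as written is incorrect.
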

\begin{proof}
Start by projecting all points in $S$ on the horizontal line $y=0$ to get a new set of points $S'$. Apply the previous theorem in 1 dimension with the property $C(I)$ being ``all points of $S'$ in the interval $I$ are of the same color''. This requires $O(n(\entropy_C(S')+1))$ time. It is easy to see that with the constructed data-structure we can assign to each point the maximal monochromatic interval that contains it in linear time. Then we can test if all points in $S_k$ lie in a common vertical slab containing only points of $S$ of the same color in $O(S_k)$ time. Moreover, because all points in a safe subset of $S$ lie in such a vertical slab, we have $\entropy_C(S') \leq \entropy(S)$ and so the claim holds.
\end{proof}

Which in turn implies:
\begin{theorem}
A cross-safety tree on $S$ can be constructed in $O(n(\entropy(S)+1))$.
\end{theorem}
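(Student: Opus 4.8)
The plan is to invoke Theorem~\ref{thm:C-kd-tree} with the property $C(B)$ being ``$B$ is cross-safe for $S$'' (i.e.\ all points of $S\cap\cross(B)$ have a single colour), and then check that the additional per-node bookkeeping required by the definition of a cross-safety tree is only a linear amount of extra work. First I would observe that $C$ is monotone in the required sense: if $B_2\subset B_1$ then $\cross(B_2)\subset\cross(B_1)$, so $C(B_1)$ implies $C(B_2)$, and a $C$-respectful partition is exactly a partition into singletons and cross-safe subsets. Note also that $\entropy_C(S)\le\entropy(S)$ since every respectful partition (in the sense of safe subsets) is $C$-respectful (a safe box is in particular cross-safe), so the bound $O(n(\entropy_C(S)+1))$ we get from Theorem~\ref{thm:C-kd-tree} is at most $O(n(\entropy(S)+1))$, which is what we want.

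Second, I would verify the hypothesis that $C(B)$ can be tested in $O(|S\cap B|)$ time given access to $S\cap B$. Here the point is that when we construct the kd-tree top-down, at the node corresponding to box $B$ we already hold the list of points $S\cap B$; since $B$ is obtained by splitting ancestor boxes along median coordinate lines, the coordinates of the splitting lines are coordinates of points already seen, and in fact all points of $S$ whose $x$- or $y$-coordinate falls in the coordinate-range of $B$ that are not in $S\cap B$ lie on the ``outside'' part of $\cross(B)$ relative to the current box---but since we are building a kd-tree from scratch on $S$, every point of $S$ lying in $\cross(B)$ in the $x$-slab or $y$-slab of $B$ is precisely a point that has been routed to $B$ or was split off along the way; we can determine cross-safety simply by scanning $S\cap B$ and checking whether both colours occur among the points whose coordinates lie in the slab, which is $O(|S\cap B|)$. (One subtlety to spell out: the ``cross'' extends infinitely, so a priori points outside $B$ matter; but because the kd-tree partitions $S$ by median splits, any point of $S$ in the $x$-slab or $y$-slab of $B$ is among the points currently associated with $B$ together with those split off at ancestors sharing that slab---this can be tracked by passing down, along with $S\cap B$, the coordinate interval defining the slab so that a simple comparison settles membership. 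Alternatively, and more cleanly, one can maintain at each node the set of points still ``live'' in the slab.) In any case the test costs $O(|S\cap B|)$.

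Third, I would account for the extra data stored at each node of a cross-safety tree beyond what a plain kd-tree stores: the bounding box $B_u$, the two extremal red box-points, and the sorted list of relevant blue points on the minimal set of $P_u$. The bounding box is trivially computable in $O(|P_u|)$ time. The red box-points can be maintained by merging the answers from the two children in $O(1)$ time each, which is well within the linear budget. For the sorted list of relevant blue points on the minimal set of $P_u$: the minimal set (staircase) of $P_u$ can be merged from the two children's staircases in $O(|P_u|)$ time (a standard staircase-merge when the children split along a vertical or horizontal line), the relevance flags depend only on whether a point attains the minimum $x$- or $y$-coordinate within its leaf cell and are set once at the leaves, and the blue-relevant sublist inherits its sorted order from the merge. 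So all of the per-node augmentation fits into $O(|P_u|)$ extra work, and the ``additional linear amount of work'' clause of Theorem~\ref{thm:C-kd-tree} applies.

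Putting these together, Theorem~\ref{thm:C-kd-tree} yields a construction time of $O(n(\entropy_C(S)+1))\subseteq O(n(\entropy(S)+1))$, which is the claim. The main obstacle I anticipate is the second step: making precise and correct the claim that cross-safety of $B$---which involves the infinite cross $\cross(B)$---can be decided by looking only at the points currently held at the node, which is a statement about how the kd-tree's median splits interact with the slabs defining $\cross(B)$. Everything else (monotonicity of $C$, the inequality $\entropy_C\le\entropy$, and the linear-time maintenance of the augmented data) is routine.
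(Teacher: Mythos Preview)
Your outline matches the paper's strategy in its first, third, and fourth steps: you correctly identify that Theorem~\ref{thm:C-kd-tree} is the engine, that cross-safety is monotone under shrinking boxes, that $\entropy_C(S)\le\entropy(S)$ because every safe box is cross-safe, and that the per-node augmented data (bounding box, extremal red box-points, sorted relevant blue staircase) can be assembled from the children in $O(|P_u|)$ time.

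The genuine gap is exactly the one you flag in your final paragraph, and your proposed fixes do not close it. Cross-safety of $B$ depends on \emph{all} of $S\cap\cross(B)$, and points in the horizontal or vertical slab of $B$ can live in leaves of the kd-tree that share no ancestor with $B$ other than the root. Your claim that ``every point of $S$ lying in $\cross(B)$ \ldots\ is precisely a point that has been routed to $B$ or was split off along the way'' is false: after a vertical split followed by a horizontal split, the four grandchildren sit in a $2\times2$ pattern, and the horizontal slab of the top-left cell contains points of the top-right cell, which is not on the root-to-$B$ path. Maintaining ``the set of points still live in the slab'' does not stay within $O(|S\cap B|)$ either, since that set can be much larger than $S\cap B$ and is not inherited from a single parent. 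So as written, the hypothesis of Theorem~\ref{thm:C-kd-tree} is not verified.

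The paper closes this gap with a separate preprocessing step that you are missing: project $S$ onto the $x$-axis and apply Theorem~\ref{thm:C-kd-tree} in one dimension with the property ``the interval is monochromatic''; this labels every point with the maximal monochromatic vertical slab containing it, in time $O(n(\entropy_{C'}(S')+1))$ for the $1$D projection $S'$, and one checks $\entropy_{C'}(S')\le\entropy(S)$ because any safe box sits inside a monochromatic vertical slab. Doing the same for the $y$-projection gives each point its maximal monochromatic horizontal slab. With these labels in hand, testing cross-safety of $B$ from $S\cap B$ alone is genuinely $O(|S\cap B|)$: compute the $x$- and $y$-extents of $S\cap B$ and check that they lie inside a single monochromatic vertical slab and a single monochromatic horizontal slab of the same colour, by comparing against the precomputed labels of any point in $S\cap B$. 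This is the missing idea; once you have it, the rest of your argument goes through unchanged.
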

\begin{proof}
Call $C(B)$ the property ``$B$ is a cross-safe axis-aligned box''. Using the previous lemma (and the analogous result for horizontal slabs), we can preprocess $S$ in $O(n(\entropy(S)+1))$ such that property $C(B)$ can be tested in $O(|B\cap S|)$ when given access to $B\cap S$. Notice also that when constructing a cross-safety tree $T_S$ on $S$, all information to be stored at a node $u$ can be computed in $O(|B_u\cap S|)$ time (using the information stored in the children in case of an interior node). Thus we can apply Theorem \ref{thm:C-kd-tree} to construct $T_S$ in $O(n(\entropy_C(S)+1))$. Because every safe box is also cross-safe, we have $\entropy_C(S) \leq \entropy(S)$, which concludes the proof.
\end{proof}

Finally, putting this together with \ref{lemma:cross-safety-tree-conform} and Theorem \ref{thm:main_algo} we get the main result.
\begin{theorem}
All points participating in $S$ can be reported in $O(n(\entropy(S)+1))$ time. In other words, there is an instance-optimal algorithm for this problem in the order-oblivious comparison-based model.
\end{theorem}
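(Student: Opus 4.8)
The plan is to assemble the final theorem from the pieces already established in the previous subsections, checking that the hypotheses of each black box are met. The target statement is twofold: that all participating points can be reported in $O(n(\entropy(S)+1))$ time, and that this yields an instance-optimal algorithm in the order-oblivious comparison-based model. The second part follows immediately from the first combined with the matching lower bound of Theorem \ref{thm:lower_bound}, so the substance is the upper bound. The strategy is to invoke Theorem \ref{thm:main_algo}, whose conclusion is exactly ``Algorithm \ref{alg:main} reports all participating points in $O(n(\entropy(S)+1))$ time'', once we have verified its hypothesis: that after suitable preprocessing of $S$, one can test whether an axis-aligned box $B$ is safe for any conforming subset $S'\subset S$ in $O(n^{1-\alpha})$ time plus a constant number of range-emptiness queries on $S'$.

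The first step is to build, during preprocessing, the cross-safety trees needed for the safety test. By the last theorem of the previous subsection this can be done in $O(n(\entropy(S)+1))$ time, which is within the preprocessing budget allowed by Theorem \ref{thm:main_algo} (indeed the budget there is implicitly anything not exceeding the final target). Concretely one builds a cross-safety tree oriented so as to handle the red-$\NE$ test, and the three symmetric variants for $\NW$, $\SE$, $\SW$, plus the analogous four trees with the colours swapped — a constant number of trees in total. The second step is to recall that each such tree can be queried to decide a single red-$\NE$-safety condition in $O(\sqrt n\log n)$ time plus a constant number of orthogonal range-emptiness tests, by Corollary \ref{cor:ne-safety-test}. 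Combining the eight symmetric queries, testing whether $B$ is safe for $S$ costs $O(\sqrt n\log n)$ time plus $O(1)$ range-emptiness tests; since $\sqrt n\log n \in O(n^{1-\alpha})$ for any $\alpha < 1/2$, the time bound required by Theorem \ref{thm:main_algo} holds.

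The third step handles the subtlety that Theorem \ref{thm:main_algo} requires the test to work not for $S$ itself but for an arbitrary subset $S'$ conforming with $S$ (this is what makes the otherwise non-decomposable safety relation usable inside the pruning algorithm). Here I would invoke Lemma \ref{lemma:cross-safety-tree-conform}: it states precisely that, for a conforming $S'$ with $S'\cap B\neq\emptyset$, every range-emptiness test in the procedure of Corollary \ref{cor:ne-safety-test} may be replaced by the same test on $S'$ without changing the outcome, because the modified procedure amounts to querying a cross-safety tree on an intermediate set $S''$ with $S'\subseteq S''\subseteq S$, which still conforms with $S$, so Observation \ref{obs:conform} applies. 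The case $S'\cap B=\emptyset$ is trivial: then $B$ contains no point of $S'$ and is vacuously safe (it is, say, red-cross-safe as $S'\cap\cross(B)\supseteq S'\cap B$ may still be nonempty — more carefully, if $S'\cap B = \emptyset$ then $B$ encloses no point of $S'$ and Algorithm \ref{alg:main} never needs to prune it, so it is harmless to answer ``safe''; alternatively one notes a box enclosing points of $Q_i$ always has $Q_i\subseteq S'\cap B$ nonempty when the test is actually invoked). Thus the hypothesis of Theorem \ref{thm:main_algo} is met with the cross-safety trees as the preprocessed structure.

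With all hypotheses verified, Theorem \ref{thm:main_algo} gives an $O(n(\entropy(S)+1))$-time reporting algorithm, and adding the $O(n(\entropy(S)+1))$ preprocessing time leaves the total unchanged. Finally, Theorem \ref{thm:lower_bound} shows $\OPT(S)\in\Omega(n(\entropy(S)+1))$, so the algorithm is within a constant factor of optimal on every instance, i.e.\ instance-optimal in the order-oblivious comparison-based model; this completes the proof. I do not expect any genuine obstacle here — the theorem is a synthesis step — but the one point demanding care is the conforming-subset condition of Step three, since it is the mechanism by which the global, non-decomposable notion of safety is tamed, and one must be sure that every range-emptiness query issued inside both Corollary \ref{cor:ne-safety-test} and the tree traversal is covered by Lemma \ref{lemma:cross-safety-tree-conform} rather than silently assuming access to all of $S$.
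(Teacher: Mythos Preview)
Your proposal is correct and follows essentially the same route the paper intends: build the cross-safety tree(s) within the entropy budget, invoke Corollary~\ref{cor:ne-safety-test} (and its symmetric variants) together with Lemma~\ref{lemma:cross-safety-tree-conform} to obtain the safety test required by Theorem~\ref{thm:main_algo}, and conclude instance-optimality from the lower bound of Theorem~\ref{thm:lower_bound}. The sentence preceding the theorem in the paper (``putting this together with \ref{lemma:cross-safety-tree-conform} and Theorem~\ref{thm:main_algo}'') is exactly your outline; note, incidentally, that the proof text printed under this theorem in the paper is actually the argument for the cross-safety-tree construction theorem (the proofs appear to have been permuted), so your write-up is in fact more faithful to the intended synthesis than what the paper literally displays.
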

One thing to note here is that while this guarantees that the algorithm is optimal with respect to any parameter of the instance which does not depend on the order of the input points, it is not immediately obvious that it runs in $O(n\log h)$ time, where $h$ is the number of points to report (we only know that if there is an algorithm in the comparison-based model running within this time bound, then so does ours). The following theorem proves this fact.

\begin{theorem}
Let $S$ be an instance of the Reporting participating points problem and let $h$ be the number of points which participate in $S$. Then $n(\entropy(S)+1) \in O(n\log h)$.
\end{theorem}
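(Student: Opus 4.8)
The plan is to bound $n(\entropy(S)+1)$ by exhibiting a single explicit respectful partition $\Pi$ of $S$ whose entropy is $O(\log h)$, since $\entropy(S) \le \entropy(\Pi)$ by definition of structural entropy. The natural candidate is the partition induced by the leaves of a kd-tree built on $S$ with the stopping condition ``the current cell is a safe subset of $S$''; call it $\Pi_{\mathrm{kd}}$. This is respectful by construction: each leaf is either a singleton or a safe subset. The key structural fact I would establish is that this kd-tree has only $O(h)$ leaves. Indeed, a cell is subdivided only if it is not safe, and by Lemma~\ref{lemma:force} a non-safe cell can be made to contain a participating point — more usefully, a non-safe cell must already contain at least one participating point of $S$ (if $B$ is a non-safe box with $B\cap S\neq\emptyset$, then either $\cross(B)$ contains points of both colors, forcing a participating point inside, or some quadrant's minimal set has a point of the ``wrong'' color, which is then seen by a point of the opposite color and hence participates). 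Every internal node of the kd-tree therefore contains a participating point, and since the cells at any fixed depth are disjoint, each depth level contains at most $h$ internal nodes; more to the point, since each leaf's parent is internal and contains a participating point, and participating points are shared among at most $O(\log n)$ ancestors, a direct counting argument bounds the number of leaves by $O(h)$ (or one argues: the set of non-safe cells forms a subtree in which every node-to-root path, restricted to a fixed participating point's ``owning'' cells, has bounded overlap — analogous to Lemma~\ref{lemma:chips}).

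Once the kd-tree has $m = O(h)$ leaves, I would bound its entropy. A balanced (median-split) kd-tree on $n$ points with $m$ leaves has every leaf at depth $O(\log n)$, and each leaf $S_k$ satisfies $|S_k|\ge \lfloor n/2^{d_k}\rfloor$ where $d_k$ is its depth, so $\log(n/|S_k|) = O(d_k)$. Then
\begin{align*}
n\,\entropy(\Pi_{\mathrm{kd}}) &= \sum_k |S_k|\log\frac{n}{|S_k|} \le \sum_k |S_k|\cdot O(d_k).
\end{align*}
To turn this into $O(n\log h)$ I would use the standard fact that in a tree with $m$ leaves and the property that the number of points strictly decreases geometrically down any path, $\sum_k |S_k| d_k = O(n\log m)$: group the leaves by depth, note that all leaves at depth $\ge 2\log m$ together hold at most a constant fraction... more carefully, the total weight $\sum_k |S_k| = n$ is distributed among $m$ leaves, and $\sum_k |S_k| d_k$ is maximized (up to constants) by a balanced distribution, giving $O(n\log m) = O(n\log h)$. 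Equivalently, since $|S_k|\le \lceil n/2^{d_k-1}\rceil$ and there are at most $2^{d}$ leaves at depth $d$ but only $m$ leaves total, the depths cannot all be large, and a convexity/Kraft-inequality argument pins $\entropy(\Pi_{\mathrm{kd}}) = O(\log m)$. Finally $n\cdot 1 \le n\log h$ once $h\ge 2$ (and the $h\le 1$ cases are trivial since then $S$ is monochromatic and $\entropy(S)=0$), so $n(\entropy(S)+1)\le n(\entropy(\Pi_{\mathrm{kd}})+1) = O(n\log h)$.

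The main obstacle I anticipate is the claim that the kd-tree has $O(h)$ leaves: the naive ``every internal node contains a participating point, and there are $h$ of them'' only bounds internal nodes per depth level, not globally, and a participating point can lie in $\Theta(\log n)$ nested cells. The clean fix is to charge each leaf to the participating point in its parent cell and observe that a fixed participating point $p$ is the charge of at most $O(1)$ leaves whose parent is the \emph{deepest} non-safe cell containing... actually the correct bookkeeping is: the internal nodes containing a fixed participating point $p$ form a root-to-node path (a chain), so there are at most as many leaves as there are ``branching events'', and since $m$ leaves require $m-1$ internal branching nodes each witnessed by a participating point counted with multiplicity $O(\log n)$... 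This still loses a log. The genuinely correct argument — which I would make precise — is that an internal node need not merely contain a participating point but that the \emph{two children partition its participating points nontrivially only finitely often}; alternatively, invoke that a non-safe cell contains a participating point and that at a fixed depth the cells are disjoint, so level $d$ has $\le h$ internal nodes, hence $\le 2h$ nodes, hence $\le 2h$ leaves at level $d{+}1$; summing $|S_k| d_k$ over leaves with $\le 2h$ at each level and $|S_k|\le n/2^{d-1}$ gives $\sum_{d} 2h\cdot (n/2^{d-1})\cdot d = O(nh/?)$ — no. I will need to combine the per-level bound of $h$ internal nodes with the balanced-tree depth bound carefully: leaves with $|S_k|\ge n/h$ number at most $h$ and contribute $\le \sum |S_k|\log(n/|S_k|)\le n\log h$ by concavity; leaves with $|S_k| < n/h$ sit at depth $> \log h$, and since only $h$ internal nodes exist per level while a full binary tree would have $2^d$, the subtree below depth $\log h$ is ``sparse'' and its total weight contributes $O(n)$ — making this last step rigorous (essentially a weighted Kraft inequality for the kd-tree) is where the real care goes.
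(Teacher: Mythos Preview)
Your overall plan---exhibit one respectful partition with entropy $O(\log h)$---is the same as the paper's. The gap is in your choice of partition and the structural claim you need about it.

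The assertion that a non-safe box $B$ with $B\cap S\neq\emptyset$ must contain a participating point is false. Take $B=[0,1]^2$ containing two red points $(0.3,0.3),(0.7,0.7)$; place red ``shields'' $(1.5,0.4),(2,0.6)$ and a blue point $(4,0.5)$ all in the horizontal strip of $\cross(B)$. Then $B$ is not safe (its cross contains a blue point), yet neither red point inside $B$ sees $(4,0.5)$: the rectangles to $(4,0.5)$ contain $(1.5,0.4)$ and $(2,0.6)$ respectively. The participating points are the shields, which lie in $\cross(B)\setminus B$. Your parenthetical justification conflates ``$\cross(B)$ contains a participating point'' with ``$B$ contains one''; the second case (wrong-colored point on a quadrant's minimal set) likewise produces a participating point \emph{outside} $B$. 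Without this claim, your leaf-count argument collapses---and you yourself note that every attempted repair ``still loses a log''.

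The paper sidesteps the kd-tree entirely. It draws the $h$ vertical and $h$ horizontal lines through the participating points, obtaining at most $(h+1)^2$ grid cells. Each non-empty grid cell $B_i$ is shown to be safe directly: its cross cannot contain both colors (else some cross point would participate and lie on a grid line, contradiction), and if a wrong-colored point sat on, say, the $\NE$-minimal-set of $B_i$, then the box from that point to the lower-left corner of $B_i$ would be empty of $S$ (all cross points are the right color and non-participating), contradicting $B_i\cap S\neq\emptyset$. The resulting partition has at most $(h+1)^2$ safe parts plus $h$ singletons; concavity gives $\entropy(\Pi)=O(\log h)$, and $h\log n\in O(n\log h)$ handles the singleton contribution. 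The grid construction is the missing idea: it gives a partition whose size is \emph{a priori} bounded in terms of $h$ and whose safety is verifiable cell-by-cell, avoiding any recursive analysis.
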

\begin{proof}
We prove this by showing that there is a respectful partition $\Pi$ of $S$ such that $n(\entropy(\Pi)+1) \in O(n\log h)$. By definition of $\entropy(S)$ the claim then holds.

We define $h+1$ vertical slabs $V_1,V_2,\ldots,V_{h+1}$ of the plane as follows.
\begin{itemize}
    \item Let $\epsilon>0$ be a constant smaller than the difference in $x$ coordinates between any two distinct points in $S$.
    \item Let $p_1,p_2,\ldots,p_h$ be the points participating in $S$, ordered by $x$ coordinate.
    \item Let $H_1$ be the set of points in the plane with $x$ coordinate at most $x(p_1)-\epsilon$. Let $H_{h+1}$ be the set of points in the plane with $x$ coordinate at least $x(p_h)+\epsilon$.
    \item For $1<i<h+1$, let $H_i$ be the set of points in the plane with $x$ coordinate between $x(p_{i-1})+\epsilon$ and $x(p_{i})-\epsilon$.
\end{itemize}
We define $h+1$ horizontal slabs $V_1,V_2,\ldots,V_{h+1}$ similarly by ordering the participating point by $y$ coordinate.

The pairwise intersections of these slabs define $(h+1)^2$ disjoint axis-aligned boxes. We let $B_1, B_2, \ldots, B_k$ be the set of such boxes which contain at least one point of $S$ (which is necessarily non-participating). It is clear that $k \leq \min\{n-h,(h+1)^2\}$. Moreover, for any $1\leq i \leq k$, the box $B_i$ is safe. Indeed, all points in $cross(B_i)$ are of the same color, otherwise at least one of these points would participate in $S$, which contradicts the definition of $B_i$. Without loss of generality, suppose this color is red. Now, suppose there was a blue point $b$ on, say, the $\NE$-minimal-set of $S\cap \NE(B_i)$. Then, because all points in $cross(B_i)$ are red and do not participate in $S$, the box spanned by $b$ and the lower-left corner of $B_i$ contains no point of $S$. This contradicts the fact that $B_i$ contains a point of $S$ by definition. Thus there is no blue point on the $\NE$-minimal-set of $S\cap \NE(B_i)$ (and similarly for the three other quadrants of $B_i$). So $B_i$ is safe. For all $1\leq i\leq k$, we define $S_i = S\cap B_i$.

The partition we consider is $\Pi = \{S_i \mid 1\leq i \leq k\} \cup \{\{p\}\mid p \text{ participates in } S\}.$ It is respectful by the previous discussion. We have
\begin{align*}
    \entropy(\Pi) &= \sum_{i=1}^k \frac{|S_i|}{n}\log \frac{n}{|S_i|} + \sum_{p \text{ participates in } S} \frac{1}{n}\log n\\
    &=  \sum_{i=1}^k \frac{|S_i|}{n}\log \frac{n}{|S_i|} + \frac{h}{n}\log n.
\end{align*}
The first term in this sum is maximized when the non-participating points are fairly divided between the $k$ different sets $S_i$ (by the concavity of the considered function). Moreover, because $k \leq \min\{n-h,(h+1)^2\}$, we have $\frac{n-h}{kn}\log \frac{nk}{n-h} \in O(\frac{1}{k}\log h)$. Thus
\begin{align*}
    \entropy(\Pi) &\leq \sum_{i=1}^k \frac{(n-h)/k}{n}\log \frac{n}{(n-h)/k} + \frac{h}{n}\log n\\
    &\in O\left(\sum_{i=1}^k \frac{1}{k}\log h + \frac{h}{n}\log n\right) \\
    &= O(\log h + \frac{h}{n}\log n).
\end{align*}

Because $h\log n \in O(n\log h)$ we finally get
\begin{align*}
    n(\entropy(\Pi)+1) &\in O(n\log h + h\log n + n) \\
    & \subset O(n\log h).
\end{align*}
This concludes the proof.
\end{proof}

\section{Instance-optimality is impossible with linear queries}\label{sec:impossibility_linear_queries}

In the previous section, we have shown that there is a comparison-based algorithm to report participating points which is instance-optimal in the order-oblivious runtime against all comparision-based algorithm solving the problem. We also show that if we ``compete'' against algorithms which can do queries of the form $x(p)-x(q) \geq y(p)-y(q)$, then such a result is no longer possible, even if we allow our algorithm to be in a much stronger model of computation. 

\begin{figure}
    \centering
    \includegraphics[scale=0.5]{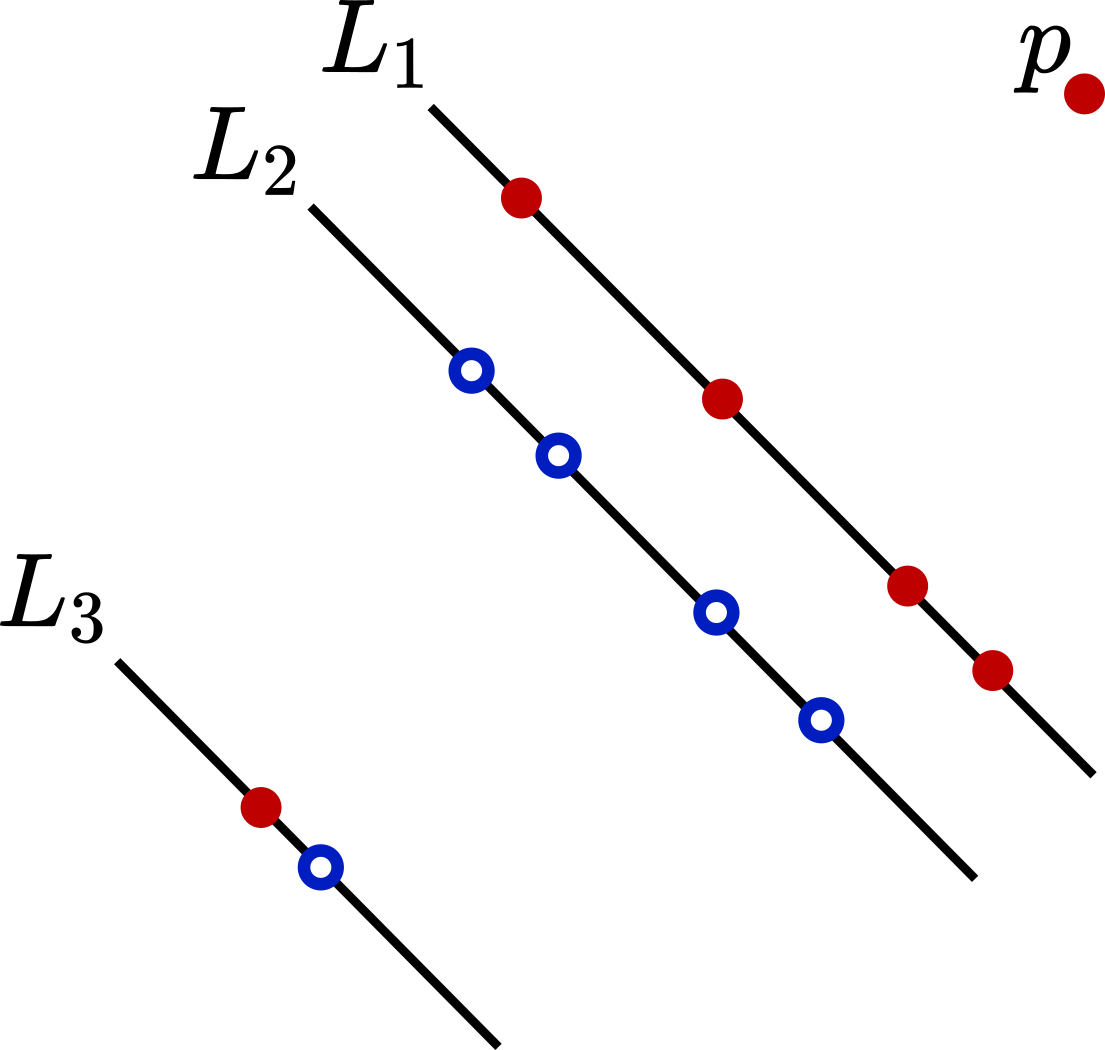}
    \caption{Example of an instance in $D$.}
    \label{fig:impossibility-instance}
\end{figure}

\begin{proposition}
Let $\A$ be the class of algorithms in the real RAM model which only access the input through comparison of coordinates and queries of the form $x(p)-x(q) \geq y(p)-y(q)$.
For the problem of reporting participating points in the order-oblivious setting, there is no algebraic computation tree $A$ such that $T_A(S) \leq O(1)T_{A'}(S)$ for every instance $S$ and every correct algorithm $A'\in \A$.

In other words, no algebraic computation tree can perform within a constant factor of complexity against every (correct) algorithm in $\A$ on every input.
\end{proposition}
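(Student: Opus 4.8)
The plan is to exhibit a family $D$ of instances on which any algebraic computation tree must spend $\Omega(n\log n)$ time in the worst case over permutations, while some algorithm $A' \in \A$ using linear queries solves each of them in $O(n)$ time. Since instances in $D$ will be chosen so that $\OPT_{\A}(S) = O(n)$, an instance-optimal competitor would have to run in $O(n)$ on each of them, contradicting the information-theoretic lower bound for algebraic computation trees.

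\medskip

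\noindent\textbf{Step 1: The hard family.} I would take $D$ to consist of instances where all points lie near the anti-diagonal line $x+y = 0$ (see \cref{fig:impossibility-instance}): say $n$ points $p_1,\dots,p_n$ with $x(p_i) = t_i$ and $y(p_i) = -t_i + \epsilon_i$ for tiny perturbations $\epsilon_i$ that merely break degeneracy, and with colors assigned, say, alternately red and blue along the line. The key geometric fact is that for such a configuration essentially \emph{every} point participates (consecutive points along the anti-diagonal span empty rectangles), so the output has size $h = \Theta(n)$ and the structural entropy is $\Theta(\log n)$; in particular, \emph{every} comparison-based or algebraic-computation-tree algorithm, when forced to actually sort the points along the line to report them correctly, needs $\Omega(n\log n)$ comparisons in the worst permutation case. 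More precisely, I would argue that from the output (the set of participating points together with the visibility pairs) one can recover the sorted order of the $t_i$'s, so an $o(n\log n)$ algebraic computation tree would yield an $o(n\log n)$ sorting algorithm, which is impossible by the standard $\Omega(n\log n)$ lower bound for algebraic computation trees (Ben-Or). One should be careful to state the instances so that the sorted order is genuinely forced by the required output and not merely convenient.

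\medskip

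\noindent\textbf{Step 2: The linear-query algorithm.} For an instance $S \in D$, consider the quantity $x(p) + y(p)$ (equivalently, the sign test $x(p) - x(q) \ge y(q) - y(p)$, which is of the allowed form $x(p)-x(q) \ge y(p')-y(q')$ after relabelling — or more directly a query of the form $x(p)-x(q)\ge y(p)-y(q)$ applied with an auxiliary reference point). An algorithm $A'$ can first use a constant number of such linear queries to \emph{detect} that the instance is (approximately) on the anti-diagonal — e.g. test a few points, or more robustly test whether all points satisfy the defining linear relation — and if so, it knows without any sorting that all points participate, outputting them in $O(n)$ time. If the instance is not of this special form, $A'$ falls back to the generic $O(n\log n + h)$ algorithm. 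Thus $T_{A'}(S) = O(n)$ for every $S \in D$, so $\OPT_{\A}(S) = O(n)$ there. (A subtlety: $A'$ must be a \emph{correct} algorithm on all inputs, so the ``detection'' phase must be a genuine verification, not a guess; checking the linear relation for all $n$ points with $n$ linear queries, plus verifying the reported set really is the participating set via the $O(n\log n+h)$ routine only when detection fails, handles this.)

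\medskip

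\noindent\textbf{Step 3: Conclusion and the main obstacle.} Combining the two steps: if an algebraic computation tree $A$ had $T_A(S) \le O(1)\cdot T_{A'}(S) = O(n)$ for all $S \in D$, then restricting to the sub-family where the $t_i$ are an arbitrary permutation of $\{1,\dots,n\}$ (with fixed tiny perturbations), $A$ would compute, in $O(n)$ time and hence $O(n)$ algebraic operations, a function from which the sorted order is recoverable — contradicting the $\Omega(n\log n)$ lower bound of Ben-Or for algebraic computation trees deciding/sorting such inputs. I expect the main obstacle to be making the reduction from sorting airtight: one must ensure (i) that the required output on an anti-diagonal instance really does encode the permutation (so that a fast $A$ yields a fast sorter), and (ii) that the decision-tree/computation-tree lower bound applies in the order-oblivious cost model used here, i.e. that counting output points at unit cost does not cheapen things — but since $h=\Theta(n)$ here, the $h$ term is only $\Theta(n)$ and cannot absorb the $\Omega(n\log n)$ sorting cost. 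Care is also needed because algebraic computation trees can branch on signs of polynomials of the coordinates; the Ben-Or framework already accommodates this, so invoking it for the sorting sub-problem embedded in $D$ is the clean way to finish.
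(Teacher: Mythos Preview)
Your central reduction has a gap: on your family $D$ every point participates, so the output of the \emph{reporting participating points} problem is simply the entire input set. That set carries no ordering information, so a correct algebraic computation tree that outputs ``all $n$ points'' on such an instance has not sorted anything, and you cannot extract a sorter from it. Your hedge ``together with the visibility pairs'' does not apply here, since the proposition is about the participating-points problem, not the pairs problem. In particular, Step~1's claim that ``from the output one can recover the sorted order of the $t_i$'s'' is false as stated, and without it Step~3 collapses. A secondary issue: the allowed linear query $x(p)-x(q)\ge y(p)-y(q)$ is equivalent to comparing the values $x(\cdot)-y(\cdot)$, not $x(\cdot)+y(\cdot)$; it therefore cannot directly certify that points lie on a common line of slope $-1$, so your detection phase in Step~2 is not obviously implementable in $\A$ for the configuration you chose.

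The paper sidesteps both difficulties with a different construction. Instead of an instance where the answer is trivially ``everything,'' it engineers a family $D$ in which a single distinguished point $p$ may or may not participate, and this is the only bit in doubt; the decision ``does $p$ participate?'' restricted to a subfamily $D'$ has $n^n$ connected components on the ``No'' side, so Ben-Or gives $\Omega(n\log n)$ for any algebraic computation tree. On the upper-bound side, the paper does \emph{not} try to produce a single $A'\in\A$ that is linear on all of $D$: it exploits the quantifier structure of instance-optimality and, for each instance $I\in D$, exhibits a tailored algorithm $A_r\in\A$ (depending on the rank $r$ of the witnessing blue point) that runs in $O(n)$ on every permutation of $I$ and falls back to brute force elsewhere. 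This per-instance competitor is exactly what you need to refute $T_A(S)\le O(1)\cdot T_{A'}(S)$ for all $S,A'$, and it avoids the detection problem you ran into.
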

\begin{proof}

Consider the set of instances $D$ with the following structure:
We have three lines $L_1$, $L_2$, $L_3$ with slope $-1$ where $L_1$ is above $L_2$ and $L_2$ is above $L_3$. There are $n$ distinct red points on $L_1$, $n$ distinct blue points on $L_2$. On $L_3$ there is a red and a blue point such that their $x$ and $y$ coordinates are smaller than the $x$ and $y$ coordinates of any of the points on $L_1$ and $L_2$. Finally, there is an additional red point $p$ which dominates all others. See Figure \ref{fig:impossibility-instance} for an example.

This configuration can be detected in linear time by an algorithm in $\A$. It is easy to see that when an instance is in this configuration, every point needs to be reported with the possible exception of $p$. Consider an instance $I$ where $p$ needs to be reported. Then there is a blue point $q$ which is rectangular-visible from $p$. Let us call $r$ the rank of $q$ in $I$ with respect to the order of $x$-coordinates. There is also an algorithm $A_r\in\A$ which can do the following.
\begin{itemize}
    \item First, detect if the instance is in $D$ in linear time.
    \item If the instance is in $D$, find the point $q$ of rank $r$ by a linear-time selection algorithm.
    \item Test in linear time if $q$ is blue and $p$ is visible from $q$.
    \item If this is the case, output all points.
    \item If any of the tests above fail, return the correct output by brute-force in $O(n^3)$ time.
\end{itemize}

The algorithm $A_r$ is correct on every instance and runs in linear time on any permutation of instance $I$ (thus it runs in linear time on $I$ in the order-oblivious setting). There is such an algorithm for every instance in $D$.

On the other hand, there is no algebraic computation tree which can solve all instances in $D$ where $p$ needs to be reported in linear time. To see this, we can relax the problem to "Is $p$ rectangular visible from some blue point?" and restrict the inputs even more to a set $D'\subset D$. We now restrict inputs so that all the red points are fixed in some position and all "corners" of the staircase formed by the points in $L_1$ lie under $L_2$.

\begin{center}
    \includegraphics[scale=0.5]{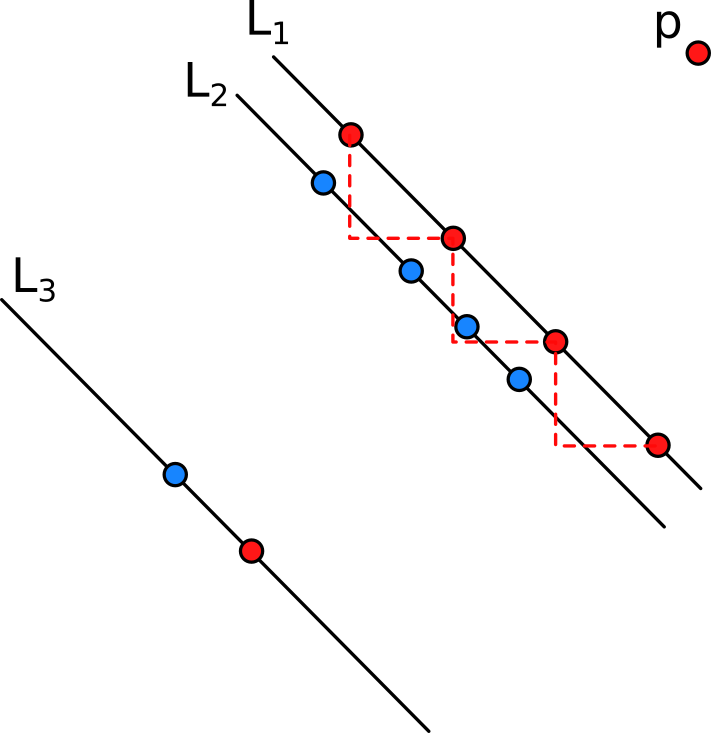}\\
    Example of an instance in $D'$.
\end{center}

It is easy to see that in the setting of the topological lower-bounds due to Ben-Or \cite{Ben-Or-1983} the ``No'' instances fall into at least $n^n$ connected components (characterized by which point on $L_1$ dominates each of the points on $L_2$). This implies a $\Omega(n\log n)$ worst case complexity for any algebraic computation tree solving all instances in $D'$. This also implies that no such computation tree solving this problem on $D'$ can run in $f(n) \not\in \Omega(n\log n)$ time on all ``Yes'' instances in $D'$. Otherwise we could this algorithm and stop the computation after $f(n)$ steps to get an algorithm with $O(f(n))$ worst-case complexity. Thus the claim holds.
\end{proof}
One caveat of this proof is that it relies on special instances with linear degeneracies (three points can be collinear). It is not clear if instance-optimality is possible when restricted to non-degenerate instances.


\newpage
\bibliography{refs}

\end{document}